\newtheorem{theorem}{Theorem}[]
\newtheorem{corollary}{Corollary}[]
\newtheorem{lemma}[]{Lemma}
\newtheorem*{assumption*}{Assumption}
\newtheorem{prop}{Proposition}
\newtheorem{definition}{Definition}
\DeclareMathOperator*{\argmin}{arg\,min}
\DeclareMathOperator*{\E}{\mathbb{E}}
\DeclareMathOperator*{\R}{\mathbb{R}}
\newtheorem*{lemma*}{Lemma} 
\date{}
\title{Model-Agnostic Meta-Policy Optimization via Zeroth-Order Estimation: A Linear Quadratic Regulator Perspective }
\author{Yunian Pan, Tao Li, and Quanyan Zhu}
\begin{document}

\maketitle



\begin{abstract}
Meta-learning has been proposed as a promising machine learning topic in recent years, with important applications to image classification, robotics, computer games, and control systems. 
In this paper, we study the problem of using meta-learning to deal with uncertainty and heterogeneity in ergodic linear quadratic regulators. 
We integrate the zeroth-order optimization technique with a typical meta-learning method, proposing an algorithm that omits the estimation of policy Hessian, which applies to tasks of learning a set of heterogeneous but similar linear dynamic systems. 
The induced meta-objective function inherits important properties of the original cost function when the set of linear dynamic systems are meta-learnable, allowing the algorithm to optimize over a learnable landscape without projection onto the feasible set. 
We provide stability and convergence guarantees for the exact gradient descent process by analyzing the boundedness and local smoothness of the gradient for the meta-objective, which justify the proposed algorithm with gradient estimation error being small. We provide the sample complexity conditions for these theoretical guarantees,  as well as a numerical example at the end to corroborate this perspective.
\end{abstract}



\section{INTRODUCTION}


Recent advancements in meta-learning, a machine learning paradigm addressing the learning-to-learn challenge \cite{hochreiter01meta-recurrent}, have shown remarkable success across diverse domains, including robotics \cite{Zhao_2023,10161368}, image processing \cite{pan2023ordermetastackelbergmethod,li2024metastackelberggamerobust}, and cybersecurity \cite{10225816}.
One epitome of the various meta-learning approaches is Model-Agnostic Meta-Learning (MAML) \cite{finn2017model}. Compared with other deep-learning-based meta-learning approaches \cite{meta_survey}, MAML formulates meta-learning as a stochastic compositional optimization problem \cite{liu17composite,tianyi21composite}, aiming to learn an initialization that enables rapid adaptation to new tasks with just a few gradient updates computed using online samples. 

Since MAML is model-agnostic (compatible with any model trained with gradient descent), it is a widely applicable framework. In supervised learning (e.g., image recognition, speech processing), where labeled data is scarce, MAML facilitates few-shot learning \cite{song2023comprehensive}, enabling models to learn new tasks with minimal examples. In reinforcement learning (RL) (e.g., robotic control, game playing), MAML allows agents to generalize across multiple environments, leading to faster adaptation in dynamic and partially observable settings \cite{10161368,10225816}. Additionally, as a gradient-based optimization method, MAML benefits from its mathematical clarity, making it well-suited for theoretical analysis and highly flexible for further enhancements.

In the RL domain, MAML samples a batch of dynamic systems from an agnostic environment, i.e., a distribution of tasks, then optimizes the policy initialization with regard to the anticipated post-policy-gradient-adaptation performance, averaging over these tasks. The policy initialization will then be fine-tuned at test time. 
The complete MAML policy gradient methods for such a meta-objective require differentiating through the optimization process, which necessitates the estimation of Hessians or even higher order information, making them computationally expensive and unstable, especially when a large number of gradient updates are needed at test time \cite{fallah2021convergencetheorydebiasedmodelagnostic,nichol2018firstordermetalearningalgorithms, li2024metastackelberggamerobust}. This incentivizes us to focus our attention on the first-order implementation of MAML, unlike reptile \cite{nichol2018firstordermetalearningalgorithms}, which simply neglects the computation of Hessians or higher order information when estimating the gradient for meta-objective, we develop a framework that still approximates the exact gradient of the meta-objective, with controllable bias that benefits from the smoothness of the cost functional. This methodology stems from the zeroth-order methods, more specifically, Stein's Gaussian smoothing \cite{stein1972bound} technique.

We choose the Linear Quadratic Regulator (LQR) problem as a testbed for our analysis, as it is a fundamental component of optimal control theory. The Riccati equation, derived from the Hamilton-Jacobi equation \cite{bacsar1998dynamic}, provides the linear optimal control gain for LQR problems. While LQR problems are analytically solvable, they can still benefit from reinforcement learning (RL) and meta-RL, particularly in scenarios where model information is incomplete—a setting known as model-free control (see \cite{abbasi2011online,abbasi2011regret,pmlr-v97-cohen19b} for related works). Our focus is on the policy optimization of LQRs, specifically in refining an initial optimal control policy for a set of similar Linear Time-Invariant (LTI) systems, which share the same control and state space but differ in system dynamics and cost functionals. A practical example of such a scenario is a robotic arm performing a repetitive task, such as picking up and placing multiple block objects in a specific order. Each time the robot places a block, the system dynamics shift, requiring rapid adaptation to maintain optimal performance.


Our contribution is twofold. First, we develop a zeroth-order meta-gradient estimation framework, presented in Algorithm \ref{zerothordermethods}. This Hessian-free approach eliminates the instability and high computational cost associated with exact meta-gradient estimation. Second, we establish theoretical guarantees for our proposed algorithms. Specifically, we prove a stability result (\Cref{thm:stabilityattain}), ensuring that each iteration of Algorithm \ref{metalqr} produces a stable control policy initialization across a wide range of tasks. Additionally, we provide a convergence guarantee (\Cref{thm:convergence}), which ensures that the algorithm successfully finds a local minimum for the meta-objective. Our method is built on simultaneous perturbation stochastic approximation \cite{spall97spsa, flaxman2004onlineconvexoptimizationbandit} with a close inspection of factors influencing the zero-th order gradient estimation error, including the perturbation magnitude, roll-out length of sample trajectories, batch size of trajectories, and interdependency of estimation errors arising in inner gradient adaptation and outer meta-gradient update. We believe the developed technique in controlling the estimation error and associated high-probability error bounds would benefit the future work on biased meta-learning (in contrast to debiased meta-learning \cite{fallah2021convergencetheorydebiasedmodelagnostic}), which trades estimation bias for lesser computation complexity. Even though this work studies LQRs, our zero-th order policy optimization method easily lends itself to generic Makrov systems (e.g., \cite{li2024metastackelberggamerobust}) for efficient meta-learning algorithm design.

\section{RELATED WORK}
\subsection{Policy Optimization (PO)}

Policy optimization (PO) methods date back to the 1970s with the model-based approach known as differential dynamic programming \cite{gershwin1970discrete}, which requires complete knowledge of system models. In model-free settings, where system matrices are unknown, various estimation techniques have emerged. Among these, finite-difference methods approximate the gradient by directly perturbing the policy parameters, while REINFORCE-type methods \cite{williams1992simple} estimate the gradient of the expected return using the log-likelihood ratio trick. For LQR tasks, however, analyzing the state-control correlations in REINFORCE-type methods poses significant challenges \cite{fazel2018global, DBLP:journals/corr/abs-2011-10300}. Therefore, we build our framework on finite-difference methods and develop a novel meta-gradient estimation procedure tailored specifically for the model-agnostic meta-learning problem. Overall, PO methods have been well established in the literature (see \cite{fazel2018global,malik2019derivative,gravell2020learning,hu2023toward}).

Zeroth-order methods have garnered increasing attention in policy optimization (PO), particularly in scenarios where explicit gradient computation is infeasible or computationally expensive. Rather than relying on REINFORCE-type methods for direct gradient evaluations, zeroth-order techniques estimate gradients using finite-difference methods or random search-based approaches. A foundational work in this domain is the Evolution Strategies (ES) method \cite{salimans2017evolution}, which reformulates PO as a black-box optimization problem, obtaining stochastic gradient estimates through perturbed policy rollouts. Similarly, \cite{allen2023scalablefinitedifferencemethod} introduces a method that leverages policy perturbation while efficiently utilizing past data, improving scalability. These approaches are particularly valuable in settings where Hessian-based computations or higher-order derivative information are impractical, driving the development of Hessian-free meta-policy optimization frameworks.

\subsection{Model-Agnostic Meta-Learning (MAML)}

The concept of meta-learning, or learning to learn, involves leveraging past experiences to develop a control policy that can efficiently adapt to novel environments, agents, or dynamics. One of the most prominent approaches in this area is MAML (Model-Agnostic Meta-Learning) as proposed by \cite{finn2017model,finn2019online}. MAML is an optimization-based method that addresses task diversity by learning a "common policy initialization" from a diverse task environment.
Due to its success across various domains in recent years, numerous efforts have been made to analyze its theoretical convergence properties. For instance, the model-agnostic meta-RL framework has been studied in the context of finite-horizon Markov decision processes by \cite{fallah2020provably,fallah2021convergencetheorydebiasedmodelagnostic,liu2022theoretical,beck2023survey}. However, these results do not directly transfer to the policy optimization (PO) setting for LQR, because key characteristics of the LQR cost objective—such as gradient dominance and local smoothness—do not straightforwardly extend to the meta-objective.

For example, \cite{molybog2020global} demonstrates that the global convergence of MAML over LQR tasks depends on a global property assumption ensuring that the meta-objective has a benign landscape. Similarly, \cite{musavi2023convergence} establishes convergence under the condition that all LQR tasks share the same system dynamics. It was not until \cite{toso2024meta} that comprehensive theoretical guarantees began to emerge: their analysis provided personalization guarantees for MAML in LQR settings by explicitly accounting for heterogeneity across different LQR tasks. The result readily passes the sanity check; the performance of the meta-policy initialization is affected by the diversity of the tasks.

All the aforementioned MAML approaches involve estimating second-order information, which can be problematic in LQR settings where the Hessians become high-dimensional tensors. Although recent studies such as \cite{toso2024meta,balasubramanian2019zerothordernonconvexstochasticoptimization} have employed advanced estimation schemes to mitigate these challenges, issues related to computational burden and numerical stability persist. Motivated by Reptile \cite{nichol2018firstordermetalearningalgorithms}, a first-order meta-learning method, we adopt a double-layered zero-th order meta-gradient estimation scheme that skips the Hessian tensor estimation. Our work extends the original work in \cite{pan2024modelagnosticzerothorderpolicyoptimization} by providing a comprehensive analysis of the induced first-order method, thereby offering a more computationally efficient and stable alternative for meta-learning in LQR tasks.


\section{PROBLEM FORMULATION}

\subsection{Preliminary: Policy Optimization for LQRs}

Let $\mathcal{T} = \{(A_i, B_i, Q_i, R_i)\}_{i \in [I]}$ be the finite set of LQR tasks, where $[I] := \{1, \ldots, I\}$ is the task index set, $A_i \in \R^{ d \times d}, B_i \in \R^{d \times k}$ are system dynamics matrices of the same dimensions, $Q_i \in \R^{ d \times d}, R_i \in \R^{k \times k}$, and $Q_i, R_i \succeq 0$ are the associated cost matrices. 
We assume a prior probability distribution $p \in \Delta(\mathcal{T})$ which we can sample the LQR tasks from.
For each LQR task $i$, the system is assumed to share the same state space $ \R^d$ and control space $ \R^k$, and is governed by the stochastic linear dynamics associated with some quadratic cost functions: 
\begin{equation*}
    x_{t+1} = A_i x_{t} + B_i u_{t} + w_t, \quad\quad g_i(x_t, u_t) = x_t^{\top}Q_i x_t + u_t^{\top} R_i u_t ,
\end{equation*}
where  $x_t \in \R^d$, $u_t \in \R^k$, $w_t$ are some random i.i.d. zero-mean noise with and covariance matrix $\Psi$, which is symmetric and positive definite. 

For each system $i$, our objective is to minimize the average infinite horizon cost, 
\begin{equation*}
    J_i = \lim_{T \to \infty}\frac{1}{T}\mathbb{E}_{x_0 \sim \rho_0, \{w_t\} }\left[\sum_{t=0}^{T-1}g_i(x_t, u_t)\right],
\end{equation*}
where $\rho_0$ is the initial state distribution $\mathcal{N}(0, \Sigma_0)$ with $\Sigma_0 \geq \mu I$ for some $\mu \geq 0$. 
For task $\mathcal{T}_i$, the optimal control $\{u_t^{i*}\}_{t \geq 0} $ can be expressed as $u^{i*}_t = - K^{*}_i x_t$, where $K^{*}_i \in \R^{k \times d}$ satisfies $K^{*}_i=\left(R_i+B_i^{\top} P_i^{*} B_i\right)^{-1} B_i^{\top} P^{*}_i A_i$, and $P^{i}_*$ is the unique solution to the following discrete algebraic Riccati equation 
$ P^{i}_*=Q_i+A_i^{\top} P^*_{i} A_i+A_i^{\top} P^{i}_*  B_i\left(R_i+B_i^{\top} P^{i}_* B_i\right)^{-1}\ B_i^{\top} P^{i}_* A_i$.

A policy $K \in \R^{d\times k}$ is called \textit{stable} for system $i$ if and only if $\rho(A_i - B_i K) < 1$, where $\rho(\cdot)$ stands for the spectrum radius of a matrix.
Denoted by $ \mathcal{K}_i$ the set of stable policy for system $i$, let $\mathcal{K} := \bigcap_{i \in [I]} \mathcal{K}_i$.
For a policy $K \in \mathcal{K}_i$, the induced cost over system $i$ is 
\begin{equation*}
\begin{aligned}
     J_i(K) & = \lim_{T \to \infty}\frac{1}{T}\mathbb{E}_{x_0 \sim \rho_0, w_t }[\sum_{t-0}^{T-1}\left(x_{t}^{\top}( Q_i + K^{\top}R_iK) x_{t}\right)] \\
      = \E_{x \sim \rho^i_K} &[x^{\top}( Q_i + K^{\top}R_iK) x]  = \operatorname{Tr}\left[(Q_i + K^{\top} R_i K) \Sigma^i_K \right],
\end{aligned}
\end{equation*}
where the limiting stationary distribution of $x_t$ is denoted by $\rho^i_K$, $\operatorname{Tr}(\cdot)$ stands for the trace operator. The Gramian matrix $\Sigma^i_K:= \mathbb{E}_{x \sim \rho_K^i} [xx^{\top}] = \lim_{T \to \infty} \E_{x_0 \sim \rho_0 } [\frac{1}{T} \sum_{t=0}^{T-1} x_t x_t^{\top}]$ satisfies the following Lyapunov equation
\begin{equation}\label{xgramian}
    \Sigma^i_{K}= \Psi +(A_i-B_i K) \Sigma^i_{K}(A_i-B_i K)^{\top}.
\end{equation}
\eqref{xgramian} can be easily verified through elementary algebra. 


\begin{prop}[Policy Gradient for LQR \cite{fazel2018global,DBLP:journals/corr/abs-1907-06246,bu2019lqrlensordermethods}] \label{prop1}
For any task $\mathcal{T}_i$, the expression for average cost is $J_i(K) = \operatorname{Tr}(P^i_K )$, 
and the expression of $\nabla J_i(K)$ is 
 \begin{equation}
   \begin{aligned}
       \nabla J_i(K)  &= 2\left[\left(R_i+B_i^{\top} P_{K} B_i\right) K-B_i^{\top} P^i_{K} A_i\right] \Sigma^i_{K}\\ &=2 E^i_{K} \Sigma^i_{K}
   \end{aligned}
 \end{equation}
 where $\Sigma^i_{K}$ satisfies \eqref{xgramian}, $E^i_K$ is defined to be
 \begin{equation*}
     E^i_K := \left(R_i+B_i^{\top} P^i_{K} B_i\right) K-B_i^{\top} P^i_{K} A_i,
 \end{equation*}
 and $P^i_K$ is the unique positive definite solution to the Lyapunov equation.
 \begin{equation*}
     P^i_{K}=\left(Q_i+K^{\top} R_i K\right)+(A_i-B_i K)^{\top} P^i_{K}(A_i - B_i K).
 \end{equation*}
 The Hessian operator $\nabla J_i(K)$ acting on some $X \in \R^{ k \times d}$ is given by, 
 \begin{equation}
     \nabla^{2} J_i(K)[X]:=2\left(R_i+B^{\top}_i P_{K}^i B_i\right) X \Sigma_{K}^i-4 B^{\top}_i\tilde{P}_{K}^i[X]\left(A_i-B_i K\right) \Sigma_{K}^i
 \end{equation}
 where $ \tilde{P}_{K}^i[X]$ is the solution to
 \begin{equation*}
      \tilde{P}_{K}^i[X]:=\left(A_i-B_i K\right)^{\top} \tilde{P}_{K}^i[X]\left(A_i-B_i K\right)+X^{\top} E_{K}^i+E_{K}^{(i) \top} X .
 \end{equation*}
\end{prop}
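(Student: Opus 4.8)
The plan is to reduce all three claims — the trace form of the cost, the gradient, and the Hessian — to algebraic manipulations of the two coupled Lyapunov equations governing $P_K^i$ and $\Sigma_K^i$, together with cyclicity of the trace. Throughout, fix a stabilizing $K\in\mathcal{K}_i$ and write $L_K:=A_i-B_iK$ (so $\rho(L_K)<1$) and $S_K:=Q_i+K^\top R_iK$; unrolling the defining fixed-point equations gives the absolutely convergent Neumann series $P_K^i=\sum_{t\ge0}(L_K^\top)^tS_KL_K^t$ and $\Sigma_K^i=\sum_{t\ge0}L_K^t\Psi(L_K^\top)^t$. For the cost identity I would substitute the second series into $J_i(K)=\operatorname{Tr}(S_K\Sigma_K^i)$ and shuttle the powers of $L_K$ across the trace, $\operatorname{Tr}\big(S_KL_K^t\Psi(L_K^\top)^t\big)=\operatorname{Tr}\big((L_K^\top)^tS_KL_K^t\,\Psi\big)$; summing over $t$ yields $J_i(K)=\operatorname{Tr}(P_K^i\Psi)$, which is the claimed $\operatorname{Tr}(P_K^i)$ under the stated normalization of the noise covariance.

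For the gradient, I would first establish the LQR cost-difference identity: for $K,K'\in\mathcal{K}_i$ and $\Delta:=K'-K$,
\begin{align*}
J_i(K')-J_i(K)={}&2\operatorname{Tr}\big(\Sigma_{K'}^i\,\Delta^\top E_K^i\big)\\
&+\operatorname{Tr}\big(\Sigma_{K'}^i\,\Delta^\top(R_i+B_i^\top P_K^iB_i)\Delta\big).
\end{align*}
This follows from a telescoping/value-function argument: along a trajectory generated by $K'$, add and subtract the quadratic $x_t^\top P_K^ix_t$, take the expectation over the noise $w_t$, and use the Lyapunov equation defining $P_K^i$ to rewrite $S_{K'}+L_{K'}^\top P_K^iL_{K'}-P_K^i$ as $\Delta^\top E_K^i+(E_K^i)^\top\Delta+\Delta^\top(R_i+B_i^\top P_K^iB_i)\Delta$; letting $T\to\infty$ annihilates the boundary term $\tfrac{1}{T}\E[x_T^\top P_K^ix_T]$ because $K'$ is stabilizing. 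Taking $K'=K+\epsilon X$ (valid for $\epsilon$ small since $\mathcal{K}_i$ is open), dividing by $\epsilon$, and sending $\epsilon\to0$ using continuity of $K\mapsto\Sigma_K^i$ gives the directional derivative $2\operatorname{Tr}(\Sigma_K^iX^\top E_K^i)=\operatorname{Tr}\big((2E_K^i\Sigma_K^i)^\top X\big)$ for every $X$, hence $\nabla J_i(K)=2E_K^i\Sigma_K^i$.

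For the Hessian, I would differentiate $\nabla J_i(K+tX)=2E_{K+tX}^i\Sigma_{K+tX}^i$ at $t=0$. Differentiating $P_K^i=S_K+L_K^\top P_K^iL_K$, using $\tfrac{d}{dt}L_{K+tX}=-B_iX$ and the identity $R_iK-B_i^\top P_K^iL_K=E_K^i$, one checks that $P':=\tfrac{d}{dt}\big|_{t=0}P_{K+tX}^i$ obeys $P'=L_K^\top P'L_K+X^\top E_K^i+(E_K^i)^\top X$, i.e.\ $P'=\tilde P_K^i[X]$; consequently $\tfrac{d}{dt}\big|_{t=0}E_{K+tX}^i=(R_i+B_i^\top P_K^iB_i)X-B_i^\top\tilde P_K^i[X]L_K$. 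Applying the product rule and substituting the analogous sensitivity equation for $\Sigma_K^i$ — whose contribution recombines, through the Lyapunov structure, with the $B_i^\top\tilde P_K^i[X]L_K\Sigma_K^i$ term and doubles its coefficient — yields the stated expression. Equivalently, one can expand the cost-difference identity to second order in $\epsilon$ and read off the quadratic form $\langle\nabla^2J_i(K)[X],X\rangle$, which by symmetry of the Hessian and polarization determines the operator.

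I expect the Hessian step to be the main obstacle: it demands careful bookkeeping when differentiating through the two coupled Lyapunov equations — in particular verifying that the derivative of $P_K^i$ is exactly $\tilde P_K^i[X]$ rather than a scalar multiple of it, and that the $\Sigma_K^i$-sensitivity term recombines so that the final coefficient is $-4$ and not $-2$. A secondary technical point, recurring in all three parts, is justifying the interchange of the limit $T\to\infty$ with differentiation; this follows from the geometric (hence locally uniform in $K$) convergence of the Neumann series on a neighborhood of any stabilizing policy, which also furnishes the continuity and differentiability of $K\mapsto P_K^i$ and $K\mapsto\Sigma_K^i$ used above.
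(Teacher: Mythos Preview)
Your proposal is correct, and for the cost identity and the Hessian it proceeds essentially as the paper does: the paper packages the Neumann series as an adjoint operator $\mathcal{T}_K^i(\Sigma)=\sum_{t\ge0}L_K^t\Sigma(L_K^\top)^t$ and uses trace cyclicity to pass from $\operatorname{Tr}(S_K\Sigma_K^i)$ to $\operatorname{Tr}(P_K^i\Psi)$, and for the Hessian it applies the product rule to $\nabla J_i(K)=2E_K^i\Sigma_K^i$, computes the sensitivities $P_K^{i,\prime}[X]$ and $\Sigma_K^{i,\prime}[X]$ as solutions of perturbed Lyapunov equations, writes them as series, and then uses trace cyclicity to show that the $\Sigma$-sensitivity term equals the $P$-sensitivity term, producing the coefficient $-4$ exactly as you anticipate.

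The genuine difference is the gradient step. The paper differentiates $J_i(K)=\operatorname{Tr}\big((Q_i+K^\top R_iK)\,\mathcal{T}_K^i(\Psi)\big)$ directly: one piece gives $2R_iK\Sigma_K^i$, and for the other the paper differentiates through the operator $\mathcal{T}_K^i$ by unrolling the recursion $\mathcal{T}_K^i(\Psi)=\Psi+\Gamma_K^i(\mathcal{T}_K^i(\Psi))$, obtaining $-2B_i^\top\mathcal{T}_K^{i,\top}(Q_i+K^\top R_iK)(A_i-B_iK)\Sigma_K^i$ and recognizing $\mathcal{T}_K^{i,\top}(Q_i+K^\top R_iK)=P_K^i$. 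You instead prove the exact cost-difference identity $J_i(K')-J_i(K)=2\operatorname{Tr}(\Sigma_{K'}^i\Delta^\top E_K^i)+\operatorname{Tr}(\Sigma_{K'}^i\Delta^\top(R_i+B_i^\top P_K^iB_i)\Delta)$ and read off the gradient from the first-order term. Your route is arguably cleaner and has the side benefit that the identity itself is what later drives the gradient-domination lemma; the paper's route is pure matrix calculus and avoids the telescoping value-function argument. Both are standard and valid.
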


It is, therefore, possible to employ the first- and second-order algorithms to find the optimal controller for each specific task, in the model-based setting where the gradient/Hessian expressions are computable, see, e.g., in \cite{fazel2018global} for the following three first-order methods:
\begin{align*}
    K_{n+1} & =K_{n}-\eta \nabla J_i \left(K_{n}\right) \quad \quad & \text{Gradient Descent}\\
K_{n+1} & =K_{n}-\eta \nabla J_i \left(K_{n}\right) (\Sigma^i_{K_{n}})^{-1}  \quad   & \text{Natural Gradient Descent}  \\
K_{n+1}& =K_{n}-\eta\left(R_i+B_i^{\top} P^i_{K_{n}} B_i\right)^{-1} \nabla J_i\left(K_{n}\right) (\Sigma^i_{K_{n}})^{-1} \quad   & \text{Gauss-Newton} 
\end{align*}
 
Our discussion hitherto has focused on the deterministic policy gradient, where the policy is of linear form and depends on the policy gain $K$ deterministically. Yet, we remark that a common practice in numerical implementations is to add a Gaussian noise to the policy to encourage exploration, arriving at the linear-Gaussian policy class \cite{yang19pg-lqr}:
\begin{equation*}
    \{u_K(\cdot|x)=\mathcal{N}(-Kx, \sigma^2 I_k), K\in \mathbb{R}^{d\times k}\}. 
\end{equation*}
Such a stochastic policy class often relies on properly crafted \textit{regularization} for improved sample complexity and convergence rate \cite{kakakde21pg}. For stochastic policies, entropy-based regularization receives a significant amount of attention due to its empirical success \cite{ahmed19entropy}, of which softmax policy parametrization  \cite {mei20softmax, kakakde21pg} and entropy-based mirror descent \cite{pan-tao23noneq, pan-tao24delay, pan-tao24mirror-play} are well-received regularized policy gradient methods. We refer the reader to \cite[Sec. 2]{tao22confluence} for the connection between softmax and mirror descent methods. Finally, we remark that the policy gradient characterization in the stochastic case admits the same expression as in the deterministic counterpart. Hence, we limit our focus to the deterministic case to avoid additional discussion on the variance introduced by the stochastic policy.


\subsection{Meta-Policy-Optimization}

In analogy to \cite{finn2017model,fallah2020provably}, we consider meta-policy-optimization, which draws inspiration from Model-Agnostic-Meta-Learning (MAML) in the machine learning literature. 
Our objective is to find a meta-policy initialization, such that one step of (stochastic) policy gradient adaptation still attains optimized on-average performance for the tasks $\mathcal{T}$:
\begin{equation} \label{obj}
    \min_{K \in \bar{\mathcal{K}}} \mathcal{L}(K) := 
\mathbb{E}_{i \sim p}\left[J_{i}\left(\underbrace{ K  - \eta {\nabla} J_{i}\left(K\right)}_{\text{one-step adaptation}}\right) \right] ,
\end{equation}
where $\bar{\mathcal{K}}$ is the admissible set. At first glance, one might define $\bar{\mathcal{K}}$ as simply the intersection of all $\mathcal{K}_i$, however, this approach may render the problem ill-posed, since the functions $J_i(\cdot)$ can be ill-defined if the one-step gradient adaptation overshoots. Thus, with a given adaptation rate $\eta$, we define $\bar{\mathcal{K}}$ as in \Cref{mamlstabilizing}.

\begin{definition}[MAML-stablizing \cite{musavi2023convergence}]\label{mamlstabilizing}
 With a proper selection of adaptation rate $\eta$, a policy $K$ is MAML-stabilizing if for every task $i \in \mathcal{T}$, $\rho(A_i - B_i K ) <  1$ and  $ \rho(A_i - B(K - \eta \nabla J_i(K))) < 1$, we denote this set by $\bar{\mathcal{K}}$. 
\end{definition}

Definition \ref{mamlstabilizing} prepares us to adopt the first-order method to solve this problem, with learning iteration defined as follows:
\begin{equation*}
\begin{aligned}
     & \quad  K_{n+1} = K_n -\eta \nabla \mathcal{L} (K_n), \\ 
    & \text { where } \nabla \mathcal{L}(K):=\mathbb{E}_{i \sim p} \left[   (I - \eta  \nabla^2 J_i (K) )\nabla J_i \left(K^{\prime}\right) \right] \\
   & \quad \quad \quad  K^{\prime} = K-\eta \nabla J_i (K). 
\end{aligned}
\end{equation*}

In general, an arbitrary collection of LQRs is not necessarily meta-learnable using gradient-based optimization techniques, as one might not be able to find an admissible initialization of policy gain.
For instance, consider a two-system scalar case where $A_1 = 3, B_1 = 4$ and $A_2 = 1, B_2 = -1$. The policy evaluation requires an initialization $K$ to be stable for both system, which means $K \in (\frac{1}{2}, 1) \cap (-2, 0) = \emptyset$! This example illustrates that in regards to LQR cases, not all collections of LTIs are meta-learnable using MAML. 

Therefore, it is reasonable to assume that the systems exhibit a degree of similarity such that the set of tasks remains MAML-learnable. This assumption not only necessitates that the joint stabilizing sets are nonempty, i.e., $\bigcap_{i \in [I]} \mathcal{K}_i \neq \emptyset$, but also requires the existence of a set of MAML-stabilizing policies, $\bar{\mathcal{K}} \neq \emptyset$. We formalize such requirements in the definition below.

\begin{definition}[Stabilizing sub-level set \cite{toso2024meta}] \label{def:stabilizing_set} The task-specific and MAML stabilizing sub-level sets are defined as follows:

\begin{itemize}
    \item Given a task $\mathcal{T}_i$, the task-specific sub-level set $\mathcal{S}_i \subseteq \mathcal{K}_i $ is
    \begin{align*}
        \mathcal{S}_i := \left\{K\; | \; J_i(K) - J_i(K^\star_i) \leq \gamma_i \Delta^{i}_0\right\}, \text{ with } \Delta^{i}_0 : = J_i(K_0) - J_i(K^\star_i).  
    \end{align*}
    where $K_0$ denotes an initial control gain for the first-order method and $\gamma_i$ being any positive constant.

    \item The MAML stabilizing sub-level set $\mathcal{S} \subseteq \bar{\mathcal{K}}$ is defined as the intersection between each task-specific stabilizing sub-level set, i.e., ${\mathcal{S}} := \cap_{i\in [I]} \mathcal{S}_i$. 
\end{itemize}
\end{definition}

It is not hard to observe that, once $K \in \mathcal{S}$, it is possible to select a small adaptation rate $\eta$, such that $K^{\prime} \in \mathcal{S}$, in other words, $\eta$ controls whether $K \in \bar{\mathcal{K}}$. This property will be formalized later in section \ref{sec:gdanalysis}. For now, we simply assume that we have access to an admissible initial policy $K_0  \in \mathcal{S}$. Readers can refer to \cite{perdomo2021stabilizing} and \cite{ozaslan2022computing} for details on how to find an initial stabilizing controller for the single LQR instance.

\section{METHODOLOGY}\label{sec:methodology}

\subsection{Zero-th Order Methods}

In the model-free setting where knowledge of system matrices is absent, sampling and approximation become necessary.
In this case, one can sample roll-out trajectories, from the specific task $i$ to perform the policy evaluation from $K$, then, optimize the system performance index through policy iteration.

The zeroth-order methods are derivative-free optimization techniques that allow us to optimize an unknown smooth function $J_i(\cdot): \R^{k \times d} \to \R$ by estimating the first-order information \cite{flaxman2004onlineconvexoptimizationbandit,spall97spsa}. What it requires is to query the function values $J_i$ at some input points. 
A generic procedure is to firstly sample some perturbations $U \sim \operatorname{Unif}(\mathbb{S}_r)$, where $\mathbb{S}_r := \{ r \in \mathbb{R}^{k\times d} \big\vert  \|r\|_F = r \}$ is a $r$-radius $k \times d$-dimensional sphere, and estimate the gradient of the perturbed function through equation:
\begin{equation}\label{steinidtty}
    \nabla_r J_i(K) = \frac{ dk }{ r^2} \E_{U \sim \operatorname{Unif}(\mathbb{S}_r)}[ J_i(K+U)U].
\end{equation}
Based on Stein's identity \cite{stein1972bound} and Lemma 2.1 \cite{flaxman2004onlineconvexoptimizationbandit}, $ \mathbb{E}[\nabla J_i (K+U)] = \nabla_r J_i(K)$, hence we obtain a perturbed version of the first-order information. 
The expectation $\E_{U \sim \operatorname{Unif}(\mathbb{S}_r)}$ can be evaluated through Monte-Carlo sampling. 
However, as we discussed, a function value oracle, i.e., the value of $J_i$ is not always accessible. 
One can substitute $J_i$ with the return estimates obtained from sample roll-outs, as demonstrated in Algorithm \ref{zerothorderalgo}, (adapted from \cite{fazel2018global}.) This type of gradient-estimation procedure samples trajectories with a perturbed policy $K +U$, instead of the target policy $K$. 

\begin{algorithm}[htbp]
\label{zerothorderalgo}
\SetKwInOut{Input}{Input}
\SetAlgoLined
\Input{Task simulator $i$, Policy $K$, number of trajectories $M$, \\ roll out length $\ell$, smoothing parameter $r$.}
\For{$m = 1, 2, \ldots, M$}{
     Sample a perturbed policy $K + U_m$, where $U_m$ is drawn uniformly from $\mathbb{S}_r$\;
     Simulate $K+U_m$ for $\ell$ steps starting from $x_0 \sim \rho_0$. Let $\widehat{J}^{(\ell)}_i(K+U_m)$ and $\tilde{\Sigma}^{i, (\ell)}_{K+U_m}$ be empirical estimates:
     \begin{align*}
         \tilde{J}^{(\ell)}_{i}(K+ U_m) & = \frac{1}{\ell}\sum_{l=1}^{\ell} g_{i}(x_l, - (K+U_m )x_l), \\
         \quad \quad \tilde{\Sigma}^{i, (\ell)}_{K+U_m} & = \frac{1}{\ell}\sum_{l=1}^{\ell} x_{l} x_{l}^{\top} , 
     \end{align*}
     
     where $g_t$ and $x_t$ are costs and states of the current trajectory $m$.
}
Return the (biased) estimates:
\begin{equation*}
     \tilde{\nabla} J_i(K) = \frac{1}{M} \sum_{m=1}^M \frac{dk}{ r^2} \tilde{J}^{(\ell)}_{i}(K+ U_m) U_m,  
 \end{equation*}
\caption{\texttt{Gradient Estimation} \cite{fazel2018global}}
\end{algorithm}

Algorithm \ref{zerothorderalgo} enables us to perform inexact gradient iterations such as $K^{\prime} = K - \eta \tilde{\nabla}J_i(K)$, where $\eta$ is the adaptation rate. 
However, there are two issues that persist. First, one has to restrict $r$ to be small so that the change on $K$ is not drastic, and the perturbed policy is admissible $K+U \in \mathcal{K}_i$. (We will provide theoretical guarantees later.)  Second,  the first-order optimization requires that the updated policy $K^{\prime}$ must be stable as well, even if the perturbed policy is stable, it is questionable how small the smoothing parameter $r$ and the adaptation rate $\eta$ should be to prevent the updated policy $K^{\prime}$ from escaping the admissible set. As has been demonstrated in \cite{fazel2018global}, the remedy to this is that when the cost function is locally smooth, it suffices to identify the regime of such smoothness and constrain the gradient steps within such regime.

Even though a single LQR task objective becomes infinite as soon as $A_i - B_i K$ becomes unstable, as established in \cite{fazel2018global} as well as in non-convex optimization literature, the (local) smoothness and gradient domination properties almost immediately imply global convergence for the gradient descent dynamics, with a linear convergence rate. We now hash out three core auxiliary results that lead to such properties. These results can be found in \cite{fazel2018global,wang2023fedsysid,bu2019lqrlensordermethods,musavi2023convergence}, we defer the explicit definition of the parameters to the appendix.

\begin{lemma}[Uniform bounds \cite{toso2024meta}]\label{lem:uniformbounds}
Given a LQR task $\mathcal{T}_i$ and an stabilizing controller $K \in \mathcal{S}$, the Frobenius norm of gradient $\nabla J_i(K)$, Hessian $\nabla^2 J_i(K)$ and control gain $K$ can be bounded as follows:
$$
\|\nabla J_i(K)\|_F \leq h_G(K), \text{ } \|\nabla^2 J_i(K)\|_F \leq h_H(K), \text { and } \|K\|_F \leq h_c(K), 
$$
where $h_G, h_H,$ and $h_c$ are problem dependent parameters.
 \end{lemma}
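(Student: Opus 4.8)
The plan is to express each quantity—$\|\nabla J_i(K)\|_F$, $\|\nabla^2 J_i(K)\|_F$, and $\|K\|_F$—in terms of the cost value $J_i(K)$, the noise covariance $\Psi$, the system data $(A_i,B_i,Q_i,R_i)$, and the closed-loop Gramian $\Sigma^i_K$, and then bound all of these uniformly over the sub-level set $\mathcal{S}$. The key intermediate object is the observation that on $\mathcal{S}$ the cost is bounded, $J_i(K) \le J_i(K^\star_i) + \gamma_i\Delta^i_0 =: \bar{J}_i$, and that $J_i(K) = \Tr[(Q_i+K^\top R_i K)\Sigma^i_K]$ controls everything else.

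First I would bound the Gramian. Since $\Sigma^i_K \succeq \Psi \succ 0$ from \eqref{xgramian}, we have $\sigma_{\min}(\Sigma^i_K)\ge \sigma_{\min}(\Psi)$; and since $\lambda_{\min}(Q_i)\,\Tr(\Sigma^i_K)\le \Tr(Q_i\Sigma^i_K)\le J_i(K)\le \bar J_i$ (using $Q_i\succ 0$, or $Q_i\succeq \mu_Q I$ under the paper's standing assumptions), we get $\|\Sigma^i_K\|\le \Tr(\Sigma^i_K)\le \bar J_i/\lambda_{\min}(Q_i)=:h_\Sigma$. Next, $\|K\|_F$: from $\Tr(K^\top R_i K\,\Sigma^i_K)\le J_i(K)\le \bar J_i$ and $\Sigma^i_K\succeq \Psi$, $R_i\succeq \mu_R I$, we obtain $\mu_R\,\sigma_{\min}(\Psi)\,\|K\|_F^2 \le \bar J_i$, hence $\|K\|_F\le \sqrt{\bar J_i/(\mu_R\sigma_{\min}(\Psi))}=:h_c(K)$. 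For the $P^i_K$ Lyapunov solution, $\Tr(P^i_K\Psi)=J_i(K)\le \bar J_i$ gives $\|P^i_K\|\le \bar J_i/\sigma_{\min}(\Psi)=:h_P$.

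With these in hand, the gradient bound follows from Proposition~\ref{prop1}: $\nabla J_i(K)=2E^i_K\Sigma^i_K$ with $E^i_K=(R_i+B_i^\top P^i_K B_i)K - B_i^\top P^i_K A_i$, so $\|\nabla J_i(K)\|_F\le 2(\|R_i\|+\|B_i\|^2 h_P)\,h_c\,h_\Sigma + 2\|B_i\|\,h_P\,\|A_i\|\,h_\Sigma =: h_G(K)$. The Hessian bound is the same game applied to $\nabla^2 J_i(K)[X]$: one bounds the auxiliary Lyapunov solution $\tilde P^i_K[X]$ by noting $\|X^\top E^i_K + E^{i\top}_K X\|\le 2\|E^i_K\|\|X\|$ and that the closed-loop map $Y\mapsto (A_i-B_iK)^\top Y (A_i-B_iK)$ is a contraction on $\mathcal{S}$ whose "condition number" is controlled by $h_\Sigma$ and $\sigma_{\min}(\Psi)$ (since $\sum_{t\ge0}(A_i-B_iK)^t\Psi(A_i-B_iK)^{t\top}=\Sigma^i_K$ bounds the geometric series), yielding $\|\tilde P^i_K[X]\|_F\lesssim \|E^i_K\|\,\|X\|_F\,h_\Sigma/\sigma_{\min}(\Psi)$; plugging back gives $\|\nabla^2 J_i(K)\|_F\le h_H(K)$ with an explicit problem-dependent constant.

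The main obstacle is controlling the auxiliary Lyapunov solution $\tilde P^i_K[X]$ — equivalently, getting a uniform-over-$\mathcal{S}$ bound on the operator norm of $(\,I - \mathcal{A}_{cl}\otimes\mathcal{A}_{cl})^{-1}$ where $\mathcal{A}_{cl}=A_i-B_iK$, since $K$ near the boundary of $\mathcal{K}_i$ makes $\rho(\mathcal{A}_{cl})\to 1$ and this inverse blows up. The resolution, which is the standard trick in \cite{fazel2018global,bu2019lqrlensordermethods}, is precisely that on the sub-level set $\mathcal{S}$ the bound $J_i(K)\le \bar J_i$ keeps $\Sigma^i_K$ bounded, and since $\Sigma^i_K=\sum_{t\ge0}\mathcal{A}_{cl}^t\Psi\mathcal{A}_{cl}^{t\top}$, a bounded Gramian with $\Psi\succ0$ forces the geometric series — and hence the relevant Lyapunov-operator inverse — to be uniformly bounded by $h_\Sigma/\sigma_{\min}(\Psi)$; this is exactly what the restriction to $\mathcal{S}$ buys us, and it is why the constants $h_G,h_H,h_c$ are well-defined problem-dependent quantities rather than blowing up at the stability boundary. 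I would note that explicit formulas for these constants are deferred to the appendix, as the statement indicates.
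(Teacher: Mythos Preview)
Your proposal is correct and follows the standard route from \cite{fazel2018global,bu2019lqrlensordermethods}. The paper itself does not prove this lemma: its appendix restates the bounds with explicit constants and then simply cites \cite{fazel2018global,wang2023fedsysid} for $h_G,h_c$ and \cite[Lemma~7.9]{bu2019lqrlensordermethods} for $h_H$, so there is no in-paper argument to compare against beyond the form of the constants.

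On that level there is one small methodological difference worth noting. You bound $\|E^i_K\|$ by direct decomposition, $\|E^i_K\|\le(\|R_i\|+\|B_i\|^2 h_P)h_c+\|B_i\|h_P\|A_i\|$, and bound $\|K\|_F$ via the trace inequality $\Tr(K^\top R_iK\,\Sigma^i_K)\le J_i(K)$. The paper's explicit formulas instead bound $\|E^i_K\|_F$ through the \emph{lower} inequality in the gradient-domination lemma (Lemma~\ref{lem:graddom}), giving $\|E^i_K\|_F\le\sqrt{\|R_i+B_i^\top P^i_K B_i\|(J_{\max}(K)-J_{\min})/\mu}$, and obtains $h_c$ from $(R_i+B_i^\top P^i_K B_i)K=E^i_K+B_i^\top P^i_K A_i$ together with $\sigma_{\min}(R_i)$. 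Both routes are valid; yours is arguably more elementary (no appeal to gradient domination), while the paper's yields constants expressed directly in terms of the sub-optimality gap. Your identification of the main obstacle---uniform control of the Lyapunov-operator inverse via the sub-level-set bound on $\Sigma^i_K$---is exactly the mechanism used in the cited references.
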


\begin{lemma}[Perturbation Analysis \cite{toso2024meta,musavi2023convergence}]\label{lem:perturbationanalysis}
    Let $K, K^{\prime} \in \mathcal{S} $ such that $\|\Delta\| := \|K^{\prime} -K\|  \leq h_{\Delta}(K)\ <\infty$, then, we have the following set of local smoothness properties:
    \begin{equation*}
        \begin{aligned}
             &\left|J_i\left(K^{\prime}\right)-J_i(K)\right| \leq h_{\text {cost}}(K) J_i(K) \|\Delta\|_F, \\
&\left\|\nabla J_i\left(K^{\prime}\right)-\nabla J_i(K)\right\|_F \leq h_{\text {grad}}(K)\|\Delta\|_F, \\
&\left\|\nabla^2 J_i\left(K^{\prime}\right)-\nabla^2 J_i(K)\right\|_F \leq h_{\text {hess}}(K)\|\Delta\|_F,
        \end{aligned}
    \end{equation*}
    for all tasks $ i \in [I]$,
    where $h_{\text {cost}}(K), h_{\text {grad}}(K), h_{\text {hess}}(K)$ are problem-dependent parameters. 
    
\end{lemma}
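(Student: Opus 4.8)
\emph{Proof strategy.} The plan is to reduce all three inequalities to a single core estimate — Lipschitz continuity in $K$, over the sub-level set $\mathcal{S}$, of the two Lyapunov-equation solutions $\Sigma^i_K$ (from \eqref{xgramian}) and $P^i_K$ (from \Cref{prop1}) — and then propagate that estimate through the closed-form expressions of \Cref{prop1}. Throughout, write $L_K := A_i - B_i K$, so that $L_{K'} = L_K - B_i\Delta$ with $\|B_i\Delta\|_F \le \|B_i\|\,\|\Delta\|_F$, and let $\mathcal{F}_K$ denote the linear operator $\mathcal{F}_K[X] := L_K X L_K^\top$, with adjoint $\mathcal{F}_K^\ast[Y] = L_K^\top Y L_K$, so that $\Sigma^i_K = (\mathcal{I}-\mathcal{F}_K)^{-1}[\Psi]$ and $P^i_K = (\mathcal{I}-\mathcal{F}_K^\ast)^{-1}[Q_i + K^\top R_i K]$.

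\emph{Step 1 (Lyapunov perturbation).} Since $K \in \mathcal{S}\subseteq\mathcal{K}_i$ we have $\rho(L_K)<1$, so $\mathcal{I}-\mathcal{F}_K$ is invertible; \Cref{lem:uniformbounds} together with the trace representation of $J_i$ yields a uniform bound on $\|(\mathcal{I}-\mathcal{F}_K)^{-1}\|$ over $\mathcal{S}$ (a larger cost forces a larger Gramian $\Sigma^i_K \succeq \Psi \succ 0$, which controls the resolvent). Expanding $\mathcal{F}_{K'}-\mathcal{F}_K$ produces terms linear and quadratic in $B_i\Delta$, so for $\|\Delta\|_F \le h_\Delta(K)$ the quadratic part is also $O(\|\Delta\|_F)$ and $\|\mathcal{F}_{K'}-\mathcal{F}_K\| \le c_0(K)\|\Delta\|_F$; choosing $h_\Delta(K)$ small enough that $\|(\mathcal{I}-\mathcal{F}_K)^{-1}\|\cdot\|\mathcal{F}_{K'}-\mathcal{F}_K\|<1/2$, a resolvent/Neumann-series identity gives $\|\Sigma^i_{K'}-\Sigma^i_K\|_F \le c_\Sigma(K)\|\Delta\|_F$. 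Running the same argument for $\mathcal{F}_K^\ast$, and additionally bounding the change $K^\top R_i K - K'^\top R_i K'$ in the forcing term by $2\|R_i\|h_c(K)\|\Delta\|_F+\|R_i\|\|\Delta\|_F^2$, yields $\|P^i_{K'}-P^i_K\|_F \le c_P(K)\|\Delta\|_F$.

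\emph{Step 2 (propagation to cost, gradient, Hessian).} For the cost, Cauchy–Schwarz in the Frobenius inner product gives $|J_i(K')-J_i(K)| = |\Tr(\Psi(P^i_{K'}-P^i_K))| \le \|\Psi\|_F\, c_P(K)\|\Delta\|_F$, and since $c_P(K)$ scales with $\|P^i_K\|_F \le \Tr(P^i_K)=O(J_i(K))$ this rewrites as $h_{\mathrm{cost}}(K)J_i(K)\|\Delta\|_F$. For the gradient, use $\nabla J_i(K)=2E^i_K\Sigma^i_K$ and split
\[
\nabla J_i(K')-\nabla J_i(K) = 2\big(E^i_{K'}-E^i_K\big)\Sigma^i_{K'} + 2E^i_K\big(\Sigma^i_{K'}-\Sigma^i_K\big);
\]
the second summand is $O(\|\Delta\|_F)$ by Step 1 and the uniform bound on $\|E^i_K\|_F$, and $\|E^i_{K'}-E^i_K\|_F$ is $O(\|\Delta\|_F)$ since, from the definition of $E^i_K$, it is controlled by $\|P^i_{K'}-P^i_K\|_F$ and $\|\Delta\|_F$ together with uniform bounds on $\|K\|_F,\|P^i_K\|_F,\|A_i\|,\|B_i\|,\|R_i\|$, giving $h_{\mathrm{grad}}(K)$. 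The Hessian introduces the auxiliary solution $\tilde P^i_K[X]$ whose forcing term $X^\top E^i_K+(E^i_K)^\top X$ is linear in $X$; a third application of the Step-1 argument gives $\|\tilde P^i_{K'}[X]-\tilde P^i_K[X]\|_F \le c_{\tilde P}(K)\|X\|_F\|\Delta\|_F$, and substituting this, together with Steps 1–2 for the remaining factors, into the explicit formula for $\nabla^2 J_i(K)[X]$, differencing term by term, and taking the supremum over $\|X\|_F=1$ yields $h_{\mathrm{hess}}(K)$.

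\emph{Main obstacle.} The crux is Step 1: securing a uniform-over-$\mathcal{S}$ bound on $\|(\mathcal{I}-\mathcal{F}_K)^{-1}\|$ and pinning the threshold $h_\Delta(K)$ so that the perturbed resolvent expansion converges, while keeping every constant expressed cleanly in the problem data $(\|A_i\|,\|B_i\|,\|Q_i\|,\|R_i\|,\|\Psi\|,\mu)$ and in $J_i(K)$. This is the standard but bookkeeping-heavy core of LQR perturbation analysis (cf.\ \cite{fazel2018global,musavi2023convergence,toso2024meta}); once it is in place, Step 2 is routine matrix algebra, which is why the explicit forms of $h_{\mathrm{cost}},h_{\mathrm{grad}},h_{\mathrm{hess}},h_\Delta$ are relegated to the appendix.
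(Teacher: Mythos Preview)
Your proposal is correct and follows the standard route: the paper itself does not give a self-contained proof of this lemma but simply cites \cite{wang2023fedsysid} (Appendix~F) and \cite{musavi2023convergence} (Lemma~7), recording only the explicit constants $h_\Delta,h_{\mathrm{cost}},h_{\mathrm{grad}},h_{\mathrm{hess}}$ in the appendix. Those references carry out exactly the Lyapunov-resolvent perturbation of $\Sigma^i_K$ and $P^i_K$ followed by term-by-term propagation through the formulas of \Cref{prop1} that you outline, so your Steps~1--2 match the intended argument.
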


\begin{lemma}[Gradient Domination \cite{fazel2018global,yang19pg-lqr}]\label{lem:graddom}
For any LQR task $i \in [I]$, let $K^*_i$ be the optimal policy. Suppose $K \in \mathcal{S}$ has finite
cost. Then, it holds that 
\begin{equation*}
   \begin{aligned}
   J_i(K)-J_i\left(K^{*}_i\right)  & \geq \mu \cdot\frac{\operatorname{Tr} \left(E_{K}^{i,\top} E^i_{K}\right)}{\left\|R_i+B_i^{\top} P^i_{K} B_i\right\| } , \\ J_i(K)-J_i\left(K^{*}_i\right) & \leq \frac{1} { \sigma_{\min }(R_i)} \cdot\left\|\Sigma^i_{K^{*}}\right\| \cdot \operatorname{Tr}\left(E_{K}^{i,\top} E^i_{K}\right) \\
 & \leq \frac{\left\|\Sigma^i_{K^{*}}\right\| }{ \mu^2  \sigma_{\min}(R_i)} \|\nabla J_i (K)\|^2_F = : \frac{1}{\lambda_i } \|\nabla J_i (K)\|^2_F .
   \end{aligned}
\end{equation*}
\end{lemma}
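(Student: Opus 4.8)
I would obtain both inequalities from the exact \emph{cost difference identity} for linear systems, using \Cref{prop1} (in particular $\nabla J_i(K)=2E^i_K\Sigma^i_K$), the Lyapunov characterization \eqref{xgramian}, and a few operator-norm estimates. The first step is to record the identity: for any two gains $K,K'$ that stabilize task $i$, subtracting the Lyapunov equations that define $P^i_{K'}$ and $P^i_K$, unrolling the resulting recursion, and taking the trace against the solution $\Sigma^i_{K'}$ of \eqref{xgramian} gives, with the abbreviation $\mathcal{R}^i_K:=R_i+B_i^\top P^i_K B_i$ (so $\mathcal{R}^i_K\succeq\sigma_{\min}(R_i)I$),
\begin{equation*}
J_i(K')-J_i(K)=2\operatorname{Tr}(\Sigma^i_{K'}(K'-K)^\top E^i_K)+\operatorname{Tr}(\Sigma^i_{K'}(K'-K)^\top\mathcal{R}^i_K(K'-K)).
\end{equation*}
I will also use that $\Sigma^i_K\succeq\Psi\succeq\mu I$ for every stabilizing $K$, which is immediate from \eqref{xgramian}, and the optimality condition $E^i_{K^*_i}=0$.

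\emph{Upper bounds.} Set $K'=K^*_i$, write $\Delta:=K-K^*_i$, symmetrize the cross term inside $\operatorname{Tr}(\Sigma^i_{K^*}\,\cdot\,)$ (valid since $\Sigma^i_{K^*}$ is symmetric), and complete the square to get
\begin{equation*}
J_i(K)-J_i(K^*_i)=\operatorname{Tr}(\Sigma^i_{K^*}(E^i_K)^\top(\mathcal{R}^i_K)^{-1}E^i_K)-\operatorname{Tr}(\Sigma^i_{K^*}Z^\top Z)\ \le\ \operatorname{Tr}(\Sigma^i_{K^*}(E^i_K)^\top(\mathcal{R}^i_K)^{-1}E^i_K),
\end{equation*}
with $Z:=(\mathcal{R}^i_K)^{1/2}\Delta-(\mathcal{R}^i_K)^{-1/2}E^i_K$ and the discarded trace nonnegative since $\Sigma^i_{K^*}\succeq0$. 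Peeling off $\|\Sigma^i_{K^*}\|$ and $\|(\mathcal{R}^i_K)^{-1}\|\le\sigma_{\min}(R_i)^{-1}$ gives the first displayed inequality of the lemma. For the $\|\nabla J_i(K)\|_F^2$ form, invert the gradient formula as $E^i_K=\tfrac{1}{2}\nabla J_i(K)(\Sigma^i_K)^{-1}$, so that $\operatorname{Tr}((E^i_K)^\top E^i_K)\le\tfrac{1}{4}\|(\Sigma^i_K)^{-1}\|^2\|\nabla J_i(K)\|_F^2\le\tfrac{1}{4\mu^2}\|\nabla J_i(K)\|_F^2$; substituting and setting $\lambda_i:=\mu^2\sigma_{\min}(R_i)/\|\Sigma^i_{K^*}\|$ yields the last inequality (with a factor of $4$ to spare).

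\emph{Lower bound.} Apply the same identity with the preconditioned, Gauss--Newton-type comparison gain $\bar K:=K-(\mathcal{R}^i_K)^{-1}E^i_K$; now both the linear and the quadratic term become proportional to $(E^i_K)^\top(\mathcal{R}^i_K)^{-1}E^i_K$ and combine to
\begin{equation*}
J_i(K)-J_i(\bar K)=\operatorname{Tr}(\Sigma^i_{\bar K}(E^i_K)^\top(\mathcal{R}^i_K)^{-1}E^i_K)\ \ge\ \frac{\mu}{\|\mathcal{R}^i_K\|}\operatorname{Tr}((E^i_K)^\top E^i_K),
\end{equation*}
using $\Sigma^i_{\bar K}\succeq\mu I$ and $(\mathcal{R}^i_K)^{-1}\succeq\|\mathcal{R}^i_K\|^{-1}I$. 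Since $K^*_i$ is cost-optimal, $J_i(\bar K)\ge J_i(K^*_i)$, hence $J_i(K)-J_i(K^*_i)\ge J_i(K)-J_i(\bar K)$, which is exactly the claimed lower bound.

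\emph{Anticipated main obstacle.} The manipulations above are routine once the cost difference identity is in hand; the genuine subtlety is that the identity requires the \emph{comparison} gain to be stabilizing, so in the lower bound one must know that $\bar K$ stabilizes task $i$. This holds because $\bar K=K-(\mathcal{R}^i_K)^{-1}E^i_K$ is exactly one Kleinman--Newton (policy-iteration) step for the Riccati equation of task $i$, which is classically stabilizing from every stabilizing $K$ --- this is what makes the identity applicable to the pair $(K,\bar K)$ --- while the comparison $J_i(\bar K)\ge J_i(K^*_i)$ used above is merely optimality of $K^*_i$; both are available as in \cite{fazel2018global}. A secondary bookkeeping point is tracking transposes when symmetrizing $\operatorname{Tr}(\Sigma^i_{K'}(K'-K)^\top E^i_K)$ inside the trace, after which the displayed constants match the statement under the convention $\mu\le\sigma_{\min}(\Psi)$.
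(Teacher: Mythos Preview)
Your proof is correct and is precisely the standard argument from \cite{fazel2018global} that the paper invokes: the paper's own proof of this lemma is simply ``See \cite[Lemma 11]{fazel2018global}'', and your cost-difference identity plus completion of the square (for the upper bound) and the Gauss--Newton/Kleinman step $\bar K=K-(\mathcal{R}^i_K)^{-1}E^i_K$ (for the lower bound) is exactly that argument. Your observation about the spare factor of $4$ and the need to have $\bar K$ stabilizing are both accurate bookkeeping remarks on the original proof.
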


\subsection{Hessian-Free Meta-Gradient Estimation}

Now we recall \eqref{steinidtty} and extend the zeroth-order technique to the meta-learning problem. Specifically, for problem \eqref{obj}, we derive a gradient expression for the perturbed objective function $\mathcal{L}$, thereby eliminating the need to compute the Hessian.
\begin{equation*}
    \nabla_r \mathcal{L} (K) = \frac{dk}{r^2} \E_{i \sim p, U \sim \mathbb{S}_r} \left[J_i (K + U - \eta \nabla J_i (K+ U) ) U\right] .
\end{equation*}
To evaluate expectation $\E_{U \sim \mathbb{S}_r, i \sim p}$ we sample $M$ independent perturbation $U_m$ and a batch of tasks $\mathcal{T}_n$, then average the samples. 
To evaluate return $J_i(K+U - \eta \nabla J_i(K+U))$ we first apply algorithm \ref{zerothorderalgo} to obtain approximate gradient $\tilde{\nabla}J_i(K+U)$ for a single perturbed policy, then sample roll-out trajectories using the one-step updated policy $K+U - \eta \tilde{\nabla}J_i (K+U)$ to estimate its associated return. 

A comprehensive description of the procedure is shown in Algorithm \ref{zerothordermethods}. Essentially we aim to collect $M$ samples for return by perturbed policy ${K}_m^i$, which requires the original perturbed policy $\widehat{K}_m$ and the gradient estimate of it. To do so, we use Algorithm \ref{zerothorderalgo} as an inner loop procedure. After computing ${K}_m^i$ we simulate it for $\ell$ steps to get the empirical estimate of return $J_i(K+U_m - \eta \nabla J_i (K+U_m))$. The entire procedure of meta-policy-optimization is shown in Algorithm \ref{metalqr}.

\begin{algorithm}[htbp]
\label{zerothordermethods}
\SetKwInOut{Input}{Input}
\SetAlgoLined
\Input{Meta-environment $p$, policy $K$, number of perturbations $M$, learning rate $\eta$, roll-out length $\ell$, parameter $r$; }
Randomly draw systems batch $\mathcal{T}_n $ from meta-environment $p$\;
\For{all $i \in \mathcal{T}_n$}{
     \For{$m = 1, 2, \ldots, M$}{
      Sample a policy $\widehat{K}_m = K + U_m$, where $U_m$ is drawn uniformly from $\mathbb{S}_r$\;
      Estimate $\tilde{\nabla}J_i(\widehat{K}_m) \leftarrow \texttt{Gradient Estimation}(i, \widehat{K}_m, M, \ell, r)$\;
      Perform one-step gradient adaptation:
      \begin{equation} \label{onestepinalgo}
          {K}_m^i = \widehat{K}_m - \eta\tilde{\nabla}J_i(\widehat{K}_m);
      \end{equation}
      
      Estimate $\tilde{J}^{(\ell)}_{i}(K^i_m)$ from simulating $K^i_m$ for $\ell$ steps starting with $x_0 \sim \rho_0$:
      $$
       \tilde{J}^{(\ell)}_{i}(K^i_m) = \frac{1}{\ell} \sum_{t=1}^{\ell} g_i(x_l, -K^i_m x_l) .
      $$
     
     }
}
The meta-gradient estimation:
     \begin{equation*} 
         \tilde{\nabla} \mathcal{L}(K) = \frac{1}{|\mathcal{T}_n|}\sum_{i \in \mathcal{T}_n} \frac{1}{M} \sum_{m = 1}^M \frac{dk}{r^2}\tilde{J}^{(\ell)}_{i}(K^i_m) U_m
     \end{equation*}
\caption{\texttt{Meta-Gradient Estimation}}
\end{algorithm}

Further, we can easily extend the results in \Cref{lem:uniformbounds}, \Cref{lem:perturbationanalysis} to the meta-objective, to show the boundedness and Lipschitz properties of $\mathcal{L}(K), \nabla \mathcal{L}(K)$, as in \Cref{lem:uniformboundsl} and \Cref{lem:perturbationl}, whose proofs--which we defer to the Appendix \ref{sec:appa}--are straightforward given the previous characterizations. These results provide an initial sanity check for the first-order iterative algorithm.

\begin{lemma}\label{lem:uniformboundsl}
Given a prior $p$ over LQR task set $\mathcal{T}$, adaptation rate $\eta$, and an MAML stabilizing controller $K \in \mathcal{S}$, the Frobenius norm of gradient $\nabla \mathcal{L}(K)$ and control gain $K$ can be bounded as follows:
\begin{equation}
\begin{aligned}
       \| \nabla \mathcal{L}(K)\|_F & \leq h_{G, \mathcal{L}} (K) , 
\end{aligned}
\end{equation}
where $h_{G, \mathcal{L}}:= ( k + \eta h_H (K)) (1 + \eta h_{grad} (K)) h_G (K)  $ is dependent on the problem parameters.
\end{lemma}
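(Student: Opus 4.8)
The plan is to bound the integrand of the closed-form meta-gradient uniformly over the task index and then note that the outer expectation collapses. Recall $\nabla \mathcal{L}(K) = \mathbb{E}_{i \sim p}\big[(I - \eta \nabla^2 J_i(K))\,\nabla J_i(K')\big]$ with $K' = K - \eta \nabla J_i(K)$. I would first apply Jensen's inequality to the convex norm $\|\cdot\|_F$ to move it inside the expectation, so that $\|\nabla \mathcal{L}(K)\|_F \le \mathbb{E}_{i\sim p}\big[\,\|(I - \eta \nabla^2 J_i(K))\nabla J_i(K')\|_F\,\big]$. Then, by the triangle inequality and sub-multiplicativity of the operator norm on $\mathbb{R}^{k\times d}$, together with $\|I\| \le k$ and the uniform Hessian bound $\|\nabla^2 J_i(K)\|_F \le h_H(K)$ from \Cref{lem:uniformbounds}, the integrand is at most $(k + \eta h_H(K))\,\|\nabla J_i(K')\|_F$.

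Next I would re-express $\|\nabla J_i(K')\|_F$ in terms of $K$. With $\Delta = K' - K = -\eta\nabla J_i(K)$, the gradient-Lipschitz estimate of \Cref{lem:perturbationanalysis} gives $\|\nabla J_i(K') - \nabla J_i(K)\|_F \le h_{\mathrm{grad}}(K)\,\|\Delta\|_F = \eta\, h_{\mathrm{grad}}(K)\,\|\nabla J_i(K)\|_F$, so a second triangle inequality and the gradient bound $\|\nabla J_i(K)\|_F \le h_G(K)$ of \Cref{lem:uniformbounds} give $\|\nabla J_i(K')\|_F \le (1 + \eta\, h_{\mathrm{grad}}(K))\, h_G(K)$. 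Since the problem-dependent constants $h_H, h_{\mathrm{grad}}, h_G$ are (worst-case) quantities independent of $i$, the expectation over $i \sim p$ is vacuous, and combining the two factors yields exactly $\|\nabla \mathcal{L}(K)\|_F \le (k + \eta h_H(K))(1 + \eta h_{\mathrm{grad}}(K))\, h_G(K) = h_{G,\mathcal{L}}(K)$; the stated bound $\|K\|_F \le h_c(K)$ is inherited directly from \Cref{lem:uniformbounds}.

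The point requiring care --- and the main obstacle --- is that every use of \Cref{lem:uniformbounds} and \Cref{lem:perturbationanalysis} at the adapted gain $K'$ presumes $K' \in \mathcal{S}$ and $\|K' - K\|_F \le h_\Delta(K)$. I would discharge this from the hypothesis $K \in \mathcal{S} \subseteq \bar{\mathcal{K}}$: MAML-stabilizing (\Cref{mamlstabilizing}) already gives $\rho(A_i - B_i K') < 1$ for all $i$, while the remark following \Cref{def:stabilizing_set} ensures that a sufficiently small adaptation rate keeps the one-step adapted gain inside the sub-level set. Quantitatively, $\|\Delta\|_F = \eta\|\nabla J_i(K)\|_F \le \eta\, h_G(K)$, so the perturbation-radius condition holds once $\eta \le h_\Delta(K)/h_G(K)$ (and likewise for the sub-level-set threshold); this admissible range of $\eta$ is precisely what \Cref{sec:gdanalysis} formalizes, so here it suffices to invoke it, and no further estimates are needed --- consistent with the paper's remark that the argument is routine given the two lemmas.
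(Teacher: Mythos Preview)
Your proposal is correct and follows essentially the same route as the paper's proof: Jensen to push the norm inside the expectation, then sub-multiplicativity and triangle inequality to split off the $(k+\eta h_H)$ factor, followed by the gradient-Lipschitz estimate of \Cref{lem:perturbationanalysis} to produce the $(1+\eta h_{\mathrm{grad}})h_G$ factor. If anything, you are more careful than the paper---it silently drops the expectation and never verifies that $K'\in\mathcal{S}$ and $\|K'-K\|\le h_\Delta(K)$, whereas you explicitly discharge these via the MAML-stabilizing hypothesis and the step-size restriction $\eta\le h_\Delta(K)/h_G(K)$.
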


\begin{lemma}[Perturbation analysis of $\nabla \mathcal{L}(K)$]\label{lem:perturbationl}
Let $K, K^{\prime} \in \mathcal{S} $ such that $\|\Delta\| := \|K^{\prime} -K\|  \leq h_{\Delta}(K)\ <\infty$, then, we have the following set of local smoothness properties,
 \begin{align*} 
  &  |\mathcal{L}(K^{\prime}) - \mathcal{L}(K)| \leq h_{\mathcal{L},cost} \|\Delta\|_F\\
   &   \| \nabla \mathcal{L} (K) - \nabla \mathcal{L} (K^{\prime})\|_F   \leq h_{\mathcal{L},grad} \| \Delta\|_F,
 \end{align*}
 where $h_{\mathcal{L}, cost}: = h_{cost} (1+ \eta h_{grad}(K)) $ and $h_{\mathcal{L}, grad} := \eta h_{hess}(K) ( 1 + \eta h_{grad}) h_G (K) + (k + \eta h_H(K^{\prime})) h_{hess}(K) (1 + \eta h_{hess}(K)) $ are problem dependent parameters. 
\end{lemma}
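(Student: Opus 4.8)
The plan is to push both inequalities through the one-step adaptation map and then invoke the single-task estimates of \Cref{lem:uniformbounds} and \Cref{lem:perturbationanalysis}. Write $\Phi_i(K):=K-\eta\nabla J_i(K)$, so that $\mathcal{L}(K)=\E_{i\sim p}[J_i(\Phi_i(K))]$ and, by the chain rule, $\nabla\mathcal{L}(K)=\E_{i\sim p}\big[(I-\eta\nabla^2 J_i(K))\,\nabla J_i(\Phi_i(K))\big]$. The first step is a ``Lipschitz adaptation map'' claim: whenever $K,K'\in\mathcal{S}$ with $\|\Delta\|_F\le h_\Delta(K)$ and $\eta$ lies in the small regime of \Cref{mamlstabilizing} (the same one that makes $K\in\bar{\mathcal{K}}$, cf. the discussion after \Cref{def:stabilizing_set}), both $\Phi_i(K),\Phi_i(K')\in\mathcal{S}$ for every $i\in[I]$, and
\[
\|\Phi_i(K)-\Phi_i(K')\|_F\le\|\Delta\|_F+\eta\|\nabla J_i(K)-\nabla J_i(K')\|_F\le(1+\eta h_{grad}(K))\|\Delta\|_F,
\]
by the triangle inequality and the gradient-smoothness line of \Cref{lem:perturbationanalysis}. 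After this, every single-task bound may be evaluated at the adapted points, which remain in $\mathcal{S}$.

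For the first inequality, apply the cost-smoothness line of \Cref{lem:perturbationanalysis} to $J_i$ at the pair $\Phi_i(K),\Phi_i(K')$, substitute the increment bound from the claim, and take $\E_{i\sim p}$; since on $\mathcal{S}$ both $J_i(\cdot)$ and the parameter $h_{cost}(\cdot)$ are controlled by the problem constants, this yields $|\mathcal{L}(K')-\mathcal{L}(K)|\le h_{cost}(1+\eta h_{grad}(K))\|\Delta\|_F$.

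The second inequality is the crux. I would write
\[
\nabla\mathcal{L}(K)-\nabla\mathcal{L}(K')=\E_{i\sim p}\big[(I-\eta\nabla^2 J_i(K))\nabla J_i(\Phi_i(K))-(I-\eta\nabla^2 J_i(K'))\nabla J_i(\Phi_i(K'))\big],
\]
insert the cross term $(I-\eta\nabla^2 J_i(K'))\nabla J_i(\Phi_i(K))$, and split the integrand into (i) $\eta\big(\nabla^2 J_i(K')-\nabla^2 J_i(K)\big)\nabla J_i(\Phi_i(K))$ and (ii) $(I-\eta\nabla^2 J_i(K'))\big(\nabla J_i(\Phi_i(K))-\nabla J_i(\Phi_i(K'))\big)$. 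Term (i) is bounded in Frobenius norm by $\eta\,h_{hess}(K)\|\Delta\|_F\cdot\|\nabla J_i(\Phi_i(K))\|_F$ via the Hessian-smoothness line of \Cref{lem:perturbationanalysis}, and then by a constant multiple of $\eta\,h_{hess}(K)h_G(K)\|\Delta\|_F$ via the gradient bound of \Cref{lem:uniformbounds} at $\Phi_i(K)\in\mathcal{S}$; term (ii) is bounded by $\big(k+\eta h_H(K')\big)\cdot\|\nabla J_i(\Phi_i(K))-\nabla J_i(\Phi_i(K'))\|_F$ using \Cref{lem:uniformbounds} for the operator factor, and then by $\big(k+\eta h_H(K')\big)h_{grad}(K)(1+\eta h_{grad}(K))\|\Delta\|_F$ using the gradient-smoothness line together with the adaptation-map claim. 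Summing (i) and (ii) and taking $\E_{i\sim p}$ produces a bound of the form $h_{\mathcal{L},grad}\|\Delta\|_F$ with $h_{\mathcal{L},grad}$ assembled from these two pieces.

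\textbf{Main obstacle.} The algebra above is routine; the delicate point is the domain bookkeeping. One must certify that $\Phi_i(K),\Phi_i(K')\in\mathcal{S}$ uniformly over $i\in[I]$ and over the whole perturbation ball $\{K':\|K'-K\|_F\le h_\Delta(K)\}$, and that the problem-dependent parameters $h_G,h_H,h_{cost},h_{grad},h_{hess}$ evaluated at the adapted points can be replaced, up to absolute constants, by their values at $K$. This ties $\eta$ and $h_\Delta(K)$ together and exploits the sublevel-set structure of \Cref{def:stabilizing_set}, on which these parameters are monotone in the cost gap; it is precisely the ``small-$\eta$, small-perturbation'' regime that the statement implicitly assumes. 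Once that is pinned down, the two cross-term splittings deliver the result, and the estimate is a straightforward extension of \Cref{lem:perturbationanalysis}, exactly as the text anticipates.
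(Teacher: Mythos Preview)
Your proposal is correct and follows essentially the same route as the paper's proof: the same cross-term $(I-\eta\nabla^2 J_i(K'))\nabla J_i(\Phi_i(K))$ is inserted, the integrand is split into the same two pieces, and the single-task bounds of \Cref{lem:uniformbounds} and \Cref{lem:perturbationanalysis} together with the adaptation-map Lipschitz estimate $\|\Phi_i(K)-\Phi_i(K')\|_F\le(1+\eta h_{grad}(K))\|\Delta\|_F$ are applied exactly as you describe. The only cosmetic difference is that the paper bounds $\|\nabla J_i(\Phi_i(K))\|_F$ via the triangle inequality $\|\nabla J_i(\Phi_i(K))-\nabla J_i(K)\|_F+\|\nabla J_i(K)\|_F\le(1+\eta h_{grad}(K))h_G(K)$ rather than invoking $h_G$ directly at the adapted point, which accounts for the extra $(1+\eta h_{grad})$ factor in the stated constant; your domain-bookkeeping remark is in fact more careful than what the paper writes out.
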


\begin{algorithm}[htbp]
\label{metalqr}
\SetKwInOut{Input}{Input}
\SetAlgoLined
\Input{Task prior $p$, number of perturbations $M$, adaptation rate $\eta$,  \\learning rate $\alpha$, roll-out length $\ell$, parameter $r$, tolerance $\varepsilon$;}
initialize feasible policy $K_0 \in \mathcal{S}$ \; 
\While{$ \| \tilde{\nabla}\mathcal{L}(K) \leq \varepsilon \|$}{
    $\tilde{\nabla} \mathcal{L}(K) \leftarrow $ \texttt{Meta-Gradient Estimation}$(p, K_{n}, M, \eta, \ell, r)$ \;
    Update policy:
    \begin{equation*}
           K_{n+1} =  K_n  - \alpha \tilde{\nabla} \mathcal{L}(K_n). 
    \end{equation*}
}
\caption{Model-Agnostic Meta-Policy-Optimization}
\end{algorithm}

\section{GRADIENT DESCENT ANALYSIS}
\label{sec:gdanalysis}
Our theoretical analysis for Algorithm \ref{metalqr} can be divided into two primary objectives: stability and convergence. For stability, we demonstrate that by selecting appropriate algorithm parameters, every iteration $n$ of gradient descent satisfies $K_{n} \in \bar{\mathcal{K}}$, ensuring that both $K_{n+1}$ and $K_n$ remain in  $\mathcal{S}$; Regarding convergence, we establish that the learned meta-policy initialization eventually approximates the optimal policies for each specific task, and we provide a quantitative measure of this closeness.

\subsection{Controlling Estimation Error}

In the following, we present our results that characterize the conditions on the step-sizes  $\eta, \alpha$ and zeroth-order estimation parameters $M$, $\ell$, $r$, and task batch size $|\mathcal{T}_n|$,  for controlling gradient and meta-gradient estimation errors. The proofs are deferred to the appendix.
Overall, our observations are as follows:
\begin{itemize}
    \item The smoothing radius is dictated by the smoothness of the LQR cost and its gradient, as well as the size of the locally smooth set.
    \item The roll-out length is determined by the smoothness of the cost function and the level of system noise.
    \item The number of sample trajectories and sample tasks is influenced by a broader set of parameters that govern the magnitudes and variances of the gradient estimates.
    \item Inner loop estimation errors can propagate readily, particularly when the scale of the sample tasks is large.
\end{itemize}

\begin{lemma}[Gradient Estimation]\label{lem:gradestimate}
    For sufficiently small numbers $\epsilon, \delta \in (0,1)$, given a control policy $K$, let $\ell$, radius $r$, number of trajectories $M$ satisfying the following dependence, 
    \begin{align*}
        \ell & \geq h^1_{\ell} (\frac{1}{\epsilon}, \delta) : = \max\{ h_{\ell, grad}(\frac{1}{\epsilon}), h_{\ell, var}(\frac{1}{\epsilon}, \delta)\} \\
        r & \leq  h^1_r (\frac{1}{\epsilon}) := \min \{ 1/\bar{h}_{cost}, \underline{h}_{\Delta}, \frac{\epsilon}{4\bar{h}_{grad}}\} \\
        M & \geq h^1_M (\frac{1}{\epsilon}, \delta) := h_{sample}(\frac{4}{\epsilon}, \delta)
    \end{align*}
    Then, with probability at least $1 - 2\delta$, the gradient estimation error is bounded by
    \begin{equation}
        \| {\nabla} J_i (K)- \tilde{\nabla} J_i (K) \|_F \leq \epsilon, 
    \end{equation}
     for any task $i \in [I]$.
\end{lemma}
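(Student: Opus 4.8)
The plan is to decompose the gradient estimation error into three pieces and bound each separately: the \emph{smoothing bias} $\|\nabla_r J_i(K) - \nabla J_i(K)\|_F$ introduced by replacing the true gradient with the Gaussian/sphere-smoothed surrogate \eqref{steinidtty}; the \emph{truncation bias} $\mathbb{E}\|\tilde{\nabla}^{\text{exact}}_r J_i(K) - \nabla_r J_i(K)\|_F$ coming from using the finite-horizon empirical return $\tilde{J}^{(\ell)}_i(K+U_m)$ rather than the infinite-horizon $J_i(K+U_m)$; and the \emph{Monte-Carlo/concentration error}, i.e.\ the deviation of the $M$-sample average $\frac{1}{M}\sum_m \frac{dk}{r^2}\tilde{J}^{(\ell)}_i(\widehat K_m)U_m$ from its mean. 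Writing $\|\nabla J_i(K) - \tilde\nabla J_i(K)\|_F \le \underbrace{\|\nabla J_i(K)-\nabla_r J_i(K)\|_F}_{\text{(I)}} + \underbrace{\|\nabla_r J_i(K)-\bar{g}_\ell\|_F}_{\text{(II)}} + \underbrace{\|\bar g_\ell - \tilde\nabla J_i(K)\|_F}_{\text{(III)}}$ where $\bar g_\ell$ is the expectation of the finite-horizon estimator, we allocate $\epsilon/4$ to each of (I),(II), and the deterministic part of (III)'s bias while using concentration for the random fluctuation, which is exactly why the stated radius bound has the $\epsilon/4\bar h_{grad}$ term and $M$ uses $h_{sample}(4/\epsilon,\delta)$.

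First I would handle (I): by Stein's identity (cited in the excerpt, \cite{flaxman2004onlineconvexoptimizationbandit}) $\nabla_r J_i(K) = \mathbb{E}_{U}[\nabla J_i(K+U)]$, so $\|\nabla_r J_i(K) - \nabla J_i(K)\|_F \le \sup_{\|U\|_F \le r}\|\nabla J_i(K+U) - \nabla J_i(K)\|_F \le h_{\text{grad}}(K)\, r$ by the gradient-Lipschitz bound in \Cref{lem:perturbationanalysis}, valid as long as $r \le h_\Delta(K)$ so all perturbed policies stay in $\mathcal S$ — this explains the $\underline h_\Delta$ and $1/\bar h_{cost}$ entries in $h^1_r$ (the latter ensuring $K+U$ has finite, controlled cost). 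Demanding $r \le \epsilon/(4\bar h_{grad})$ makes (I) $\le \epsilon/4$. Next, for (II) I would invoke the finite-horizon truncation analysis: the empirical estimate $\tilde J^{(\ell)}_i$ concentrates around $J_i$ with a bias decaying geometrically in $\ell$ (controlled by the spectral radius of the closed-loop map and the noise covariance $\Psi$), and multiplying by $dk/r^2$ and $U$ (with $\|U\|_F=r$) the bias scales like $\frac{dk}{r}\cdot(\text{geometric-in-}\ell\text{ term})$; choosing $\ell \ge h_{\ell,grad}(1/\epsilon)$ drives this below $\epsilon/4$. This step also uses that $K+U$ is stable (again guaranteed by $r \le h_\Delta$) so the truncation actually decays.

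The main obstacle is step (III), the concentration of the $M$-sample average, because the summands $\frac{dk}{r^2}\tilde J^{(\ell)}_i(\widehat K_m)U_m$ are \emph{not} bounded a priori — the finite-horizon cost estimate depends on the random states $x_l$, which are sub-exponential (quadratic forms in sub-Gaussian noise), and the $1/r^2$ prefactor amplifies the variance. I would control this by a high-probability boundedness argument: condition on the event (holding with probability $\ge 1-\delta$, via Hanson–Wright / sub-exponential tail bounds on $\sum_l x_l x_l^\top$) that every trajectory's empirical cost stays within a constant factor of $J_i(\widehat K_m) \le \text{(uniform bound from \Cref{lem:uniformbounds})}$; on this event each summand has norm $O(\frac{dk}{r}\cdot h_{\text{cost-bound}})$, so a vector Bernstein / matrix-Hoeffding inequality gives that $M \ge h_{sample}(4/\epsilon,\delta)$ samples suffice for $\|\bar g_\ell - \tilde\nabla J_i(K)\|_F \le \epsilon/4$ with probability $\ge 1-\delta$. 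A union bound over the two failure events (trajectory-boundedness and Monte-Carlo deviation) yields the overall probability $1-2\delta$. Uniformity over $i \in [I]$ is immediate since $[I]$ is finite and all the problem-dependent constants $h_{\text{grad}}, h_\Delta, h_{\text{cost}}, \ldots$ can be taken as the worst case over the finitely many tasks. The delicate bookkeeping is in how $r$ enters both the bias (linearly, favoring small $r$) and the variance (as $1/r^2$, favoring large $r$), so the final constants in $h^1_\ell, h^1_M$ must absorb the chosen $r = \Theta(\epsilon)$; I would state these dependencies explicitly rather than optimize them.
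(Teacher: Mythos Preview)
Your proposal is correct and follows the same high-level strategy as the paper (triangle-inequality decomposition, Lipschitz control of the smoothing bias via \Cref{lem:perturbationanalysis}, and matrix-Bernstein concentration), but you order the pieces differently, and this difference is worth noting.

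The paper inserts the intermediate quantity $\widehat{\nabla}:=\frac{1}{M}\sum_m \frac{dk}{r^2}\,J_i(K+U_m)U_m$ built from the \emph{exact} infinite-horizon cost, and decomposes as
\[
\nabla J_i(K)-\tilde{\nabla}
=(\nabla J_i-\nabla_r J_i)+(\nabla_r J_i-\widehat{\nabla})+(\widehat{\nabla}-\tilde{\nabla}).
\]
Because $J_i(K+U_m)$ is deterministically bounded once $K+U_m\in\mathcal{S}$, the Monte-Carlo term $(\nabla_r J_i-\widehat{\nabla})$ involves \emph{bounded} summands, so matrix Bernstein applies directly; all the sub-exponential trajectory randomness is pushed into the last term $(\widehat{\nabla}-\tilde{\nabla})$, which is then split into a deterministic finite-horizon bias $J_i-J_i^{(\ell)}$ (handled by \Cref{lem:finitehorizon}) and a covariance-concentration piece $J_i^{(\ell)}-\tilde{J}_i^{(\ell)}$ (handled by the known bound $\ell\ge h_{\ell,var}$ from \cite{fazel2018global}). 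This yields four $\epsilon/4$ contributions and two probabilistic events, hence the $1-2\delta$.

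Your decomposition instead takes the expectation of the finite-horizon estimator $\bar g_\ell$ as the intermediate, so your concentration step (III) must control the empirical average of the \emph{random} finite-horizon returns. You correctly identify that these summands are unbounded and propose conditioning on a high-probability Hanson--Wright-type event before invoking Bernstein. That works, but it is more laborious: you need a uniform tail bound over all $M$ trajectories, then Bernstein on the conditioned event, and a union bound. The paper's ordering buys you a cleaner Bernstein step with no sub-exponential bookkeeping, at the cost of one extra intermediate object $\widehat{\nabla}$. Either route gives the same conclusion and the same two failure events producing the $1-2\delta$ probability.
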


\begin{lemma}[Meta-gradient Estimation]\label{lem:metagradestimate}
    For sufficiently small numbers $\epsilon, \delta \in (0,1)$, given a control policy $K$, let $\ell$, radius $r$, number of trajectory $M$ satisfies that 
    \begin{align*}
      | \mathcal{T}_n |& \geq  h_{sample, task} (\frac{2}{\epsilon}, \frac{\delta}{2}) , \\
        \ell & \geq  \max\{h^1_{\ell} (\frac{1}{\epsilon^{\prime}}, \delta^{\prime}) ,  h^2_{\ell, grad} (\frac{12}{\epsilon}), h^2_{\ell, var}(\frac{12}{\epsilon}, \delta^{\prime})\} , \\
        r & \leq \min \{ h^2_r(\frac{6}{\epsilon}),  h^1_r (\frac{1}{\epsilon}) \}  , \\
        M & \geq \max\{ h^2_M (\frac{1}{\epsilon}, \delta), h^1_M (\frac{1}{\epsilon^{''}}, \frac{\delta}{4})   \} , 
    \end{align*}
where $ h^2_M (\frac{1}{\epsilon}, \delta):= h_{sample}(\frac{1}{\epsilon^{''}}, \frac{\delta^{\prime}}{4})$, $\delta^{\prime} = \delta / h_{sample, task} (\frac{2}{\epsilon}, \frac{\delta}{2} ) $, and $\epsilon^{\prime} = \frac{\epsilon}{6 \frac{dk}{r} h_{cost} \bar{J}_{max}}$.
 Then, for each iteration the meta-gradient estimation is $\epsilon$-accurate, i.e., 
 \begin{align*}
     \|\tilde{\nabla} \mathcal{L} (K) - \nabla \mathcal{L} (K) \|_F \leq \epsilon
 \end{align*} 
 with probability at least $1 - \delta$. 
\end{lemma}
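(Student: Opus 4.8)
The plan is to prove \Cref{lem:metagradestimate} by decomposing the meta-gradient estimation error into three controllable pieces and bounding each with a high-probability argument, then combining via union bound. Recall that $\nabla_r \mathcal{L}(K) = \frac{dk}{r^2}\E_{i\sim p, U\sim\mathbb{S}_r}[J_i(K+U-\eta\nabla J_i(K+U))U]$ and that $\tilde{\nabla}\mathcal{L}(K)$ replaces the expectation over tasks by an empirical average over $\mathcal{T}_n$, replaces the expectation over $U$ by an empirical average over $M$ perturbations, replaces the exact adapted policy $K+U-\eta\nabla J_i(K+U)$ by $K^i_m = \widehat{K}_m - \eta\tilde{\nabla}J_i(\widehat{K}_m)$, and replaces the exact return $J_i$ by the roll-out estimate $\tilde{J}^{(\ell)}_i$. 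Accordingly I would write $\tilde{\nabla}\mathcal{L}(K) - \nabla\mathcal{L}(K)$ as the sum of (i) a \emph{Stein smoothing bias} $\nabla_r\mathcal{L}(K) - \nabla\mathcal{L}(K)$, controlled by the local smoothness of $\nabla\mathcal{L}$ from \Cref{lem:perturbationl} together with the bound $\|U\|_F = r$, which is why $r \le h^2_r(6/\epsilon)$ and $r \le h^1_r(1/\epsilon)$ appear; (ii) a \emph{sampling/concentration error} of the empirical averages over tasks and perturbations around $\nabla_r\mathcal{L}(K)$, handled by a matrix (or vectorized Frobenius-norm) concentration inequality, using \Cref{lem:uniformboundsl} and the boundedness of $J_i$ on $\mathcal{S}$ to control the variance proxy — this is the source of the $|\mathcal{T}_n| \ge h_{sample,task}(2/\epsilon,\delta/2)$ and $M \ge h^2_M(1/\epsilon,\delta)$ conditions; and (iii) a \emph{plug-in error} coming from using $\tilde{J}^{(\ell)}_i(K^i_m)$ instead of $J_i(K+U_m - \eta\nabla J_i(K+U_m))$.

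The third term is where the real work lies, and I would split it further. First, the roll-out truncation: $|\tilde{J}^{(\ell)}_i(K^i_m) - J_i(K^i_m)|$ is bounded using the ergodic-averaging/mixing estimates behind Algorithm \ref{zerothorderalgo} (this gives the $h^2_{\ell,grad}(12/\epsilon)$ and $h^2_{\ell,var}(12/\epsilon,\delta')$ conditions on $\ell$, and contributes a $\frac{dk}{r}$-type amplification, hence the factor $\frac{dk}{r}h_{cost}\bar{J}_{max}$ in the definition of $\epsilon'$). Second, the adapted-policy mismatch: $|J_i(K^i_m) - J_i(K+U_m-\eta\nabla J_i(K+U_m))|$ is controlled through the cost-Lipschitz bound of \Cref{lem:perturbationanalysis} applied to the displacement $\|K^i_m - (K+U_m-\eta\nabla J_i(K+U_m))\|_F = \eta\|\tilde{\nabla}J_i(\widehat{K}_m) - \nabla J_i(\widehat{K}_m)\|_F$, and this last inner-loop gradient-estimation error is exactly what \Cref{lem:gradestimate} bounds — hence the requirement $\ell \ge h^1_\ell(1/\epsilon',\delta')$ and $M \ge h^1_M(1/\epsilon'',\delta/4)$, with the accuracy level $\epsilon'$ chosen so that after the $\frac{dk}{r}$ amplification and the cost-Lipschitz constant it contributes at most an $O(\epsilon)$ share. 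A subtle but essential point is that the inner error bound must hold \emph{simultaneously} for all sampled perturbations $U_m$ and all tasks $i\in\mathcal{T}_n$; since the perturbation is stable only with high probability one must either condition on the event that every $\widehat{K}_m \in \mathcal{S}$ (guaranteed by $r \le h^1_r$) and then union-bound over the $|\mathcal{T}_n|\cdot M$ inner calls, which is precisely why the per-call failure probability is deflated to $\delta' = \delta / h_{sample,task}(2/\epsilon,\delta/2)$ and $\delta/4$.

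After the three-way (really five-way, counting the sub-splits) decomposition, I would allocate an $\epsilon/6$ or $\epsilon/12$ budget to each piece — matching the numerical constants $6, 12, 4$ appearing in the lemma statement — and verify that the stated parameter choices make each piece at most its allotted share. A triangle-inequality assembly then yields $\|\tilde{\nabla}\mathcal{L}(K) - \nabla\mathcal{L}(K)\|_F \le \epsilon$, and a union bound over the bias-free failure events (the outer task/perturbation concentration at level $\delta/2$ split into $\delta/4+\delta/4$, and the inner gradient-estimation failures aggregated to at most $\delta/2$) gives the overall probability $1-\delta$.

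The main obstacle I anticipate is the \emph{error-propagation bookkeeping} in term (iii): the inner-loop gradient error enters the outer estimator through a composition $J_i(\,\cdot\, - \eta\tilde{\nabla}J_i(\cdot))$ and is magnified by the $dk/r^2$ Stein prefactor against a perturbation of norm $r$ (net $dk/r$), so one must choose the inner tolerance $\epsilon'$ small enough to absorb this blow-up while keeping the induced sample complexity finite — and simultaneously ensure that shrinking $r$ to kill the smoothing bias does \emph{not} blow up this amplified inner error faster than it can be controlled. Reconciling these competing demands on $r$, $\ell$, $M$, and $|\mathcal{T}_n|$, and checking that the local-smoothness hypotheses of Lemmas \ref{lem:perturbationanalysis} and \ref{lem:perturbationl} (namely $\|\Delta\| \le h_\Delta$) are never violated along the way, is the delicate part; everything else is a routine, if lengthy, chain of triangle inequalities and concentration bounds.
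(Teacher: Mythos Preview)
Your proposal is correct and follows essentially the same route as the paper's proof: Stein smoothing bias controlled by $r$, task/perturbation concentration via matrix Bernstein, inner-gradient error propagated through the cost-Lipschitz bound of \Cref{lem:perturbationanalysis} with the $\tfrac{dk}{r}$ amplification, finite-horizon roll-out error, and a union bound over the sampled tasks giving the deflated $\delta'$. The only organizational difference is that the paper first peels off the task-sampling error $\big\|\mathbb{E}_{i\sim p}\nabla\mathcal{L}_i(K)-\tfrac{1}{|\mathcal{T}_n|}\sum_{i\in\mathcal{T}_n}\nabla\mathcal{L}_i(K)\big\|_F$ (applying Bernstein directly to the exact per-task gradients $\nabla\mathcal{L}_i$) and then bounds each per-task error $\|\nabla\mathcal{L}_i-\tilde{\nabla}\mathcal{L}_i\|_F$ by the three-way split you describe, whereas you keep the task average inside the smoothing step; both orderings lead to the same parameter conditions.
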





\subsection{Theoretical Guarantee}

We first provide the conditions on the step-sizes $\eta, \alpha$ and zeroth-order estimation parameters $M$, $\ell$, $r$, and $|\mathcal{T}_n|$, such that we can ensure that Algorithm \ref{metalqr} generates stable policies at each iteration. This stability result is shown in \Cref{thm:stabilityattain}. 

\begin{theorem} \label{thm:stabilityattain}
Given an initial stabilizing controller $K_0 \in \mathcal{S}$ and scalar $\delta \in (0,1)$,  let $\varepsilon_i := \frac{\lambda_i \Delta^i_0}{6}$, the adaptation rate $\eta \leq \min\{ \sqrt{\frac{1}{4(\bar{h}_{grad}^2 k^2 + \bar{h}_{grad}^2 \bar{h}_{H}^2 + \bar{h}_H^2 )}} , \frac{1}{4\bar{h}_{\text{grad}}}\}$, 
and $\varepsilon :=  \frac{\bar{\lambda}_i  \bar{\Delta}^i_0 (1 -2\phi_1)\phi_2 }{2 (1 + 4\phi_2 - 2\phi_1)}$ where $\phi_1 := 2(k^2 + \eta^2\bar{h}_H^2 )\eta^2 \bar{h}^2_{grad} + 2 \eta^2 \bar{h}_H^2$ and $\phi_2 := k^2 + \eta^2 \bar{h}_H^2) (2 + 2 \bar{h}^2_{grad} \eta^2$; let the learning rate $\alpha \leq \frac{\frac{1}{2} - \phi_1}{2 \phi_2 \bar{h}_{grad}}$. 
In addition, let the task batch size $|\mathcal{T}_n|$, the smoothing radius $r$, roll-out length $\ell$, and the number of sample trajectories satisfy:
\begin{align*}
      | \mathcal{T}_n |& \geq  h_{sample, task} (\frac{2}{\varepsilon}, \frac{\delta}{2}) , \\
        \ell & \geq  \max\{h^1_{\ell} (\frac{1}{\varepsilon_i}, \frac{\delta}{2}), h^1_{\ell} (\frac{1}{\varepsilon^{\prime}}, \delta^{\prime}) ,  h^2_{\ell, grad} (\frac{12}{\varepsilon}), h^2_{\ell, var}(\frac{12}{\varepsilon}, \delta^{\prime})\} , \\
        r & \leq \min \{ h^1_r (\frac{1}{\varepsilon_i}),  h^1_r (\frac{1}{\varepsilon}), h^2_r(\frac{6}{\varepsilon}) \}  , \\
        M & \geq \max\{ h^1_M(\frac{1}{\varepsilon_i}, \frac{\delta}{2}), h^1_M (\frac{1}{\varepsilon^{''}},\frac{\delta}{4})  h^2_M (\frac{1}{\varepsilon}, \delta)  \} , 
    \end{align*}
where $ h^2_M (\frac{1}{\varepsilon}, \delta):= h_{sample}(\frac{1}{\varepsilon^{''}}, \frac{\delta^{\prime}}{4})$, $\delta^{\prime} = \delta / h_{sample, task} (\frac{2}{\varepsilon}, \frac{\delta}{2} ) $, $\varepsilon^{\prime} = \frac{\varepsilon}{6 \frac{dk}{r} h_{cost} \bar{J}_{max}}$, $\varepsilon^{''} = \frac{\varepsilon}{6}$. 
Then, with probability at least $1-\delta$, Algorithm \ref{metalqr} yields a MAML stabilizing controller $K_n$ for every iteration, i.e., $K^i_n, K_n \in \mathcal{S}$, for all $n \in \{0,1, \ldots, N\}$, where
$K^i_n =  K_n - \eta \tilde{\nabla} J_i(K_n)$ is the updated policy for specific tasks $i \in [I]$.
\end{theorem}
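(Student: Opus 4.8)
The plan is to argue by induction on the iteration count $n$, showing that the invariant $K_n \in \mathcal{S}$ is preserved, and that along the way the adapted task-specific policies $K_n^i$ also land in $\mathcal{S}$. The base case is the hypothesis $K_0 \in \mathcal{S}$. For the inductive step, assume $K_n \in \mathcal{S}$; I would first invoke \Cref{lem:gradestimate} with accuracy $\varepsilon_i := \frac{\lambda_i \Delta_0^i}{6}$ (hence the appearance of the $h^1_\ell(1/\varepsilon_i,\cdot)$, $h^1_r(1/\varepsilon_i)$, $h^1_M(1/\varepsilon_i,\cdot)$ conditions) so that, on a high-probability event, $\|\tilde\nabla J_i(K_n) - \nabla J_i(K_n)\|_F \le \varepsilon_i$ for every task. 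Combined with \Cref{lem:graddom} (which lets me translate the suboptimality gap into $\|\nabla J_i(K_n)\|_F$) and the choice of $\eta$, I can bound $\|K_n^i - K_n\|_F = \eta\|\tilde\nabla J_i(K_n)\|_F$ by $h_\Delta(K_n)$ and then use \Cref{lem:perturbationanalysis} (the cost-difference inequality) to show $J_i(K_n^i) - J_i(K_i^\star)$ stays inside the $\gamma_i \Delta_0^i$ sublevel ball — i.e. $K_n^i \in \mathcal{S}_i$ — with the $\frac{1}{6}$ slack absorbing the estimation error $\varepsilon_i$ and the gradient-adaptation contraction.

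Next I would handle the outer update $K_{n+1} = K_n - \alpha \tilde\nabla\mathcal{L}(K_n)$. Here I apply \Cref{lem:metagradestimate} with accuracy $\varepsilon := \frac{\bar\lambda_i\bar\Delta_0^i(1-2\phi_1)\phi_2}{2(1+4\phi_2-2\phi_1)}$, which accounts for the batch-size condition $|\mathcal{T}_n| \ge h_{sample,task}(2/\varepsilon,\delta/2)$, the nested $\delta' = \delta/h_{sample,task}(\cdots)$ union bound over sampled tasks, and the $h^2$-type $\ell, r, M$ requirements; this gives $\|\tilde\nabla\mathcal{L}(K_n) - \nabla\mathcal{L}(K_n)\|_F \le \varepsilon$ on a high-probability event. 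Then I write a descent-type estimate for $\mathcal{L}$: using the local smoothness of $\nabla\mathcal{L}$ from \Cref{lem:perturbationl} together with the bound on $\|\nabla\mathcal{L}(K_n)\|_F$ from \Cref{lem:uniformboundsl}, a one-step Taylor/quadratic-upper-bound argument yields $\mathcal{L}(K_{n+1}) \le \mathcal{L}(K_n) - (\text{positive coefficient})\|\nabla\mathcal{L}(K_n)\|_F^2 + (\text{error terms in }\varepsilon)$, where the choice $\alpha \le \frac{\frac12 - \phi_1}{2\phi_2\bar h_{grad}}$ is exactly what makes the net coefficient positive. Since $J_i \ge 0$ and the meta-objective is a weighted average of shifted $J_i$'s, monotone (approximate) decrease of $\mathcal{L}$ keeps each $J_i(K_n^i)$ controlled; I would convert this into the statement $K_{n+1} \in \mathcal{S}$ by showing $J_i(K_{n+1}) - J_i(K_i^\star) \le \gamma_i\Delta_0^i$, again using the perturbation bounds to pass between $K_{n+1}$ and the adapted policies and using $\varepsilon$ small enough — the $\phi_1, \phi_2$ bookkeeping is precisely the propagation of the inner estimation error into the outer step.

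Finally I would collect the probability budget: the inner-loop event (per task, per iteration), the task-sampling event, and the outer meta-gradient event each fail with probability at most a fraction of $\delta$, and a union bound over $n \in \{0,\dots,N\}$ — with $\delta$ already allocated as $\delta/2, \delta/4$, and $\delta'$ scaled by the task count — yields overall success probability at least $1-\delta$. I expect the main obstacle to be the error-propagation accounting in the inductive step: the outer meta-gradient estimate $\tilde\nabla\mathcal{L}(K_n)$ depends on the \emph{inner} adapted policies $K_n^i$, which are themselves formed from an estimated gradient, so the analysis must carefully track how the inner error $\varepsilon_i$ feeds into the bias and variance of $\tilde\nabla\mathcal{L}$ (this is what forces the coupled conditions $\varepsilon' = \frac{\varepsilon}{6\frac{dk}{r}h_{cost}\bar J_{max}}$ and $\varepsilon'' = \varepsilon/6$ in \Cref{lem:metagradestimate}), and to simultaneously keep \emph{all} intermediate policies — $K_n$, $\widehat K_m$, $K_n^i$, and $K_{n+1}$ — inside the locally smooth set $\mathcal{S}$ so that Lemmas \ref{lem:uniformbounds}–\ref{lem:graddom} remain applicable throughout. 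The delicate point is that the perturbation radius $r$ and rates $\eta, \alpha$ must be small enough that stability is self-sustaining, but the sample sizes $\ell, M, |\mathcal{T}_n|$ must be large enough that the errors $\varepsilon_i, \varepsilon$ are beaten — the theorem's parameter list is exactly the simultaneous solution of these competing constraints, and verifying consistency of that system is the crux.
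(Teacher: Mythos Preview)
Your high-level architecture — induction on $n$, separate verification of $K_n^i \in \mathcal{S}$ and $K_{n+1} \in \mathcal{S}$, and invoking \Cref{lem:gradestimate} and \Cref{lem:metagradestimate} with the stated accuracies — matches the paper's. But two of the core steps use the wrong tools, and in particular your reading of $\phi_1,\phi_2$ is off.

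\textbf{Inner step.} You propose to bound $\|K_n^i-K_n\|_F$ and then apply the \emph{cost}-Lipschitz inequality from \Cref{lem:perturbationanalysis}. That inequality only gives $|J_i(K_n^i)-J_i(K_n)|\le h_{\mathrm{cost}} J_i(K_n)\|K_n^i-K_n\|_F$, which does not produce a negative descent term and therefore cannot yield the contraction you invoke. The paper instead uses the \emph{gradient}-Lipschitz constant $\bar h_{\mathrm{grad}}$ to write the standard descent lemma
\[
J_i(K_n^i)-J_i(K_n)\le \langle \nabla J_i(K_n),\,-\eta\tilde\nabla J_i(K_n)\rangle+\tfrac{\bar h_{\mathrm{grad}}\eta^2}{2}\|\tilde\nabla J_i(K_n)\|_F^2,
\]
applies the polarization bound $\langle a,-b\rangle\le -\tfrac12\|a\|^2+\tfrac12\|a-b\|^2$, chooses $\eta\le 1/(4\bar h_{\mathrm{grad}})$ to get $-\tfrac{\eta}{4}\|\nabla J_i(K_n)\|_F^2+\tfrac{3\eta}{4}\|\tilde\nabla J_i-\nabla J_i\|_F^2$, and only \emph{then} invokes \Cref{lem:graddom} to turn $\|\nabla J_i\|_F^2$ into $\lambda_i(J_i(K_n)-J_i(K_i^\star))$. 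The $\varepsilon_i=\lambda_i\Delta_0^i/6$ is calibrated to this inequality, not to the cost-Lipschitz bound.

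\textbf{Outer step.} You track $\mathcal{L}(K_n)$ via \Cref{lem:perturbationl}; the paper does \emph{not}. It tracks $\mathbb{E}_{i\sim p}[J_i(K_n)-J_i(K_i^\star)]$ directly, again via the descent lemma for $J_i$. The difficulty is that the algorithm moves in direction $-\tilde\nabla\mathcal{L}(K_n)$, not $-\mathbb{E}_i\nabla J_i(K_n)$, so one must control the mismatch. This is precisely what $\phi_1$ and $\phi_2$ encode: they are \emph{deterministic} bounds
\[
\|\mathbb{E}_i\nabla J_i(K_n)-\nabla\mathcal{L}(K_n)\|_F^2\le \phi_1\|\mathbb{E}_i\nabla J_i(K_n)\|_F^2,
\qquad
\|\nabla\mathcal{L}(K_n)\|_F^2\le \phi_2\|\mathbb{E}_i\nabla J_i(K_n)\|_F^2,
\]
coming from \Cref{lem:uniformbounds} and \Cref{lem:perturbationanalysis} applied to the difference $(I-\eta\nabla^2 J_i)\nabla J_i(K-\eta\nabla J_i)-\nabla J_i(K)$. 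They are \emph{not} propagation of inner estimation error; the condition on $\eta$ makes $\phi_1<\tfrac12$, and the condition on $\alpha$ is exactly what keeps the coefficient $\phi_1+\phi_2\alpha\bar h_{\mathrm{grad}}-\tfrac12$ negative. Your plan to pass from descent of $\mathcal{L}$ back to $J_i(K_{n+1})$ via perturbation bounds would also be circular in the induction (those bounds already presuppose $K_{n+1}\in\mathcal{S}$).
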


The proof of stability result indicates that the learned MAML-LQR controller $K_N$ is sufficiently close to each task-specific optimal controller $K^\star_i$. The closeness of $K_N$ and $K^*_i$ can be measured by $J_i(K_N) - J_i(K^*_i)$, and because it is monotonically decreasing, we obtain stability for every iteration.

We proceed to give another set of conditions on the learning parameters, which ensure that the learned meta-policy initialization $K_N$ is sufficiently close to the optimal MAML policy-initialization $K^\star :=  \argmin_{K \in \bar{\mathcal{K}}} \mathcal{L} (K)$. For this purpose, we study the difference term $\mathcal{L} (K_N) - \mathcal{L}(K^\star)$.

\begin{theorem}\label{thm:convergence} (Convergence) 
Given an initial stabilizing controller $K_0 \in \mathcal{S}$ and scalar $\delta \in (0,1)$, let the parameters for Algorithm \ref{metalqr} satisfy the conditions in \Cref{thm:stabilityattain}. If, in addition, \begin{align*}
      | \mathcal{T}_n |& \geq  h_{sample, task} (\frac{2}{\bar{\varepsilon}}, \frac{\delta}{2}) , \\
        \ell & \geq  \max\{h^1_{\ell} (\frac{1}{\bar{\varepsilon}^{\prime}}, \delta^{\prime}) ,  h^2_{\ell, grad} (\frac{12}{\bar{\varepsilon}}), h^2_{\ell, var}(\frac{12}{\bar{\varepsilon}}, \delta^{\prime})\} , \\
        r & \leq \min \{ h^2_r(\frac{6}{\bar{\varepsilon}}),  h^1_r (\frac{1}{\bar{\varepsilon}}) \}  , \\
        M & \geq \max\{ h^2_M (\frac{1}{\bar{\varepsilon}}, \delta), h^1_M (\frac{1}{\bar{\varepsilon}^{''}}, \frac{\delta}{4})   \} , 
    \end{align*}
 where $\bar{\varepsilon} := \frac{\bar{\lambda}_i (1 - \eta^2 \bar{h}_H^2)\psi_0}{6}$, $\psi_0 :=  \mathcal{L} (K_0 ) - \mathcal{L} (K^\star)$, $ h^2_M (\frac{1}{\bar{\varepsilon}}, \delta):= h_{sample}(\frac{1}{\bar{\varepsilon}^{''}}, \frac{\delta^{\prime}}{4})$, $\delta^{\prime} = \delta / h_{sample, task} (\frac{2}{\bar{\varepsilon}}, \frac{\delta}{2} ) $, $\bar{\varepsilon}^{\prime} = \frac{\varepsilon}{6 \frac{dk}{r} h_{cost} \bar{J}_{max}}$, $\bar{\varepsilon}^{''} = \frac{\bar{\varepsilon}}{6}$,
Then, when $N \geq \frac{8}{\alpha \bar{\lambda}_i (1 - \eta^2 \bar{h}_H^2)}\log( \frac{2\psi_0}{\epsilon_0}) $, with probability $1- \bar{\delta}$, it holds that, 
\begin{align*}\label{eq:convergence_model_free}
    &\mathcal{L} (K_{N}) - \mathcal{L}(K^\star)  \leq \epsilon_0.
\end{align*}
\end{theorem}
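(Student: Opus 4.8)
The plan is to establish convergence by first invoking the stability guarantee of \Cref{thm:stabilityattain}, which ensures that with probability at least $1-\delta$ all iterates $K_n$ and their task-adapted versions $K^i_n$ remain in the MAML-stabilizing sub-level set $\mathcal{S}$ throughout the run. On this good event, the local smoothness of the meta-objective from \Cref{lem:perturbationl}—namely that $\nabla \mathcal{L}$ is $h_{\mathcal{L},grad}$-Lipschitz on $\mathcal{S}$—gives the standard descent inequality
\begin{equation*}
    \mathcal{L}(K_{n+1}) \leq \mathcal{L}(K_n) - \alpha \langle \nabla \mathcal{L}(K_n), \tilde{\nabla}\mathcal{L}(K_n)\rangle + \frac{h_{\mathcal{L},grad} \alpha^2}{2}\|\tilde{\nabla}\mathcal{L}(K_n)\|_F^2 .
\end{equation*}
I would then write $\tilde{\nabla}\mathcal{L}(K_n) = \nabla \mathcal{L}(K_n) + e_n$ where, by \Cref{lem:metagradestimate} applied with accuracy level $\bar\varepsilon$, we have $\|e_n\|_F \leq \bar\varepsilon$ with high probability, and use Young's inequality to absorb the cross terms involving $e_n$. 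This reduces the one-step bound to the form $\mathcal{L}(K_{n+1}) - \mathcal{L}(K^\star) \leq (1 - c\alpha)(\mathcal{L}(K_n) - \mathcal{L}(K^\star)) + (\text{error terms in }\bar\varepsilon)$, provided $\alpha$ is small enough (as assumed, $\alpha \leq \tfrac{1/2 - \phi_1}{2\phi_2 \bar h_{grad}}$) that the quadratic term is dominated by the linear descent term.

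The crucial ingredient is a gradient-domination (Polyak–Łojasiewicz-type) inequality for the \emph{meta}-objective $\mathcal{L}$, which is what converts $\|\nabla \mathcal{L}(K_n)\|_F^2$ into a multiple of $\mathcal{L}(K_n) - \mathcal{L}(K^\star)$. This does not follow verbatim from \Cref{lem:graddom} (which is stated per task), so I would derive it by relating $\nabla \mathcal{L}(K) = \E_{i\sim p}[(I - \eta \nabla^2 J_i(K))\nabla J_i(K')]$ to $\E_{i\sim p}[\nabla J_i(K')]$: the factor $(I - \eta \nabla^2 J_i(K))$ is invertible with $\|(I-\eta\nabla^2 J_i(K))^{-1}\|$ controlled whenever $\eta^2 \bar h_H^2 < 1$ (hence the appearance of $(1 - \eta^2 \bar h_H^2)$ in $\bar\varepsilon$ and in the iteration count), and then task-wise gradient domination with parameter $\bar\lambda_i := \min_i \lambda_i$ closes the loop, up to the heterogeneity/bias incurred by the one-step adaptation. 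Unrolling the resulting contraction $\mathcal{L}(K_{n+1}) - \mathcal{L}(K^\star) \leq (1 - \tfrac{\alpha \bar\lambda_i (1-\eta^2\bar h_H^2)}{4})(\mathcal{L}(K_n) - \mathcal{L}(K^\star)) + O(\bar\varepsilon^2/\bar\lambda_i)$ over $N$ steps, the geometric factor drives the initial gap $\psi_0$ below $\epsilon_0/2$ once $N \geq \tfrac{8}{\alpha \bar\lambda_i(1-\eta^2\bar h_H^2)}\log(2\psi_0/\epsilon_0)$, while the choice $\bar\varepsilon = \tfrac{\bar\lambda_i(1-\eta^2\bar h_H^2)\psi_0}{6}$ ensures the residual error term contributes at most the other $\epsilon_0/2$.

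Finally I would assemble the probability budget: the stability event costs $\delta$ from \Cref{thm:stabilityattain}, each invocation of \Cref{lem:metagradestimate} along the trajectory costs its own failure probability, and these are combined by a union bound over the $N$ iterations (with the per-iteration $\delta$ chosen accordingly, as reflected in the $\delta'$ and $\delta/4$ splittings in the hypothesis), yielding the overall guarantee with probability $1 - \bar\delta$. The main obstacle I anticipate is the meta-level gradient-domination step: controlling $\mathcal{L}(K) - \mathcal{L}(K^\star)$ by $\|\nabla \mathcal{L}(K)\|_F^2$ requires carefully handling (i) the non-convex composition through the adapted policy $K' = K - \eta\nabla J_i(K)$, (ii) the task-averaging, and (iii) ensuring that the optimal meta-initialization $K^\star$ and the task-adapted optima interact cleanly — this is precisely where task heterogeneity enters and where the $\eta$-dependent correction factors are needed to keep the Hessian term $(I - \eta\nabla^2 J_i)$ well-conditioned. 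A secondary technical nuisance is tracking how inner-loop gradient-estimation errors (from Algorithm \ref{zerothorderalgo}) propagate into the outer return estimates $\tilde J^{(\ell)}_i(K^i_m)$ and hence into $\tilde\nabla\mathcal{L}$, but \Cref{lem:metagradestimate} already packages this, so it enters the present proof only as a black box.
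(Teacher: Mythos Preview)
Your proposal is essentially the paper's approach: descent lemma from local smoothness of $\mathcal{L}$, split $\tilde\nabla\mathcal{L} = \nabla\mathcal{L} + e_n$, control $\|e_n\|_F$ via \Cref{lem:metagradestimate}, derive a meta-level PL inequality with constant $\bar\lambda_i(1-\eta^2\bar h_H^2)$, unroll the contraction, and union-bound over $N$ iterations. The one place where you are deliberately vague---the ``main obstacle'' of passing from task-wise gradient domination to a bound on $\mathcal{L}(K)-\mathcal{L}(K^\star)$---is resolved in the paper by a single clean observation you may be missing: since $K^*_i$ minimizes $J_i$, one has $J_i(K^*_i)\le J_i\bigl(K^\star-\eta\nabla J_i(K^\star)\bigr)$, so after applying \Cref{lem:graddom} at the adapted point $K'=K-\eta\nabla J_i(K)$ one simply replaces $J_i(K^*_i)$ by $J_i\bigl((K^\star)'\bigr)$ and averages over $i$ to recover $\mathcal{L}(K)-\mathcal{L}(K^\star)$; no invertibility of $(I-\eta\nabla^2 J_i)$ or heterogeneity correction is needed beyond the factor $(1-\eta^2\bar h_H^2)$ you already identified.
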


\section{NUMERICAL RESULTS}
We consider three cases of state and control dimensions in the numerical example, but due to computational limits, we consider a moderate system collection size $I = 5$.
The collection of systems is randomly generated to behave ``similarly'', in the sense that the stabilizing sublevel set is admissible for some given initial controller. 
Specifically, we sample matrices $A_0, B_0, Q_0, R_0, \Psi_0$ from uniform distributions, and adjust $A_0$ so that $\rho(A_0) < 1$, adjust $Q_0, R_0, \Psi_0$ to be symmetric and positive definite. 
Then, we sample the rest of systems $i$ independently such that their system matrices are centered around $A_0, B_0, Q_0, R_0, \Psi_0$, (for example $[A_i]_{m,n} \sim \mathcal{N}([A_0]_{m,n}, 0.25)$ for some $i$, $m$ and $n$.) and follow the same procedure to make $\rho(A_i) < 1$ and $Q_i, R_i, \Psi_i$ positive definite.

\begin{figure}[htbp]
    \centering
    \includegraphics[width = .8\textwidth]{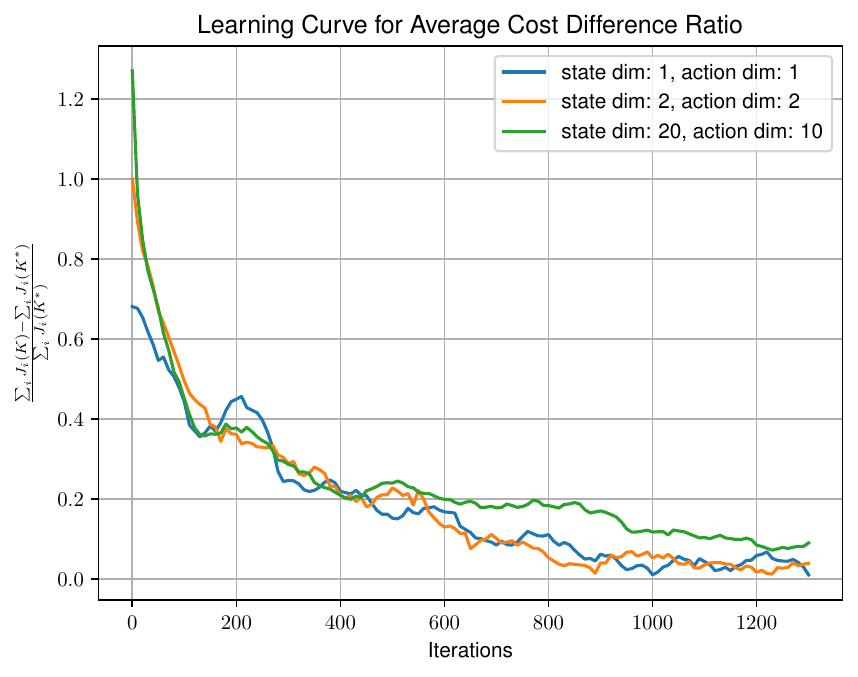}
    \caption{The plot shows three curves encapsulating the changing of average performance during gradient descent, each corresponds to a particular dimension setting of state and action space, (green: $d = 20, k = 10$, orange: $d = 2, k = 2$, blue: $d= 1, k =1$.) constant learning rates $\alpha = 1e-3$, $\eta = 1e-5$ for orange and blue cases and $\alpha = 1e-5$, $\eta = 1e-7$ for green curve, numbers of meta and inner perturbation $ M= 100$, gradient smooth parameter $r = 0.05$, roll out length $\ell = 50$.}
    \label{plot}
\end{figure}

We report the learning curves for average cost difference ratio $\frac{\sum_{i \in [I]} J_i(K_n) - J_i (K^{*}_i)}{\sum_{i \in [I]} J_i(K^{*}_i) }$, this quantity captures the performance difference between a one-fits-all policy and the optimal policy in an average sense. Fig. \ref{plot}. demonstrates the evolution of this quantity during learning for three cases. Overall, despite that there are oscillations due to the randomness of meta-gradient estimators, the ratios become sufficiently small after adequate iterations, which implies the effectiveness of the algorithm.

\section{CONCLUSIONS}
In this paper, we investigate a zeroth-order meta-policy optimization approach for model-agnostic LQRs. Drawing inspiration from MAML, we formulate the objective \eqref{obj} with the goal of refining a policy that achieves strong performance across a set of LQR problems using direct gradient methods.
Our proposed method bypasses the estimation of the policy Hessian, mitigating potential issues of instability and high variance. We analyze the conditions for meta-learnability and establish finite-time convergence guarantees for the proposed algorithm.
To empirically assess its effectiveness, we present numerical experiments demonstrating promising performance under the average cost difference ratio metric.
A promising direction for future research is to derive sharper bounds on the iteration and sample complexity of the proposed approach and explore potential improvements.







\section*{ACKNOWLEDGMENT}

We gratefully acknowledge Leonardo F. Toso from Columbia University for his indispensable insights into the technical details of this work, and we thank Prof. Ba\c{s}ar for his invaluable discussions during the second author's visit to University of Illinois Urbana-Champaign.


\appendix
\section*{APPENDIX}

In the following, we present the formal proofs and technical details supporting our main findings. To achieve this, we first give the elementary proof for the gradient and Hessian expression of the LQR cost. 

\label{appendix}
\proof[Proof of Prop. \ref{prop1}]{
 For arbitrary system $i$, consider a stable policy $K$ such that $\rho(A_i-B_iK) < 1$, define operator $\mathcal{T}_K(\Sigma)$ by:
 \begin{equation*}
      \mathcal{T}^i_K(\Sigma) = \sum_{t\geq 0} (A_i-B_i K)^t \Sigma [(A_i-B_i K)^t]^{\top} .
 \end{equation*}
 Here, $\mathcal{T}^i_K$ is an adjoint operator, observing that for any two symmetric positive definite matrices $\Sigma_1$ and $\Sigma_2$, we have
 \begin{equation*}
    \begin{aligned}
         \operatorname{Tr}( \Sigma_1 \mathcal{T}^i_K(\Sigma_2)) & =  \operatorname{Tr}(\sum_{t\geq 0} \Sigma_1 (A_i-B_i K)^t \Sigma_2 [(A_i-B_i K)^t]^{\top}) \\
         & = \operatorname{Tr} (\sum_{t\geq 0} [(A_i-B_i K)^t]^{\top} \Sigma_1 (A_i-B_i K)^t \Sigma_2 )\\ & = \operatorname{Tr}(\mathcal{T}^{i\top}_K(\Sigma_1) \Sigma_2)
    \end{aligned}
 \end{equation*}
 Meanwhile, since we know that $\Sigma_K^i$ satisfies recursion \eqref{xgramian},
$\Sigma^i_K = \mathcal{T}_K^i(\Psi) .$
Thus the average cost of $K$ for system $i$ can be written as
\begin{equation*}
\begin{aligned}
         J_i(K) &=  \operatorname{Tr}\left[\left(Q_i+K^{\top} R_i K\right) \cdot \Sigma^i_{K}\right] \\ &=\operatorname{Tr}\left[\left(Q_i + K^{\top} R_i K\right) \cdot \mathcal{T}_{K}^i\left(\Psi \right)\right]  \\ & =\operatorname{Tr}\left[\mathcal{T}_{K}^{i\top}\left(Q_i+K^{\top} R_i K\right) \cdot \Psi \right]=\operatorname{Tr}\left(P^i_{K} \Psi\right).
\end{aligned}
\end{equation*}
By rule of product:
\begin{equation*}
\begin{aligned}
  \nabla J_i(K) & = 2R_i K\Sigma_K^i + \nabla \operatorname{Tr}(Q_i^{\prime} \mathcal{T}_K^i(\Psi))|_{Q^{\prime} = Q_i + K^{\top}R_i K} 
\end{aligned}
\end{equation*}
Here, we derive the expression for the second term. 
For symmetric positive definite matrix $\Sigma$, define operator $\Gamma^i_K(\Sigma) := (A_i - B_i K) \Sigma (A_i - B_i K)^{\top}$, we have 
\begin{equation*}
  Q^{\prime}_i \mathcal{T}^i_K (\Sigma^i_K) = Q^{\prime}_i\Psi +  \Gamma^i_K ( \mathcal{T}_K^i(\Sigma_K^i)),
\end{equation*}
and 
$\mathcal{T}^i_K  (\Sigma)= \sum_{t = 0}^{\infty} (\Gamma^i_K)^t(\Sigma).$
Since $\mathcal{T}_K^i$ is linear and adjoint
$$
 \operatorname{Tr}(Q_i^{\prime} \mathcal{T}_K^i(\Psi)) = \operatorname{Tr}(Q_i^{\prime} \Psi) + \operatorname{Tr}( \Gamma^{i\top}_K(Q^{\prime}_i)\mathcal{T}_K^{i\top}(\Psi)).
$$
Take derivative on both sides and unfold the right-hand side:
\begin{equation*}
\begin{aligned}
    \nabla \operatorname{Tr}(Q_i^{\prime} \mathcal{T}_K^i &(\Psi))  = \nabla \operatorname{Tr}(Q_i^{\prime} \Psi ) + \nabla \operatorname{Tr} (\Gamma^{i \top}_K(Q^{\prime}_i) )  \\ & \quad  \quad + \nabla \operatorname{Tr}(Q^{\prime \prime}_i \mathcal{T}_K^i(\Psi)) |_{Q^{\prime\prime} = \Gamma_K^{i\top}(Q^{\prime}_i)} \\
    & = -2B_i^{\top}[\sum_{t = 0}^{\infty} (\Gamma^{i,\top}_K)^t(Q^{\prime}_i)](A_i -B_i K)\mathcal{T}_K^i(\Psi) \\
    & = -2B_i^{\top} \mathcal{T}_K^{i, \top}(Q_i + K^{\top} R_i K) (A_i - B_i K) \Sigma^i_K, 
\end{aligned}
\end{equation*}
where we leverage the condition that spectrum $\rho(A_i - B_i K) < 1$, by which we have:
\begin{equation*}
    \operatorname{Tr}((\Gamma^{i,\top}_K)^t  Q^{\prime}_i) \leq \| Q_i^{\prime} \| \| A_i - B_i K\|^{2 t}  \underset{t \to \infty}{ \rightarrow }0, 
\end{equation*}
thus the series converge.
Combining with the fact that $P_K^i$ is actually the solution to the fixed point equation:
$   P^i_K = \mathcal{T}_K^i (Q_i + K^{\top} R_i K)$,
we get the desired result. 
\begin{equation*}
\begin{aligned}
   \nabla J_i(K) & = 2\left[\left(R_i+B_i^{\top} P_{K} B_i\right) K-B_i^{\top} P^i_{K} A_i\right] \Sigma^i_{K}. 
\end{aligned}
\end{equation*}

Now, we let the the Hessian $\nabla^2 J_i (K)[K]$ act on an arbitrary $X \in \R^{d\times k}$, decomposing the gradient $\nabla J_i (K) = f_1(K) f_2(K)$, we have:
\begin{align*}
    \nabla^2 J_i(K)  & = f^{\prime}_1(K) f_2(K) + f_1(K) f_2^{\prime}(K)  \\
  \nabla^2 J_i (K) [X]  & = f^{\prime}_1(K)[X] f_2(K) [X] + f_1(K)[X] f_2^{\prime}(K)[X].
\end{align*}
Hence, 
\begin{align*}
    f^{\prime}_1(K) f_2(K)[X, X] & = 2\left\langle\left(R_i X +B^{\top}_i X B_i X -B^{\top}_i P^{i, \prime}_K(K)[X] (A_i - B_i K)\right) \Sigma^i_K, X\right\rangle \\
    f_1(K) f^{\prime}_2(K)[X, X] & = 2\left\langle\left(R_i K-B^{\top}_i P^i_K (A_i - B_i K)\right)\Sigma^{i, \prime}_K(K)[X], X\right\rangle
\end{align*}
where $P^{i, \prime}_K [X]$ satisfies, let $A_K = A_i - B_i K$:
\begin{equation*}
     A_{K}^{\top} P^i_K(-B X)+(-B_i X)^{\top} P^i_K A_{K}+A_{K}^{\top}\left(P^{i,\prime}_K(K)[E]\right) A_{K}+X^{\top} R_i K+K^{\top} R_i X=P^{i,\prime}_K(K)[E]
\end{equation*}
and 
\begin{equation*}
    \Sigma^{i,\prime}_K(K)[X]=(-B_i X) \Sigma^i_K A_{K}^{\top}+A_{K} \Sigma^i_K(-B_i X)^{\top}+A_{K}\left(\Sigma^{i,\prime}_K(K)[X]\right) A_{K}^{\top}. 
\end{equation*}
Observing that the above expressions can be written as:
\begin{align*}
    P^{i,\prime}_K(K)[X]&=\sum_{j=0}^{\infty}\left(A_{K}^{\top}\right)^{j}\left(\left(K^{\top} R_i-A_{K}^{\top} P^i_K B_i\right) X+X^{\top}\left(R_i K-B_i^{\top} P^i_K A_{K}\right)\right)\left(A_{K}\right)^{j}, \\
    \Sigma^{i,\prime}_K(K)[X] &=\sum_{j=0}^{\infty}\left(A_{K}\right)^{j}\left(-B_i X \Sigma^i_K A_{K}^{\top}-A_{K} \Sigma^i_K X^{\top} B_i^{\top}\right)\left(A_{K}^{\top}\right)^{j} ,
\end{align*}
if $K$ is a stable policy. With the cyclic property of the matrix trace, we observe that:
\begin{equation*}
     \left.\left\langle B^{\top}_i\left(P^{i,\prime}_K(K)[X]\right) A_{K}\right) \Sigma^i_K, X\right\rangle=\left\langle\left(B^{\top} P^i_K A_{K}-R_i K\right)\left(\Sigma^{i,\prime}_K(K)[X]\right), X\right\rangle, 
\end{equation*}
and hence simplifying the expression as:
\begin{equation*}
    \nabla^2 J_i(K) = 2 (R_i X + B_i^{\top} P^i_K B X) \Sigma^i_K - 4 (B_i^{\top} P^{i,\prime}_K(K)[X] (A_i -B_iK)) \Sigma^i_K. 
\end{equation*}
Since $\nabla^2 J_i(K)$ is self adjoint, it is not hard to characterize the operator norm as
\begin{equation*}
     \|\nabla J_i (K) \|^2 = \sup_{\|X\|_F = 1 } \| \nabla^2 J_i(K) [X]\|^2_F = \sup_{\|X\|_F = 1 } \left( \nabla^2 J_i(K) [X, X] \right) ^2.
\end{equation*}
\qedhere
}

\section{Auxiliary Results}  \label{sec:appa}
This section presents several essential lemmas and norm inequalities that serve as fundamental tools in analyzing the stability and convergence properties of the learning framework, which have been also frequently revisited in the literature. 
These results essentially capture the local smoothness and boundedness properties of the costs and gradients for LQR tasks, we explicitly define the positive polynomials $h_G(K), h_c(K), h_H(K), h_{\Delta}(K), h_{{cost}}(K)$,  $h_{{grad}}(K)$, $h_{\mathcal{L}, G}(K)$, and $h_{\mathcal{L}, grad}(K)$ which are slightly adjusted version of those in \cite{toso2024meta,wang2023fedsysid}.

Throughout the paper, we use $\bar{\cdot}$ and $\underline{\cdot}$ to denote the supremum and infimum of some positive polynomials, e.g., $\bar{h}:= \sup_{K \in \mathcal{S}} {h}(K)$ and $\underline{h}:= \inf_{K \in \mathcal{S}} {h}(K)$ are the supremum and infimum of $h(K)$ over the set of stabilizing controllers $\mathcal{S}$, when we consider a set of $M$ matrices $\{A_i\}_{i=1}^M$, we denote $\|A\|_{\max} := \max_{i} \|A_i\|$, and  $\|A\|_{\min} := \min_{i} ||A_i||$. 

We may repeatedly employ Young's inequality and Jensen's inequality:
\begin{itemize}
    \item (Young's inequality)Given any two matrices $A, B \in \mathbb{R}^{n_x\times n_u}$, for any $\beta>0$, we have
\begin{align}\label{eq:youngs}
\|A+B\|_2^2 &\leq(1+\beta)\|A\|_2^2+\left(1+\frac{1}{\beta}\right)\|B\|_2^2 \leq (1+\beta)\|A\|_F^2+\left(1+\frac{1}{\beta}\right)\|B\|_F^2.
\end{align}
Moreover, given any two matrices $A, B$ of the same dimensions,  for any $\beta>0$, we have
\begin{align}\label{eq:youngs_inner_product}
\langle A, B\rangle & \leq \frac{\beta}{2}\lVert A\rVert_2^2 +\frac{1}{2\beta}\lVert B \rVert_2^2  \leq  \frac{\beta}{2}\lVert A\rVert_F^2 +\frac{1}{2\beta}\lVert B \rVert_F^2.
\end{align}
\item (Jensen's inequality) Given $M$ matrices $A^{(1)}, \ldots, A^{(M)}$ of identical dimensions, we have that 
\begin{align}\label{eq:sum_expand}
\left\|\sum_{i=1}^M A^{(i)}\right\|_2^2 \leq M \sum_{i=1}^M\left\|A^{(i)}\right\|_2^2, 
\left\|\sum_{i=1}^M A^{(i)}\right\|_F^2 \leq M \sum_{i=1}^M\left\|A^{(i)}\right\|_F^2.
\end{align}
\end{itemize}

\begin{lemma}[Uniform bounds \cite{toso2024meta}]\label{applem:uniformbounds}
Given a LQR task $\mathcal{T}_i$ and an stabilizing controller $K \in \mathcal{S}$, the Frobenius norm of gradient $\nabla J_i(K)$, Hessian $\nabla^2 J_i(K)$ and control gain $K$ can be bounded as follows:
$$
\|\nabla J_i(K)\|_F \leq h_G(K), \text{ } \|\nabla^2 J_i(K)\|_F \leq h_H(K), \text { and } \|K\|_F \leq h_c(K),
$$
with 
\begin{align*}
    h_G(K)&= \frac{J_{\max}(K) \sqrt{\frac{\max_{i}\left\|R_i+B^{\top}_i P^i_K B_i\right\|\left(J_{\max}(K)-J_{\min}\right)}{\mu}}}{\min_i\sigma_{\min}(Q_i)},\\
    h_H(K)&= \left(2\|R\|_{\max} + \frac{2\|B\|_{\max}{J}_{\max}(K)}{\mu} + \frac{4\sqrt{2}\tilde{\xi}_{\max}\|B\|_{\max}{J}_{\max}(K)}{\mu}\right)\frac{{J}_{\max}(K)k}{\|Q\|_{\min}}, \\
    h_c(K)&=\frac{\sqrt{\frac{\max_{i}\left\|R_i+B^{\top}_i P^{i}_K B_i\right\|\left(J_{\max}(K)-J_{\min}\right)}{\mu}}+\left\|B^{\top}_i P^i_K A_i\right\|_{\max}}{\sigma_{\min}(R)},
\end{align*}
with $\tilde{\xi}_{\max} := \frac{1}{\|Q\|_{\min}}\left( \frac{(1 + \|B\|^2_{\max}){J}_{\max}(K_0)}{\mu} + \|R\|_{\max} - 1\right)$.
 \end{lemma}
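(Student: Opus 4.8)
The plan is to reduce all three bounds to a small collection of uniform estimates on the objects appearing in the closed forms of Proposition~\ref{prop1} --- the state Gramian $\Sigma^i_K$, the value matrix $P^i_K$, the natural-gradient residual $E^i_K$, and the closed-loop matrix $A_i - B_i K$ --- valid for every $K \in \mathcal{S}$ and $i \in [I]$; the claimed polynomials then come out by combining these with the triangle inequality, submultiplicativity, and the Young/Jensen inequalities recorded above. The four base estimates are: (i) since $\Sigma^i_K = \Psi + (A_i - B_i K)\Sigma^i_K (A_i - B_i K)^{\top} \succeq \Psi$ and $J_i(K) = \operatorname{Tr}[(Q_i + K^{\top} R_i K)\Sigma^i_K] \geq \sigma_{\min}(Q_i)\operatorname{Tr}(\Sigma^i_K)$, one gets $\|\Sigma^i_K\| \leq J_i(K)/\sigma_{\min}(Q_i) \leq J_{\max}(K)/\min_i \sigma_{\min}(Q_i)$, where $J_{\max}(K) := \max_i J_i(K)$; (ii) from $J_i(K) = \operatorname{Tr}(P^i_K \Psi) \geq \mu \operatorname{Tr}(P^i_K)$ one gets $\|P^i_K\| \leq J_{\max}(K)/\mu$; (iii) the gradient-domination lemma (\ref{lem:graddom}) gives $\operatorname{Tr}(E^{i,\top}_K E^i_K) \leq \|R_i + B_i^{\top} P^i_K B_i\|(J_i(K) - J_i(K^\star_i))/\mu$, so $\|E^i_K\|_F \leq \sqrt{\max_i \|R_i + B_i^{\top} P^i_K B_i\|(J_{\max}(K) - J_{\min})/\mu}$ with $J_{\min} := \min_i J_i(K^\star_i)$; and (iv) from $\mu (A_i - B_i K)(A_i - B_i K)^{\top} \preceq (A_i - B_i K)\Sigma^i_K (A_i - B_i K)^{\top} = \Sigma^i_K - \Psi \preceq \Sigma^i_K$, one gets $\|A_i - B_i K\|^2 \leq \|\Sigma^i_K\|/\mu$. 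Finally, on $\mathcal{S}$ the sublevel-set definition gives $J_i(K) \leq J_i(K^\star_i) + \gamma_i \Delta^i_0$, so each $J_{\max}(K)$ may be replaced by a constant depending only on the initialization; this is what allows $J_{\max}(K_0)$ to enter $\tilde{\xi}_{\max}$.

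\paragraph{Gradient and control-gain bounds.} By Proposition~\ref{prop1}, $\nabla J_i(K) = 2 E^i_K \Sigma^i_K$, so $\|\nabla J_i(K)\|_F \leq 2\|E^i_K\|_F\,\|\Sigma^i_K\|$; substituting (i) and (iii) yields $h_G(K)$ (up to the universal constant absorbed into the cited form). For $\|K\|_F$, rearrange the definition of $E^i_K$ to $K = (R_i + B_i^{\top} P^i_K B_i)^{-1}(E^i_K + B_i^{\top} P^i_K A_i)$; since $R_i + B_i^{\top} P^i_K B_i \succeq R_i$ we have $\sigma_{\min}(R_i + B_i^{\top} P^i_K B_i) \geq \sigma_{\min}(R_i)$, and the triangle inequality together with (iii) and a uniform bound on $\|B_i^{\top} P^i_K A_i\|$ (via (ii)) gives $h_c(K)$ after taking the worst case over $i$.

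\paragraph{Hessian bound.} Start from the operator form $\nabla^2 J_i(K)[X] = 2(R_i + B_i^{\top} P^i_K B_i)X\Sigma^i_K - 4 B_i^{\top} \tilde{P}^i_K[X](A_i - B_i K)\Sigma^i_K$ and use the characterization $\|\nabla^2 J_i(K)\| = \sup_{\|X\|_F = 1}\|\nabla^2 J_i(K)[X]\|_F$ established in the proof of Proposition~\ref{prop1}, bounding the two terms separately for unit $X$. The first term is controlled by $\|R_i + B_i^{\top} P^i_K B_i\| \leq \|R\|_{\max} + \|B\|^2_{\max}\|P^i_K\|$ times $\|\Sigma^i_K\|$, which by (i)--(ii) produces the $2\|R\|_{\max}$ and $2\|B\|_{\max} J_{\max}(K)/\mu$ contributions. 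For the second term, $\tilde{P}^i_K[X]$ solves $\tilde{P}^i_K[X] = (A_i - B_i K)^{\top}\tilde{P}^i_K[X](A_i - B_i K) + X^{\top} E^i_K + E^{i,\top}_K X$, hence admits the series form $\tilde{P}^i_K[X] = \sum_{j \geq 0}((A_i - B_i K)^{\top})^j (X^{\top} E^i_K + E^{i,\top}_K X)(A_i - B_i K)^j$, so $\|\tilde{P}^i_K[X]\| \leq 2\|E^i_K\|_F\,\big\|\sum_{j\geq 0}((A_i - B_i K)^{\top})^j(A_i - B_i K)^j\big\|$; the last operator equals $\mathcal{T}^i_K(I)$ and satisfies $\mu\,\mathcal{T}^i_K(I) \preceq \mathcal{T}^i_K(\Psi) = \Sigma^i_K$, so $\|\mathcal{T}^i_K(I)\| \leq \|\Sigma^i_K\|/\mu$. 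Combining this with (iv), plugging into the $4 B_i^{\top} \tilde{P}^i_K[X](A_i - B_i K)\Sigma^i_K$ term, performing the operator-to-Frobenius norm conversion (which is where the extra factor $k$ appears), and taking suprema over $i \in [I]$, gives a bound of exactly the form $\big(2\|R\|_{\max} + c_1\|B\|_{\max} J_{\max}(K)/\mu + c_2 \tilde{\xi}_{\max}\|B\|_{\max} J_{\max}(K)/\mu\big)\,J_{\max}(K)\,k/\|Q\|_{\min}$, i.e.\ $h_H(K)$, where $\tilde{\xi}_{\max}$ is precisely the uniform bound on $\|E^i_K\|$ (equivalently on $\|\Sigma^i_K\|/\mu$, re-expressed through $J_{\max}(K_0)$) that (i), (iii), and sublevel-set containment supply.

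\paragraph{Main obstacle.} The difficult part is the Hessian estimate: bookkeeping the implicit operators $\tilde{P}^i_K[\cdot]$ and the Lyapunov operator $\mathcal{T}^i_K$, converting all operator-norm estimates into Frobenius-norm estimates while matching the stated constants (including the precise origin of the factor $k$ and the constants hidden in $\tilde{\xi}_{\max}$), and justifying that every $K$-dependent quantity may be replaced by its supremum over $\mathcal{S}$ --- which rests on Definition~\ref{def:stabilizing_set} and the monotone decrease of each per-task cost along the iterates. Everything else follows directly from Proposition~\ref{prop1}, Lemma~\ref{lem:graddom}, and elementary norm inequalities, along the lines of \cite{toso2024meta,wang2023fedsysid,bu2019lqrlensordermethods}.
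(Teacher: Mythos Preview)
Your proposal is correct and in fact supplies considerably more detail than the paper itself: the paper's ``proof'' of this lemma is purely a citation (``See \cite{fazel2018global,wang2023fedsysid}. For $\|\nabla^2 J_i\|_F$, see \cite[Lemma 7.9]{bu2019lqrlensordermethods}''), with no argument given. Your sketch --- bounding $\Sigma^i_K$, $P^i_K$, $E^i_K$, and $A_i - B_iK$ from the Lyapunov/Riccati identities and gradient domination, then assembling the three stated bounds from the closed forms in Proposition~\ref{prop1} --- is exactly the line of reasoning those references carry out, so there is no substantive difference in approach to report.
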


\begin{proof} See \cite{fazel2018global,wang2023fedsysid}. For $\|\nabla^2 J_i\|_F$, see in \cite[Lemma 7.9]{bu2019lqrlensordermethods}. 
\end{proof}

\begin{lemma}[Perturbation Analysis \cite{toso2024meta,musavi2023convergence}]\label{applem:perturbationanalysis}
    Let $K, K^{\prime} \in \mathcal{S} $ such that $\|\Delta\| := \|K^{\prime} -K\|  \leq h_{\Delta}(K)\ <\infty$, then, we have the following set of local smoothness properties:
    \begin{equation*}
        \begin{aligned}
             &\left|J_i\left(K^{\prime}\right)-J_i(K)\right| \leq h_{\text {cost}}(K) J_i(K) \|\Delta\|_F, \\
&\left\|\nabla J_i\left(K^{\prime}\right)-\nabla J_i(K)\right\|_F \leq h_{\text {grad}}(K)\|\Delta\|_F, \\
&\left\|\nabla^2 J_i\left(K^{\prime}\right)-\nabla^2 J_i(K)\right\|_F \leq h_{\text {hess }}(K)\|\Delta\|_F,
        \end{aligned}
    \end{equation*}
    for all tasks $i \in \mathcal{T}$,
    where the problem-dependent parameters $h_{\text {cost}}(K), h_{\text {grad}}(K), h_{\text {hess}}(K)$ are listed as follows:
    \begin{align*}
    h_{\Delta}(K) &= \frac{\max_i \sigma_{\min}(Q_i) \mu}{4 ||B||_{\max} J_{\max}(K)\left(\left\|A-BK\right\|_{\max}+1\right)}, \\
    h_{\text {cost}}(K) &= \frac{4 \operatorname{Tr}\left(\Sigma_0\right)J_{\max}(K)\|R\|_{\max}}{\mu\min_i \sigma_{\min}\left(Q_i\right)}\left(\|K\|+\frac{h_{\Delta}(K)}{2}+\|B\|_{\max}\|K\|^2 \left(\left\|A -BK\right\|_{\max}+1\right) \nu(K)\right), \\
    h_{\text {hess }}(K) &= \sup _{\|X\|_F=1}2(h_1(K)+ 2h_2(K))\|X\|^2_F,\\
    h_{\text {grad}}(K) &= 4\left(\frac{J_{\max}(K)}{\min_i\sigma_{\min}(Q)}\right)\Big[\|R\|_{\max}+\|B\|_{\max}\left(\|A\|_{\max}+\|B\|_{\max}\left(\|K\|+h_{\Delta}(K)\right)\right)\nonumber\\
 &\times\left(\frac{h_{\text {cost }}(K) J_{\max}(K)}{\operatorname{Tr}\left(\Sigma_0\right)}\right)+ \|B\|^2_{\max}\frac{J_
{\max}(K)}{\mu}\Big]\\
&+8\left(\frac{J_{\max}(K)}{\min_i\sigma_{\min}(Q)}\right)^2\left(\frac{\|B\|_{\max}\left(\left\|A-BK\right\|_{\max}+1\right)}{\mu}\right)h_0(K).
\end{align*}
with $\nu(K) = \frac{J_{\max}(K)}{\min_i \sigma_{\min}(Q_i)\mu}$, $h_0(K) = \sqrt{\frac{\max_{i}\left\|R_i+B^{(i)\top} P^i_K B_i\right\|\left(J_{\max}(K)-J_{\min}\right)}{\mu}},$ and 
    \begin{align*}
    h_1(K) &= h_3(K)\|B\|^2_{\max} \frac{J_{\max}(K)}{\min_i\sigma_{\min }(Q_i)} + \Tilde{\mu} h_4(K)\|B\|_{\max} \frac{J_{\max}(K)}{\mu} + h_4(K) \max_i \operatorname{Tr}(R_i),\\
    h_2(K) &= \|B\|_{\max} J_{\max}(K)\left(\frac{h_6(K) h_4(K) \max_i \operatorname{Tr}\left(A_i - B_iK\right)}{\mu} + \|B\|_{\max} h_6(K) \Tilde{\mu} \nu(K)\right. \\ &\left.\hspace{9.7cm}+\frac{\Tilde{\mu} h_7(K)}{\min_i\sigma_{\min }(Q_i)}\right),\\
    h_3(K) &= 6\left(\frac{J_{\max}(K)}{\min_i \sigma_{\min }(Q_i)}\right)^2\|K\|^2\|R\|_{\max}\|B\|_{\max}(\|A-B K\|_{\max}+1 )\\
    &+6\left(\frac{J_{\max}(K)}{\min_i \sigma_{\min }(Q_i)}\right)\|K\|\|R\|_{\max},\\
    h_4(K) &= 4\left(\frac{J_{\max}(K)}{\min_i\sigma_{\min }(Q_i)}\right)^2 \frac{\|B\|_{\max}(\|A-B K\|_{\max}+1)}{\mu},\\
    h_6(K) &= \sqrt{\frac{1}{\min_i \sigma_{\min }(Q_i)}\left(\|R\|_{\max}+\frac{1+\|B\|^2_{\max}}{\mu} J_{\max}(K)\right)-1},\\
    h_7(K) &= 4\left(\nu(K) h_8(K) +8\nu^2(K)\|B\|_{\max}\left(\left\|A - BK\right\|_{\max}+1\right) h_9(K)\right),\\
    h_8(K)&=\|R\|_{\max}+\|B\|^2_{\max} \frac{J_{\max}(K)}{\mu} +\left(\|B\|_{\max}\|A\|_{\max}+\|B\|^2_{\max}\|K\|_{\max}\right) h_3(K),\\
    h_9(K)&= 2\left(\|R\|_{\max}\|K\|+\|B\|_{\max}\|A-B K\|_{\max} \frac{J_{\max}(K)}{\mu}\right).
\end{align*}
where  $\Tilde{\mu} = 1+\frac{\mu}{h_{\Delta}(K)}$.
\end{lemma}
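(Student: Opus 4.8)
The plan is to establish the three inequalities one at a time, leveraging the closed-form characterizations from Proposition \ref{prop1} together with the uniform bounds of \Cref{applem:uniformbounds}, and then to obtain the stated parameters by taking suprema/infima over the task set $\mathcal{T}$ and over $\mathcal{S}$. Throughout, I would fix an arbitrary task $i$, write $A_K := A_i - B_i K$ and $A_{K'} := A_i - B_i K'$, and note $\|A_K - A_{K'}\|_F = \|B_i(K'-K)\|_F \le \|B\|_{\max}\|\Delta\|_F$, which is the single scalar quantity all the perturbation arguments will be expressed in terms of. The condition $\|\Delta\|_F \le h_\Delta(K)$ is chosen precisely so that $K'$ is still stabilizing (so $\rho(A_{K'})<1$) and so that the various Neumann-type series below remain uniformly convergent; I would verify this first via a standard argument (e.g. as in \cite{fazel2018global, bu2019lqrlensordermethods}) showing $\|A_{K'}\|$ stays bounded away from $1$ on the prescribed ball.

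For the cost bound, I would write $J_i(K) = \Tr(P^i_K \Psi)$ and express $P^i_K - P^i_{K'}$ as the solution of a Lyapunov-type recursion driven by the difference of the data $(Q_i + K^\top R_i K) - (Q_i + K'^\top R_i K')$ and by the difference $A_K - A_{K'}$; expanding $\mathcal{T}^i_K$ and $\mathcal{T}^i_{K'}$ as geometric series in $\Gamma^i_K$ resp. $\Gamma^i_{K'}$ and telescoping, the cost difference becomes a sum of terms each carrying one factor of $\|\Delta\|_F$ and the remaining factors bounded by $J_{\max}(K)$, $\|R\|_{\max}$, $\|B\|_{\max}$, $\|K\|$, $\operatorname{Tr}(\Sigma_0)$ and $\nu(K)$ — collecting these reproduces $h_{\text{cost}}(K)$. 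For the gradient bound I would use $\nabla J_i(K) = 2E^i_K\Sigma^i_K$, add and subtract to split $\nabla J_i(K')-\nabla J_i(K)$ into a term controlled by $\|E^i_{K'}-E^i_K\|_F$ (which in turn decomposes through $\|P^i_{K'}-P^i_K\|_F$, already handled above, plus the explicit $K$-dependence of $E$) and a term controlled by $\|\Sigma^i_{K'}-\Sigma^i_K\|_F$ (again a Lyapunov perturbation); bounding each factor uniformly gives the two summands in $h_{\text{grad}}(K)$, the second of which picks up the extra $h_0(K)$ from bounding $\|E^i_K\|_F$. The Hessian bound is the most involved: starting from the operator formula $\nabla^2 J_i(K)[X] = 2(R_i + B_i^\top P^i_K B_i)X\Sigma^i_K - 4B_i^\top \tilde P^i_K[X](A_i - B_iK)\Sigma^i_K$ and the series representations for $P^{i,\prime}_K[X]$ and $\Sigma^{i,\prime}_K[X]$ derived in the proof of Proposition \ref{prop1}, I would differentiate the perturbation estimates of $P^i_K$, $\Sigma^i_K$, and $E^i_K$ one further order, organizing the many resulting terms into the auxiliary polynomials $h_1,\dots,h_9$ exactly as listed, and take the supremum over $\|X\|_F=1$.

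The main obstacle is bookkeeping rather than conceptual: carefully propagating a single perturbation $\|\Delta\|_F$ through the three nested maps $K \mapsto A_K \mapsto \{\mathcal{T}^i_K, \Sigma^i_K, P^i_K\} \mapsto \{\text{their Fréchet derivatives}\}$ without losing uniformity in $i$, and ensuring at each telescoping step that the "leftover" operators are evaluated at a point ($K$ or $K'$) where a uniform bound from \Cref{applem:uniformbounds} applies — this is what forces the $h_\Delta(K)$ ball and what dictates the precise form of each $h_j$. Since \Cref{lem:perturbationanalysis} and its constituents are adaptations of results already established in \cite{fazel2018global, bu2019lqrlensordermethods, musavi2023convergence, toso2024meta}, I would import those single-task estimates and only spell out the modifications needed to replace task-specific quantities by $\max_i$/$\min_i$ versions (e.g. $\sigma_{\min}(Q_i)$ by $\min_i \sigma_{\min}(Q_i)$, $\|R_i\|$ by $\|R\|_{\max}$), which is exactly the content of the claimed parameter list.
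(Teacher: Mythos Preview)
Your proposal is correct and aligns with the standard route taken in the references the paper cites; indeed, the paper's own proof of this lemma consists entirely of the sentence ``See \cite[Appendix F]{wang2023fedsysid} and \cite[Lemma 7]{musavi2023convergence},'' so there is no in-paper argument to compare against beyond the citation. Your sketch---Lyapunov/Neumann-series perturbation of $P^i_K$ and $\Sigma^i_K$, the $E^i_K\Sigma^i_K$ decomposition for the gradient, one further differentiation for the Hessian, and replacing task-specific constants by $\max_i/\min_i$ versions---is exactly the content of those references, and you have correctly identified that the only new ingredient here is the uniformization over $i\in[I]$.
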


\begin{proof}
    See  \cite[Appendix F]{wang2023fedsysid} and \cite[Lemma 7]{musavi2023convergence}. 
\end{proof}

\begin{lemma}[Gradient Domination]\label{applem:graddom}
For any system $i$, let $K^*_i$ be the optimal policy, let $K^\star$ be the MAML-optimal policy. Suppose $K \in \mathcal{S}$. Then, it holds that 
\begin{equation*}
   \begin{aligned}
   J_i(K)-J_i\left(K^{*}_i\right)  & \geq \mu \cdot\frac{\operatorname{Tr} \left(E_{K}^{i,\top} E^i_{K}\right)}{\left\|R_i+B_i^{\top} P^i_{K} B_i\right\| }  \\ J_i(K)-J_i\left(K^{*}_i\right) & \leq \frac{1} { \sigma_{\min }(R_i)} \cdot\left\|\Sigma^i_{K^{*}}\right\| \cdot \operatorname{Tr}\left(E_{K}^{i,\top} E^i_{K}\right) \\
 & \leq \frac{\left\|\Sigma^i_{K^{*}}\right\| }{ \mu^2 \sigma_{\min}(R_i)} \|\nabla J_i (K)\|^2_F = : \frac{1}{\lambda_i } \|\nabla J_i (K)\|^2_F 
   \end{aligned}
\end{equation*}
\end{lemma}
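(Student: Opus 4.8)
The common engine for both inequalities is the LQR \emph{cost-difference identity}, which I would establish first (as in \cite{fazel2018global,bu2019lqrlensordermethods}): for any two stabilizing gains $K,K'\in\mathcal{K}_i$, writing $M_K:=R_i+B_i^{\top}P^i_K B_i\succeq R_i\succ 0$ and $\Delta:=K'-K$,
\begin{equation*}
J_i(K')-J_i(K)=\operatorname{Tr}\big(\Sigma^i_{K'}\,[\,\Delta^{\top}M_K\Delta+\Delta^{\top}E^i_K+(E^i_K)^{\top}\Delta\,]\big).
\end{equation*}
To derive it I would use that $x\mapsto x^{\top}P^i_K x$ is the (average-cost relative) value function of the policy $K$ on task $i$ — here $J_i(K)=\operatorname{Tr}(P^i_K\Psi)$ from \Cref{prop1} makes the noise constant cancel in the Poisson equation, which then reduces to the Lyapunov equation $P^i_K=Q_i+K^{\top}R_iK+(A_i-B_iK)^{\top}P^i_K(A_i-B_iK)$. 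Expanding the one-step advantage of playing $u=-K'x$ and then reverting to $K$, and substituting this Lyapunov equation to eliminate $Q_i$, the affine-in-$\Delta$ terms collapse exactly into $E^i_K=M_KK-B_i^{\top}P^i_KA_i$; telescoping the advantage along a trajectory generated by $K'$ and averaging under the stationary law $\rho^i_{K'}$ (whose covariance $\Sigma^i_{K'}$ solves the analogue of \eqref{xgramian}) yields the display. Both bounds then follow by two different choices of $K'$ plus a completion of squares, using repeatedly that $\sigma_{\min}(\Sigma)\operatorname{Tr}(P)\le\operatorname{Tr}(\Sigma P)\le\|\Sigma\|\operatorname{Tr}(P)$ for positive semidefinite $\Sigma,P$.

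\emph{Upper bound.} Take $K'=K^*_i$ and complete the square, $\Delta^{\top}M_K\Delta+\Delta^{\top}E^i_K+(E^i_K)^{\top}\Delta=(\Delta+M_K^{-1}E^i_K)^{\top}M_K(\Delta+M_K^{-1}E^i_K)-(E^i_K)^{\top}M_K^{-1}E^i_K$. Dropping the trace-nonnegative squared term gives $J_i(K)-J_i(K^*_i)\le\operatorname{Tr}(\Sigma^i_{K^*}(E^i_K)^{\top}M_K^{-1}E^i_K)$. Since $M_K\succeq R_i$, we have $M_K^{-1}\preceq\sigma_{\min}(R_i)^{-1}I$, hence $J_i(K)-J_i(K^*_i)\le\sigma_{\min}(R_i)^{-1}\|\Sigma^i_{K^*}\|\operatorname{Tr}((E^i_K)^{\top}E^i_K)$. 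Finally $\nabla J_i(K)=2E^i_K\Sigma^i_K$ with $\Sigma^i_K\succeq\mu I$ (the uniform Gramian lower bound implied by \eqref{xgramian}) gives $\|\nabla J_i(K)\|_F\ge 2\mu\|E^i_K\|_F$, so $\operatorname{Tr}((E^i_K)^{\top}E^i_K)\le\tfrac{1}{4\mu^2}\|\nabla J_i(K)\|_F^2\le\mu^{-2}\|\nabla J_i(K)\|_F^2$, which is the stated $\tfrac{1}{\lambda_i}\|\nabla J_i(K)\|_F^2$ with $\lambda_i=\mu^2\sigma_{\min}(R_i)/\|\Sigma^i_{K^*}\|$.

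\emph{Lower bound.} Here I would instead plug the one-step LQR policy-iteration (Gauss--Newton) update $K^{+}:=K-M_K^{-1}E^i_K$ into the identity, so that $\Delta+M_K^{-1}E^i_K=0$ and the squared term \emph{vanishes}: $J_i(K^{+})-J_i(K)=-\operatorname{Tr}(\Sigma^i_{K^{+}}(E^i_K)^{\top}M_K^{-1}E^i_K)$. By optimality of $K^*_i$, $J_i(K^*_i)\le J_i(K^{+})$, so $J_i(K)-J_i(K^*_i)\ge\operatorname{Tr}(\Sigma^i_{K^{+}}(E^i_K)^{\top}M_K^{-1}E^i_K)\ge\mu\,\|M_K\|^{-1}\operatorname{Tr}((E^i_K)^{\top}E^i_K)$, using $\Sigma^i_{K^{+}}\succeq\mu I$ and $M_K^{-1}\succeq\|M_K\|^{-1}I$; this is exactly the first inequality.

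\emph{Main obstacle.} The only step that is not pure algebra is justifying that the iterate $K^{+}$ is itself stabilizing, so that the cost-difference identity applies to it and $J_i(K^{+})<\infty$. This is the classical fact that the exact LQR policy-iteration step preserves Schur stability and does not increase the cost, which I would invoke from \cite{fazel2018global}; alternatively, one can apply the identity along $K_t:=K-t\,M_K^{-1}E^i_K$ for $t$ small enough to remain stabilizing (possible since $K\in\mathcal{S}$), optimize the resulting lower bound over admissible $t$, and argue that $t=1$ is admissible by the same policy-iteration argument. Carrying this stability bookkeeping through \Cref{def:stabilizing_set} is where the care is needed; everything else reduces to the two square completions above.
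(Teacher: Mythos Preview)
Your proposal is correct and follows precisely the approach of \cite[Lemma~11]{fazel2018global}, which is what the paper invokes as its proof: both inequalities are obtained from the cost-difference identity by completing the square, choosing $K'=K^*_i$ for the upper bound and the Gauss--Newton iterate $K^{+}=K-M_K^{-1}E^i_K$ for the lower bound, and then applying $\Sigma^i_{K'}\succeq\mu I$ together with $M_K\succeq R_i$. The stability of $K^{+}$, which you correctly flag as the only nontrivial step, is exactly the monotonicity of LQR policy iteration established in \cite{fazel2018global}.
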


 \begin{proof}
     See \cite[Lemma 11]{fazel2018global}.
\end{proof}

\begin{lemma}\label{applem:uniformboundsl}
Given a prior $p$ over LQR task set $\mathcal{T}$, adaptation rate $\eta$, and an MAML stabilizing controller $K \in \mathcal{S}$, the Frobenius norm of gradient $\nabla \mathcal{L}(K)$ and control gain $K$ can be bounded as follows:
\begin{equation}
\begin{aligned}
       \| \nabla \mathcal{L}(K)\|_F & \leq h_{G, \mathcal{L}} (K) , 
\end{aligned}
\end{equation}
where $h_{G, \mathcal{L}}:= ( k + \eta h_H (K)) (1 + \eta h_{grad} (K)) h_G (K)  $ is dependent on the problem parameters.
\end{lemma}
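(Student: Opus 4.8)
The plan is to bound $\|\nabla\mathcal L(K)\|_F$ directly from the chain-rule expression $\nabla\mathcal L(K)=\mathbb E_{i\sim p}\big[(I-\eta\nabla^2 J_i(K))\nabla J_i(K')\big]$ with $K'=K-\eta\nabla J_i(K)$, using only the uniform bounds of \Cref{lem:uniformbounds} and the gradient-Lipschitz property of \Cref{lem:perturbationanalysis}. First I would push the expectation inside the norm by Jensen's inequality \eqref{eq:sum_expand} (or simply convexity of $\|\cdot\|_F$), reducing to a task-wise bound $\|\nabla\mathcal L(K)\|_F\le \mathbb E_{i\sim p}\big[\|(I-\eta\nabla^2 J_i(K))\nabla J_i(K')\|_F\big]$, so it suffices to bound each summand by $h_{G,\mathcal L}(K)$ uniformly in $i$.

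Next I would split the product with submultiplicativity of the operator/Frobenius norm:
\begin{equation*}
\big\|(I-\eta\nabla^2 J_i(K))\nabla J_i(K')\big\|_F\le \big\|I-\eta\nabla^2 J_i(K)\big\|\cdot\big\|\nabla J_i(K')\big\|_F.
\end{equation*}
For the first factor, $\|I-\eta\nabla^2 J_i(K)\|\le \|I\|+\eta\|\nabla^2 J_i(K)\|_F\le k+\eta h_H(K)$ (here $\|I_k\|_F=\sqrt{k}$ or, more crudely, the operator-norm bound $1\le k$ is absorbed; this matches the stated constant $k+\eta h_H(K)$). For the second factor I would first need $K'\in\mathcal S$ so that \Cref{lem:uniformbounds} applies at $K'$; granting the admissibility assumption (which holds for $\eta$ small enough, as discussed around \Cref{mamlstabilizing} and formalized in \Cref{sec:gdanalysis}), I write $\|\nabla J_i(K')\|_F\le \|\nabla J_i(K)\|_F+\|\nabla J_i(K')-\nabla J_i(K)\|_F$ and apply \Cref{lem:perturbationanalysis} with $\|\Delta\|_F=\|K'-K\|_F=\eta\|\nabla J_i(K)\|_F\le \eta h_G(K)$, giving $\|\nabla J_i(K')\|_F\le h_G(K)+h_{\mathrm{grad}}(K)\cdot\eta h_G(K)=(1+\eta h_{\mathrm{grad}}(K))h_G(K)$. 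Multiplying the two factors yields exactly $h_{G,\mathcal L}(K)=(k+\eta h_H(K))(1+\eta h_{\mathrm{grad}}(K))h_G(K)$, and since every bound is uniform in $i$, the expectation over $p$ preserves it.

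The only delicate point—the main obstacle—is ensuring the intermediate policy $K'$ lies in $\mathcal S$ (or at least in the locally smooth region $\bar{\mathcal K}$) so that the uniform bounds and the perturbation lemma are legitimately applicable at $K'$; this needs $\|\Delta\|_F\le h_\Delta(K)$, i.e. $\eta h_G(K)\le h_\Delta(K)$, which is precisely the kind of step-size restriction imposed later in \Cref{thm:stabilityattain}. Everything else is routine norm algebra (triangle inequality, submultiplicativity, Jensen), so I would state the $\eta$-smallness requirement explicitly and then the chain of inequalities above closes the proof.
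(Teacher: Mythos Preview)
Your proposal is correct and follows essentially the same approach as the paper: push the expectation inside via Jensen, split the product by submultiplicativity, bound $\|I-\eta\nabla^2 J_i(K)\|$ via the triangle inequality and \Cref{lem:uniformbounds}, and control $\|\nabla J_i(K')\|_F$ by the add-subtract trick combined with \Cref{lem:perturbationanalysis} and \Cref{lem:uniformbounds}. If anything, you are more careful than the paper in making explicit the Jensen step and the admissibility requirement $K'\in\mathcal{S}$ (equivalently $\eta h_G(K)\le h_\Delta(K)$), both of which the paper's proof leaves implicit.
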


\begin{proof}
    When $K \in \mathcal{S}$, by expression of $\nabla \mathcal{L}$, we have:
    \begin{align*}
        \|\nabla \mathcal{L}\|_F & = \| (I - \eta \nabla^2 J_i (K)) \nabla J_i (K - \eta \nabla J_i(K))\|_F \\
        & \leq  \|I - \eta \nabla^2 J_i (K) \|_F \| \nabla J_i (K - \eta \nabla J_i(K))\|_F \\
        & \leq  \left( \|I\|_F  - \eta \| \nabla^2 J_i (K)\|_F  \right)  \|\nabla J_i (K - \eta \nabla J_i(K)) - \nabla J_i (K ) + \nabla J_i (K)\|_F  \\
        & \leq (k+ \eta h_H(K)) ( 1 + \eta h_{grad}(K)) h_{G}(K) , 
    \end{align*}
    where we applied Young's inequality, triangle inequality, the Lipschitz property of $\nabla J$ and uniform bounds.
\end{proof}

\begin{lemma}[Perturbation analysis of $\nabla \mathcal{L}(K)$]\label{applem:perturbationl}
 Let $K, K^{\prime} \in \mathcal{S} $ such that $\|\Delta\| := \|K^{\prime} -K\|  \leq h_{\Delta}(K)\ <\infty$, then, we have the following set of local smoothness properties,
 \begin{align*} 
  &  |\mathcal{L}(K^{\prime}) - \mathcal{L}(K)| \leq h_{\mathcal{L},cost} \|\Delta\|_F\\
   &   \| \nabla \mathcal{L} (K) - \nabla \mathcal{L} (K^{\prime})\|_F   \leq h_{\mathcal{L},grad} \| \Delta\|_F,
 \end{align*}
 where $h_{\mathcal{L}, cost}: = h_{cost} (1+ \eta h_{grad}(K)) $ and $h_{\mathcal{L}, grad} := \eta h_{hess}(K) ( 1 + \eta h_{grad}) h_G (K) + (k + \eta h_H(K^{\prime})) h_{hess}(K) (1 + \eta h_{hess}(K)) $ are problem dependent parameters. 
\end{lemma}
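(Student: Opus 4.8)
The plan is to establish the two smoothness estimates for $\mathcal{L}$ by reducing everything to the already-proven task-wise properties in \Cref{applem:uniformbounds} and \Cref{applem:perturbationanalysis}, together with the composition structure of $\mathcal{L}$. Recall $\mathcal{L}(K) = \E_{i \sim p}[J_i(K')]$ with $K' = K - \eta \nabla J_i(K)$, and $\nabla \mathcal{L}(K) = \E_{i \sim p}[(I - \eta \nabla^2 J_i(K)) \nabla J_i(K')]$. The key observation is that the map $K \mapsto K' = K - \eta \nabla J_i(K)$ is itself Lipschitz: by \Cref{applem:perturbationanalysis}, $\|K'_1 - K'_2\| \le \|K_1 - K_2\| + \eta \|\nabla J_i(K_1) - \nabla J_i(K_2)\|_F \le (1 + \eta h_{grad}(K))\|K_1 - K_2\|$, provided both adapted iterates stay in $\mathcal{S}$ (which is exactly the MAML-stabilizing hypothesis, and is why the $h_{\Delta}$ constraint is imposed).

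For the cost estimate, I would write $|\mathcal{L}(K^{\prime}) - \mathcal{L}(K)| \le \E_i |J_i(\tilde K^{\prime}) - J_i(\tilde K)|$ where $\tilde K, \tilde K^{\prime}$ denote the adapted policies at $K, K^{\prime}$ respectively; apply the first bound of \Cref{applem:perturbationanalysis} to get $\le h_{cost}(K) J_i(\tilde K) \|\tilde K^{\prime} - \tilde K\|_F$ (absorbing $J_i$ into the constant as is done implicitly via $\bar J_{\max}$ elsewhere), and then substitute the Lipschitz bound on the adaptation map to pick up the factor $(1 + \eta h_{grad}(K))$. This yields $h_{\mathcal{L},cost} = h_{cost}(1 + \eta h_{grad}(K))$ as claimed.

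For the gradient estimate, I would add and subtract an intermediate term to split $\nabla \mathcal{L}(K) - \nabla \mathcal{L}(K^{\prime})$ into: (a) the difference coming from the prefactor $(I - \eta \nabla^2 J_i(\cdot))$ evaluated at $K$ versus $K^{\prime}$, times $\nabla J_i$ at the adapted point; and (b) the difference coming from $\nabla J_i$ evaluated at the two adapted points, times the prefactor at $K^{\prime}$. For (a) I would use $\|\nabla^2 J_i(K) - \nabla^2 J_i(K^{\prime})\|_F \le h_{hess}(K)\|\Delta\|_F$ (Hessian Lipschitzness from \Cref{applem:perturbationanalysis}) and bound $\|\nabla J_i\|_F$ by $h_G$ and the factor $(1 + \eta h_{grad})$ as before, giving the $\eta h_{hess}(K)(1 + \eta h_{grad}) h_G(K)$ summand. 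For (b) I would bound $\|I - \eta \nabla^2 J_i(K^{\prime})\|_F \le k + \eta h_H(K^{\prime})$ by \Cref{applem:uniformbounds}, bound the difference of gradients at the adapted points by $h_{grad}$ times the Lipschitz constant of the adaptation map — though here the stated constant in the lemma uses $h_{hess}$ rather than $h_{grad}$, so I would have to trace through whether the intended adaptation-map Lipschitz constant is $(1 + \eta h_{hess}(K))$ (which would be the case if one differentiates once more, i.e. bounds $\|\nabla K' \|$ via the Hessian), matching the stated $h_{\mathcal{L},grad}$. Finally I would take expectation over $i \sim p$ using Jensen/triangle inequality, noting all constants are uniform over tasks by the $[I]$-uniformity in \Cref{applem:perturbationanalysis}.

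The main obstacle I anticipate is bookkeeping consistency rather than a conceptual difficulty: specifically, reconciling the $h_{hess}$ versus $h_{grad}$ appearances in the stated $h_{\mathcal{L},grad}$ and making sure the adaptation-map Lipschitz constant is tracked with the right second-order quantity, and ensuring that the triangle-inequality split is done so that each intermediate iterate ($K$, $K'$, the adapted point of $K'$) provably lies in $\mathcal{S}$ so that the task-wise lemmas actually apply — this is where the $\|\Delta\| \le h_{\Delta}(K)$ hypothesis and the implicit small-$\eta$ assumption do the work. The rest is routine application of Young's and Jensen's inequalities.
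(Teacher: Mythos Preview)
Your proposal is correct and follows essentially the same route as the paper: the same add-and-subtract decomposition of $\nabla\mathcal{L}(K)-\nabla\mathcal{L}(K')$ into a prefactor-difference piece and an adapted-gradient-difference piece, bounded via \Cref{applem:perturbationanalysis} and \Cref{applem:uniformbounds}, with the adaptation map Lipschitz constant $(1+\eta h_{grad})$ supplying the extra factor. Your suspicion about the $h_{hess}$/$h_{grad}$ discrepancy is well-founded: the paper's own derivation produces $(k+\eta h_H(K'))\,h_{grad}\,(1+\eta h_{grad}(K))$ in the second summand, so the $h_{hess}$ appearing in the stated $h_{\mathcal{L},grad}$ is a typo rather than something you need to reproduce.
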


\begin{proof}
    Suppose $K, K^{\prime} \in \mathcal{S} $ such that $\|\Delta\| := \|K^{\prime} -K\|  \leq h_{\Delta}(K)\ <\infty$. 
    For $\mathcal{L}$, we have:
    \begin{align*}
      & \quad \  | \mathcal{L}(K^{\prime}) - \mathcal{L}(K)| = |\mathbb{E}_{i \sim p}  J_i (K^{\prime} - \eta \nabla J_i(K^{\prime})) -\mathbb{E}_{i \sim p}  J_i (K - \eta \nabla J_i(K))| \\
      & \leq \mathbb{E}_{i \sim p} h_{\mathcal{L},cost} (\|\Delta\|_F +   \eta  \| \nabla J_i(K^{\prime}) - \nabla J_i(K) \|_F ) \\
      & \leq h_{\mathcal{L},cost}  ( 1 + \eta h_{grad} (K)) \| \Delta\|_F .
    \end{align*}
    For $\nabla \mathcal{L}$, we have: 
    \begin{align*}
       & \quad  \| \nabla \mathcal{L} (K) - \nabla \mathcal{L} (K^{\prime})\|_F \\
       & =  \| \mathbb{E}_{i \sim p}  (I - \eta \nabla^2 J_i (K^{\prime}) )\nabla J_i (K^{\prime} - \eta \nabla J_i(K^{\prime}))  - \mathbb{E}_{i \sim p}  (I - \eta \nabla^2 J_i (K) )\nabla J_i (K - \eta \nabla J_i(K))\|_F 
       \\ & \leq \mathbb{E}_{i \sim p} \| (I - \eta \nabla^2 J_i (K^{\prime}) )\nabla J_i (K^{\prime} - \eta \nabla J_i(K^{\prime}))  -  (I - \eta \nabla^2 J_i (K^{\prime}) ) \nabla J_i (K - \eta \nabla J_i(K)) \\ 
       & \quad + (I - \eta \nabla^2 J_i (K^{\prime}) )\nabla J_i (K - \eta \nabla J_i(K)) - (I - \eta \nabla^2 J_i (K) )J_i (K - \eta \nabla J_i(K))\|_F \\
       & \leq   \mathbb{E}_{i \sim p} \bigg[ \| I - \eta \nabla^2 J_i (K^{\prime}) \|_F \|\nabla J_i (K - \eta \nabla J_i(K)) - \nabla J_i (K^{\prime} - \eta \nabla J_i(K^{\prime}))\|_F \\
       & \quad +  \| \eta \nabla^2 J_i (K) - \eta \nabla^2 J_i (K^{\prime}) \|_F \|\nabla J_i (K - \eta \nabla J_i(K))\|_F \bigg] \\
       & \leq (k + \eta h_H(K^{\prime })) h_{grad} (1 + \eta h_{grad}(K) ) \|\Delta \|_F + \eta h_{hess} (K) (1 + \eta h_{grad}(K)) h_G (K) \|\Delta \|_F , 
    \end{align*}
    where we repeatedly applied norm inequalities, local Lipschitz continuity and uniform bounds. 
    
\end{proof}


 \begin{lemma}[Matrix Bernstein Inequality \cite{gravell2020learning}] Let $\left\{Z_i\right\}_{i=1}^m$ be a set of $m$ independent random matrices of dimension $d_1 \times d_2$ with $\mathbb{E}\left[Z_i\right]=Z$, $\left\|Z_i-Z\right\| \leq B_r$ almost surely, and maximum variance 

$$\max \left(\left\|\mathbb{E}\left(Z_i Z_i^{\top}\right)-Z Z^{\top}\right\|,\left\|\mathbb{E}\left(Z_i^{\top} Z_i\right)-Z^{\top} Z\right\|\right) \leq \sigma_r^2,$$ 
and sample average $\widehat{Z}:=\frac{1}{m} \sum_{i=1}^m Z_i$. Let a small tolerance $\epsilon \geq 0$ and small probability $0 \leq \delta \leq 1$ be given. If
$$
m \geq \frac{2 \min \left(d_1, d_2\right)}{\epsilon^2}\left(\sigma_r^2+\frac{B_r \epsilon}{3 \sqrt{\min \left(d_1, d_2\right)}}\right) \log \left[\frac{d_1+d_2}{\delta}\right]
$$
$\text { then } \mathbb{P}\left[\|\widehat{Z}-Z\|_F \leq \epsilon\right] \geq 1-\delta \text {.}$   
\label{lemma:Bernstein}
\end{lemma}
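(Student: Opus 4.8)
The plan is to reduce the statement to the standard operator-norm matrix Bernstein tail inequality (Tropp's version) and then convert that operator-norm concentration into the claimed Frobenius-norm bound, being careful with the constants so that the precise sample-size condition drops out.

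First I would center the summands. Set $X_i := Z_i - Z$, so that $\mathbb{E}[X_i] = 0$ and $\|X_i\| \le B_r$ almost surely. Using $\mathbb{E}[Z_i] = Z$ one gets $\mathbb{E}[X_i X_i^{\top}] = \mathbb{E}[Z_i Z_i^{\top}] - ZZ^{\top}$ and $\mathbb{E}[X_i^{\top} X_i] = \mathbb{E}[Z_i^{\top} Z_i] - Z^{\top}Z$, each of operator norm at most $\sigma_r^2$ by hypothesis. Writing $S := \sum_{i=1}^m X_i = m(\widehat{Z} - Z)$ and applying the triangle inequality to the positive semidefinite matrices $\mathbb{E}[X_i X_i^{\top}]$ (and likewise $\mathbb{E}[X_i^{\top}X_i]$), the matrix variance statistic satisfies $\max\{\lVert\sum_i \mathbb{E}[X_i X_i^{\top}]\rVert, \lVert\sum_i \mathbb{E}[X_i^{\top} X_i]\rVert\} \le m\sigma_r^2 =: v$.

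Next I would invoke the operator-norm matrix Bernstein inequality: for independent, mean-zero $X_i$ with $\|X_i\| \le B_r$ and variance statistic $v$,
\begin{equation*}
  \mathbb{P}\big[\|S\|_2 \ge t\big] \;\le\; (d_1+d_2)\,\exp\!\left(\frac{-t^2/2}{v + B_r t/3}\right).
\end{equation*}
To pass from the spectral to the Frobenius norm I would use the rank bound $\|\widehat{Z}-Z\|_F \le \sqrt{\min(d_1,d_2)}\,\|\widehat{Z}-Z\|_2$, since $\widehat{Z}-Z$ has rank at most $\min(d_1,d_2)$; hence it suffices to guarantee $\|\widehat{Z}-Z\|_2 \le \epsilon/\sqrt{\min(d_1,d_2)}$ with probability at least $1-\delta$, i.e. to take $t = m\epsilon/\sqrt{\min(d_1,d_2)}$ in the tail bound. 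Demanding that the right-hand side be at most $\delta$, taking logarithms, and substituting $v = m\sigma_r^2$ together with this choice of $t$, one factor of $m$ cancels between numerator and denominator, leaving exactly
\begin{equation*}
  m \;\ge\; \frac{2\min(d_1,d_2)}{\epsilon^2}\left(\sigma_r^2 + \frac{B_r\epsilon}{3\sqrt{\min(d_1,d_2)}}\right)\log\!\left[\frac{d_1+d_2}{\delta}\right],
\end{equation*}
which is precisely the hypothesis.

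I do not expect a genuine obstacle here: the only substantive input is the operator-norm matrix Bernstein bound itself (which rests on the matrix Laplace-transform method, subadditivity of matrix cumulant generating functions, and Lieb's concavity theorem), and everything else is the rank-bound conversion plus elementary algebra to isolate $m$. If one instead wanted a fully self-contained argument, reproving that operator-norm bound — in particular controlling $\mathbb{E}\,\Tr\exp(\theta S)$ via the matrix CGF subadditivity — would be the hard part; given the cited result, the remainder is bookkeeping.
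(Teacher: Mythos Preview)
Your derivation is correct: reducing to Tropp's operator-norm matrix Bernstein bound, using the rank inequality $\|A\|_F \le \sqrt{\min(d_1,d_2)}\,\|A\|_2$, and then solving for $m$ recovers the stated sample-size condition exactly. The algebra checks out with $t = m\epsilon/\sqrt{\min(d_1,d_2)}$ and $v = m\sigma_r^2$.

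However, there is nothing to compare against: the paper does not prove this lemma. It is stated as a quoted result, attributed to \cite{gravell2020learning}, and used as a black box in the subsequent gradient-estimation analysis (Lemmas~\ref{lem:gradestimate} and~\ref{lem:metagradestimate}). So your proposal supplies a proof where the paper simply cites one; it is a sound and standard argument, and there is no discrepancy in approach to discuss.
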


\begin{lemma}[Finite-Horizon Approximation]\label{lem:finitehorizon}
     For any $K$ such that $J_i(K)$ is well-defined for any $i \in [I]$, let the covariance matrix be $\Sigma_K^{i,(\ell)} := \mathbb{E}[ \frac{1}{\ell}\sum_{i=1}^\ell x_i x_i^{\top}]$ and $J_i^{(\ell)} (K) =  \mathbb{E}[ \frac{1}{\ell}\sum_{i=0}^\ell x_i^{\top} (Q_i + K^{\top} R_i K) x_i] $. If
     $$
        \ell  \geq \frac{d \cdot J^2_{\max} (K)}{ \epsilon \mu \sigma^2_{\min}(Q)} ,
     $$
     then $\|\Sigma^{i,(\ell)}_K - \Sigma^i_K\| \leq \epsilon $. Also, if 
     $$
        \ell \geq \frac{d \cdot J_{\max}^{2}(K)\left(\|Q\|_{\max} +\|R\|_{\max}\|K\|_{\max}^{2}\right)}{\epsilon \mu \sigma_{\min }^{2}(Q)},
     $$
     then $|J_i(K)  -   J^{(\ell)}_i (K)| \leq  \epsilon$.
\end{lemma}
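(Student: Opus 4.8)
The plan is to reduce both statements to a single geometric-decay estimate comparing the transient state covariances $\Sigma_t := \mathbb{E}[x_t x_t^\top]$ to the stationary Gramian $\Sigma_K^i$, and then to average that estimate over the horizon. First I would record the one-step recursion $\Sigma_{t+1} = \Psi + A_K \Sigma_t A_K^\top$ with $A_K := A_i - B_i K$, which follows from $x_{t+1} = A_K x_t + w_t$, $\mathbb{E}[w_t]=0$, and the independence of $w_t$ and $x_t$. Subtracting the stationary identity \eqref{xgramian} gives $\Sigma_{t+1} - \Sigma_K^i = A_K(\Sigma_t - \Sigma_K^i)A_K^\top$, and iterating,
\[
  \Sigma_t - \Sigma_K^i \;=\; A_K^t\,(\Sigma_0 - \Sigma_K^i)\,(A_K^t)^\top .
\]
Since $\Sigma_K^{i,(\ell)} = \tfrac1\ell\sum_{t=1}^\ell \Sigma_t$ and (up to the indexing/normalization convention) $J_i^{(\ell)}(K) = \operatorname{Tr}[(Q_i + K^\top R_i K)\,\Sigma_K^{i,(\ell)}]$, both claims reduce to controlling $\big\|\tfrac1\ell\sum_{t=1}^\ell A_K^t(\Sigma_0 - \Sigma_K^i)(A_K^t)^\top\big\|$.

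The one genuinely non-routine ingredient is an explicit bound on $\sum_{t\ge 0}\|A_K^t\|^2$: this series is finite but cannot be obtained from $\|A_K\|$ or $\rho(A_K)$, neither of which is uniformly controlled on $\mathcal{S}$. Here I would use that $\Sigma_K^i = \sum_{t\ge 0} A_K^t \Psi (A_K^t)^\top$ solves \eqref{xgramian}, together with the noise floor $\Psi \succeq \mu I$, so that $A_K^t (A_K^t)^\top \preceq \tfrac1\mu A_K^t \Psi (A_K^t)^\top$ and hence
\[
  \sum_{t\ge 0}\|A_K^t\|^2 \;\le\; \sum_{t\ge 0}\operatorname{Tr}\!\big(A_K^t (A_K^t)^\top\big) \;\le\; \tfrac1\mu \operatorname{Tr}(\Sigma_K^i) \;\le\; \tfrac{d}{\mu}\,\|\Sigma_K^i\| .
\]
Bounding the finite sum over $t$ by the infinite one and extracting $1/\ell$ then gives $\|\Sigma_K^{i,(\ell)} - \Sigma_K^i\| \le \tfrac1\ell \|\Sigma_0 - \Sigma_K^i\|\cdot\tfrac{d}{\mu}\|\Sigma_K^i\|$, which already exhibits the right $1/\ell$ rate.

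It remains to turn covariance norms into cost quantities. From $J_i(K) = \operatorname{Tr}[(Q_i + K^\top R_i K)\Sigma_K^i] \ge \sigma_{\min}(Q_i)\operatorname{Tr}(\Sigma_K^i)$ one gets $\|\Sigma_K^i\| \le \operatorname{Tr}(\Sigma_K^i) \le J_{\max}(K)/\sigma_{\min}(Q)$, and, under the mild standing assumption that the initial covariance is dominated by the stationary one (e.g. $\Sigma_0 \preceq \Psi \preceq \Sigma_K^i$), also $\|\Sigma_0 - \Sigma_K^i\| \le \|\Sigma_K^i\| \le J_{\max}(K)/\sigma_{\min}(Q)$. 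Substituting and imposing the stated lower bound on $\ell$ gives $\|\Sigma_K^{i,(\ell)} - \Sigma_K^i\| \le \epsilon$, the first assertion. For the second, I would propagate this through the trace, $|J_i^{(\ell)}(K) - J_i(K)| = |\operatorname{Tr}[(Q_i + K^\top R_i K)(\Sigma_K^{i,(\ell)} - \Sigma_K^i)]| \le d\,\|Q_i + K^\top R_i K\|\cdot\|\Sigma_K^{i,(\ell)} - \Sigma_K^i\|$, bound $\|Q_i + K^\top R_i K\| \le \|Q\|_{\max} + \|R\|_{\max}\|K\|_{\max}^2$, and enlarge $\ell$ by the factor $d(\|Q\|_{\max} + \|R\|_{\max}\|K\|_{\max}^2)$; this yields a threshold of the stated form.

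I expect the main obstacle to be exactly the middle step: trading the lack of any uniform norm-contraction of $A_K$ on $\mathcal{S}$ for an explicit, $\ell$-independent bound on $\sum_{t} A_K^t (A_K^t)^\top$ via the stationary Lyapunov solution and the noise level $\mu$. Everything else — the matrix recursion, the telescoping identity, submultiplicativity of the operator norm, and $|\operatorname{Tr}(AB)| \le d\|A\|\|B\|$ — is routine, and what remains after the key estimate is just bookkeeping of the constants $d$, $\sigma_{\min}(Q)$, $\|Q\|_{\max}$, $\|R\|_{\max}$, $\|K\|_{\max}$, and $J_{\max}(K)$.
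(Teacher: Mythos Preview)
The paper does not actually supply a proof of this lemma: it is listed among the auxiliary results in the appendix and then simply invoked when controlling the finite-horizon bias inside the gradient-estimation lemmas. The statement is a mild repackaging of the finite-horizon approximation bounds that appear in \cite{fazel2018global} and its successors, so there is no ``paper's own proof'' to compare against beyond that implicit citation.

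Your approach is the standard one and is correct in substance. The recursion $\Sigma_{t}-\Sigma_K^i = A_K^{t}(\Sigma_0-\Sigma_K^i)(A_K^{t})^{\top}$, together with the Lyapunov-series bound $\sum_{t\ge 0}\operatorname{Tr}(A_K^{t}(A_K^{t})^{\top}) \le \mu^{-1}\operatorname{Tr}(\Sigma_K^i)$ obtained from $\Psi\succeq \mu I$, is exactly the mechanism used in the cited literature to avoid any uniform contraction assumption on $A_K$. Two small points of bookkeeping are worth tightening. First, you can bypass the detour $\operatorname{Tr}(\Sigma_K^i)\le d\|\Sigma_K^i\|$ and use $\operatorname{Tr}(\Sigma_K^i)\le J_{\max}(K)/\sigma_{\min}(Q)$ directly; this removes the apparent extra factor of $d$ you would otherwise pick up when you later reinsert $\|\Sigma_K^i\|\le J_{\max}(K)/\sigma_{\min}(Q)$, and the first threshold then comes out exactly as stated. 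Second, for the cost bound, once you know $\Sigma_K^{i,(\ell)}-\Sigma_K^i$ is sign-definite (your standing assumption $\Sigma_0\preceq \Sigma_K^i$ gives this), use $|\operatorname{Tr}(AB)|\le \|A\|\,|\operatorname{Tr}(B)|\le d\|A\|\,\|B\|$ with $A=Q_i+K^{\top}R_iK$; combined with the $d$-free covariance bound, this reproduces the second threshold on the nose rather than with an extra $d$.

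The only genuinely extra hypothesis you introduce is $\Sigma_0\preceq \Sigma_K^i$ (equivalently, that the initial covariance is dominated by the stationary one). The paper only assumes $\Sigma_0\succeq \mu I$, not an upper bound. This is harmless at the level of the stated constants---one can replace $\|\Sigma_0-\Sigma_K^i\|$ by $\|\Sigma_0\|+\|\Sigma_K^i\|$ and absorb $\|\Sigma_0\|$ into a problem-dependent constant---but you should flag it rather than call it a ``mild standing assumption,'' since it is not one the paper actually makes.
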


\section{Controlling Gradient Estimation Error}\label{sec:appb}
In the following, we provide detailed proof of \Cref{lem:gradestimate} and \Cref{lem:metagradestimate}, which give the explicit sample requirements for the gradient/meta-gradient estimation to be close to the ground truth. Before proving, we first restate the results. 

\begin{lemma*}[Gradient estimation]
    For sufficiently small numbers $\epsilon, \delta \in (0,1)$, given a control policy $K$, let $\ell$, radius $r$, number of trajectories $M$ satisfying the following dependence, 
    \begin{align*}
        \ell & \geq h^1_{\ell} (\frac{1}{\epsilon}, \delta) : = \max\{ h_{\ell, grad}(\frac{1}{\epsilon}), h_{\ell, var}(\frac{1}{\epsilon}, \delta)\} \\
        r & \leq  h^1_r (\frac{1}{\epsilon}) := \min \{ 1/\bar{h}_{cost}, \underline{h}_{\Delta}, \frac{\epsilon}{4\bar{h}_{grad}}\} \\
        M & \geq h^1_M (\frac{1}{\epsilon}, \delta) := h_{sample}(\frac{4}{\epsilon}, \delta)
    \end{align*}
    Then, with probability at least $1 - 2\delta$, the gradient estimation error is bounded by
    \begin{equation}
        \| {\nabla} J_i (K)- \tilde{\nabla} J_i (K) \|_F \leq \epsilon, 
    \end{equation}
     for any task $i \in [I]$.
\end{lemma*}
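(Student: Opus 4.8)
The plan is to decompose the gradient estimation error into three sources: (i) the \emph{smoothing bias} $\|\nabla_r J_i(K) - \nabla J_i(K)\|_F$ incurred by replacing $\nabla J_i$ with its Gaussian-smoothed surrogate from \eqref{steinidtty}; (ii) the \emph{finite-horizon truncation error} from using the $\ell$-step empirical cost $\tilde J_i^{(\ell)}(K+U_m)$ instead of the true average cost $J_i(K+U_m)$; and (iii) the \emph{Monte-Carlo sampling error} from averaging over only $M$ perturbations. Concretely I would write
\begin{equation*}
\tilde\nabla J_i(K) - \nabla J_i(K) = \underbrace{\big(\tilde\nabla J_i(K) - \widehat\nabla J_i(K)\big)}_{\text{truncation}} + \underbrace{\big(\widehat\nabla J_i(K) - \nabla_r J_i(K)\big)}_{\text{Monte-Carlo}} + \underbrace{\big(\nabla_r J_i(K) - \nabla J_i(K)\big)}_{\text{smoothing bias}},
\end{equation*}
where $\widehat\nabla J_i(K) := \frac{1}{M}\sum_m \frac{dk}{r^2} J_i(K+U_m)U_m$ uses exact costs. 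Then I bound each term by $\epsilon/2$, $\epsilon/4$, $\epsilon/4$ respectively (matching the $\tfrac{4}{\epsilon}$ arguments appearing in $h^1_M$ and $h^1_r$), and take a union bound over the two high-probability events, paying $2\delta$ total.

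For the smoothing bias, I would first ensure the perturbed policy stays admissible: since $r \le \min\{1/\bar h_{\mathrm{cost}}, \underline h_\Delta\}$ and $\|U_m\|_F = r$, Lemma~\ref{lem:perturbationanalysis} guarantees $K+U_m \in \mathcal{S}$ and gives $\|\nabla J_i(K+U) - \nabla J_i(K)\|_F \le h_{\mathrm{grad}}(K)\,r$. Averaging over $U \sim \mathrm{Unif}(\mathbb{S}_r)$ and invoking Stein's identity ($\E[\nabla J_i(K+U)] = \nabla_r J_i(K)$, as quoted after \eqref{steinidtty}) gives $\|\nabla_r J_i(K) - \nabla J_i(K)\|_F \le \bar h_{\mathrm{grad}}\, r \le \epsilon/4$ by the choice $r \le \epsilon/(4\bar h_{\mathrm{grad}})$. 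For the Monte-Carlo term I would apply the matrix Bernstein inequality (Lemma~\ref{lemma:Bernstein}) to the i.i.d.\ matrices $Z_m = \frac{dk}{r^2} J_i(K+U_m)U_m$: their almost-sure bound is controlled by $\frac{dk}{r}\bar J_{\max}$ (using $J_i \le \bar J_{\max}$ on $\mathcal{S}$ and $\|U_m\|_F = r$), and their variance by a similar quantity; this yields the sample-size requirement $M \ge h_{\mathrm{sample}}(4/\epsilon,\delta)$ for the error to be $\le \epsilon/4$ with probability $1-\delta$. For the truncation term, Lemma~\ref{lem:finitehorizon} gives $|J_i^{(\ell)}(K+U_m) - J_i(K+U_m)| \le \epsilon'$ once $\ell \ge h_{\ell,\mathrm{grad}}(1/\epsilon')$, and a separate concentration argument on the empirical state covariance (again via Bernstein, contributing $h_{\ell,\mathrm{var}}$ and the second $\delta$) controls the variance of $\tilde J_i^{(\ell)}$ around $J_i^{(\ell)}$; propagating through the $\frac{dk}{r^2}$ factor and the $\|U_m\|_F = r$ scaling turns an $\epsilon'$-accurate cost estimate into an $O(\tfrac{dk}{r}\epsilon')$-accurate gradient contribution, which is why $h^1_\ell$ is defined as the max of the bias and variance horizon requirements.

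The main obstacle is the \emph{interplay between $r$ and $\ell$}: the $dk/r^2$ amplification in the zeroth-order estimator means the finite-horizon cost error must be driven down proportionally to $r^2$ (hence $\ell$ must grow polynomially in $1/r$, and therefore in $1/\epsilon$), while simultaneously $r$ cannot be taken too small without blowing up the variance term in Bernstein. Tracking these coupled dependencies carefully — and in particular making sure the union bound is only over two events rather than $M$ events (which requires the Bernstein bound to handle the $M$-fold average in one shot) — is the delicate bookkeeping. A secondary subtlety is verifying that \emph{every} sampled $K+U_m$ lands in $\mathcal{S}$ so that all the uniform bounds ($\bar h_{\mathrm{grad}}$, $\bar J_{\max}$, etc.) are legitimately applicable; this is exactly what the constraints $r \le 1/\bar h_{\mathrm{cost}}$ and $r \le \underline h_\Delta$ in $h^1_r$ are there to enforce, via the cost-perturbation bound in Lemma~\ref{lem:perturbationanalysis}.
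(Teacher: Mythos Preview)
Your proposal is correct and follows essentially the same approach as the paper: the same three-way decomposition into smoothing bias, Monte-Carlo error, and finite-horizon truncation (the latter itself split into a deterministic bias piece handled by Lemma~\ref{lem:finitehorizon} and a stochastic covariance-concentration piece contributing the second $\delta$), with each piece controlled by perturbation analysis, matrix Bernstein, and the horizon requirements respectively. Your $\epsilon/2,\epsilon/4,\epsilon/4$ budget matches the paper's four $\epsilon/4$ pieces once the truncation term is subdivided, and your identification of the $r$--$\ell$ coupling and the admissibility constraints $r\le 1/\bar h_{\mathrm{cost}},\,r\le\underline h_\Delta$ is exactly the bookkeeping the paper carries out.
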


\begin{proof}[Proof of \Cref{lem:gradestimate}]
 The goal of this lemma is to show that conditioned on a perturbed policy, in algo 2. $\widehat{K}^0_j = K^0 + U_j$ for some random sample index $j$, the gradient estimation and cost estimation have low approximation error with high probability. Now, we notice that this policy is perturbed but not adapted, (the meta-gradient estimation error is to characterize the gradient of the adapted policy). and define:
 \begin{equation*}
 \begin{aligned}
       \nabla_r J_i ({K})& = \frac{dk}{r^2} \E J_i ( {K} + U_m) U_m , \\
       \widehat{\nabla} & = \frac{1}{M}\sum_{m =1}^M \frac{dk}{r^2} J_i ( {K} + U_m)U_m ,  \\
    \tilde{\nabla} & = \frac{1}{M}\sum_{m=1}^M \frac{dk}{r^2 \ell} \sum_{l=1}^\ell (x_l)^{\top} (Q_i + R ({K} + U_m)) x_l .
 \end{aligned}
 \end{equation*}
 Then, for any stable policy $K$, the difference can be broken into three parts:
 \begin{equation*}
     \nabla J_i (K) - \tilde{\nabla} = \underbrace{ \bigg(\nabla J_i(K) - \nabla_r J_i (K) \bigg)}_{(i)} + 
   \underbrace{\left( \nabla_r J_i(K) - \widehat{\nabla} \right)}_{(ii)} +  \underbrace{\left( \widehat{\nabla} - \tilde{\nabla} \right)}_{(iii)}.
 \end{equation*}
 
 For $(i)$, we apply \Cref{applem:perturbationanalysis}, choosing the $r,\epsilon$ such that $\frac{\epsilon}{4} \geq \bar{h}_{grad} r \geq \bar{h}_{grad} \|U\|_F$, and $r \leq 1/\bar{h}_{cost}$, and $r \leq \underline{h}_{\Delta}$, then, for every $U$ on the sphere such that $\|U\|_F \leq r$. 
 We have $\|\nabla J_i(K+U) - \nabla J_i (K)\| \leq \frac{\epsilon}{4}$ for all tasks $i \in [I]$. Therefore, by Jensen inequality,  
 \begin{align*}
    \| \nabla_r J_i (K) - \nabla J_i (K)\|_F & \leq \mathbb{E}_{U \sim \mathbb{B}_r }\| \nabla J_i(K+U) - \nabla J_i(K)\|_F \leq \frac{\epsilon}{4} .
 \end{align*}

 For $(ii)$, we have $\mathbb{E}_{U \sim \mathbb{S}_r }[ \widehat{\nabla}] = \nabla_r J_i(K)$, each individual sample $Z_i := \frac{dk}{r^2} J_i (K+U_m) U_m $ is bounded. Let $\bar{J}_{\max}:= \sup_{K \in {\mathcal{S}_{\text{ML}}}} \max_{i} J_i(K)$, 
 \begin{align*}
     \| Z_i \|_F & \leq \frac{dk}{r^2} | J_i (K+ U_m)  - J_i(K) + J_i(K) |  \|U_m \|_F \\
     & \leq \frac{dk}{r^2} ( h_{cost} \bar{J}_{\max}\|U_m\|_F  + \bar{J}_{\max}  )r \\
     & = \frac{dk}{r}( 1 + r h_{cost} ) \bar{J}_{\max}
  \end{align*}
For $Z := \nabla_r J_i(K)$, 
\begin{align*}
    \|Z\|_F & \leq \mathbb{E}_{ U \sim \mathbb{B}_r } \| \nabla J(K+U) -  \nabla J(K)  + J(K)\| \\
    & \leq \mathbb{E}_{ U \sim \mathbb{B}_r } \| \nabla J(K+U) -  \nabla J(K) \| + \| \nabla J(K)\| \\
    & \leq   h_{grad}\|U\|_F + h_G (K)
    \\ & \leq  \bar{h}_{grad}r + \bar{h}_G .
\end{align*}
Hence, we can use triangle inequality and write, almost surely:
\begin{equation*}
    \|Z_i - Z\|_F \leq \|Z_i\|_F + \|Z\|_F \leq B_r := \frac{dk}{r} (1 + r \bar{h}_{cost})\bar{J}_{\max} + \bar{h}_{grad} r + \bar{h}_G, 
\end{equation*}

For the variance bound, we have
\begin{align*}
    \|\mathbb{E}(Z_iZ_i^\top)  - ZZ^\top\|_F &\leq \|\mathbb{E}(Z_iZ_i^\top)\|_F + \|ZZ^\top\|_F\\
    & \leq \max_{Z_i} (\|Z_i\|_F)^2 + \|Z\|^2_F\\
    &\leq \sigma^2_r := \left(\frac{dk}{r} (1+r\bar{h}_{{cost}}) \bar{J}_{\max}\right)^2 + \left(r \bar{h}_{grad} + \bar{h}_G \right)^2.
\end{align*}
  
 Applying matrix Bernstein inequality \Cref{lemma:Bernstein}, when 
 $$
 M \geq h_{sample}(\frac{4}{\epsilon}, \delta) := \frac{32 \min \left(d, k\right)}{\epsilon^2}\left(\sigma_r^2+\frac{B_r \epsilon}{12 \sqrt{\min \left(d, k\right)}}\right) \log \left[\frac{d+k}{\delta}\right],
 $$
 with probability at least $1-\delta$, 
 $$\| \nabla_r J_i(K) - \widehat{\nabla} \|_F \leq \epsilon/4.
 $$

 For $(iii)$,  by \Cref{lem:finitehorizon}, choosing the horizon length $\ell \geq h_{\ell, grad} :=  \frac{16d^2 k^2 \bar{J}_{\max}^2(\|Q\|_{\max} + \|R\|_{\max} \|K\|^2)}{\epsilon r \mu \sigma^2_{\min}(Q)}$, one has for any $K \in \mathcal{S}_{ML}$, 
 \begin{equation*}
     \| \frac{1}{M} \frac{dk}{r^2}\sum_{m = 1}^M J^{(\ell)}_i (K+U_m) U_m -  \frac{1}{M} \frac{dk}{r^2}\sum_{m = 1}^M J_i (K+U_m) U_m \|_F \leq \frac{\epsilon}{4} .
 \end{equation*}

 To finish the proof, one needs to show that with high probability, $J^{(\ell)}_i $ is close to $ \tilde{J}^{(\ell)}_i (K)= \frac{1}{\ell}\sum_{l=1}^\ell (x_l)^{\top} (Q_i + K^{\top}R_i K ) x_l = \operatorname{Tr}(\tilde{\Sigma}^i_K (Q_i + K^{\top} R_i K))$, therefore, one can show that the sample covariance $\tilde{\Sigma}^i_{K+U_m}$ concentrates, i.e., there exists a polynomial $h_{\ell,var}(\frac{4}{\epsilon}, \delta)$, (see \cite{fazel2018global} Lemma 32,) such that when $\ell \geq h_{\ell,var}(\frac{4}{\epsilon}, \delta)$, $\|\tilde{\Sigma}^i_{K+U_m} -\Sigma^{i, (\ell)}_{K+U_m}\| \leq \epsilon/(4\sigma_{\min}(Q_i) )$, thus $J^{(\ell)}_i - \tilde{J}^{(\ell)}_i (K)$ can be bounded, 
 \begin{align*}
     \| \frac{1}{M} \frac{dk}{r^2}  \sum_{m=1}^M (J^{(\ell)}_i (K+U_m)U_m -  \frac{1}{M} \frac{dk}{r^2}  \sum_{m=1}^M\tilde{J}^{(\ell)}_i (K + U_m)U_m ) \|_F  \leq \frac{\epsilon}{4}. 
 \end{align*}
Adding all four terms together finishes the proof.

    \qedhere
\end{proof}

\begin{lemma*}
    For sufficiently small numbers $\epsilon, \delta \in (0,1)$, given a control policy $K$, let $\ell$, radius $r$, number of trajectory $M$ satisfies that 
    \begin{align*}
      | \mathcal{T}_n |& \geq  h_{sample, task} (\frac{2}{\epsilon}, \frac{\delta}{2}) , \\
        \ell & \geq  \max\{h^1_{\ell} (\frac{1}{\epsilon^{\prime}}, \delta^{\prime}) ,  h^2_{\ell, grad} (\frac{12}{\epsilon}), h^2_{\ell, var}(\frac{12}{\epsilon}, \delta^{\prime})\} , \\
        r & \leq \min \{ h^2_r(\frac{6}{\epsilon}),  h^1_r (\frac{1}{\epsilon}) \}  , \\
        M & \geq \max\{ h^2_M (\frac{1}{\epsilon}, \delta), h^1_M (\frac{1}{\epsilon^{''}}, \frac{\delta}{4})   \} , 
    \end{align*}
where $ h^2_M (\frac{1}{\epsilon}, \delta):= h_{sample}(\frac{1}{\epsilon^{''}}, \frac{\delta^{\prime}}{4})$, $\delta^{\prime} = \delta / h_{sample, task} (\frac{2}{\epsilon}, \frac{\delta}{2} ) $, $\epsilon^{\prime} = \frac{\epsilon}{6 \frac{dk}{r} h_{cost} \bar{J}_{max}}$, $\epsilon^{''} = \frac{\epsilon}{6}$.
 Then, for each iteration the meta-gradient estimation is $\epsilon$-accurate, i.e., 
 \begin{align*}
     \|\tilde{\nabla} \mathcal{L} (K) - \nabla \mathcal{L} (K) \|_F \leq \epsilon
 \end{align*} 
 with probability at least $1 - \delta$. 
\end{lemma*}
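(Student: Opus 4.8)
The plan is to mirror the multi-term decomposition from the proof of \Cref{lem:gradestimate}, inserting two extra layers for the two sources of randomness peculiar to the meta-setting: the sampled task batch $\mathcal{T}_n$ and the inner-loop gradient estimates $\tilde{\nabla}J_i(\widehat{K}_m)$ that enter the adapted policies $K^i_m = \widehat K_m - \eta\tilde{\nabla}J_i(\widehat K_m)$. Write $\nabla_r\mathcal{L}(K) := \frac{dk}{r^2}\E_{i\sim p,\,U\sim\mathbb{S}_r}[J_i(K+U-\eta\nabla J_i(K+U))\,U]$ for the Gaussian-smoothed meta-gradient and interpose the chain of surrogates: $\widehat{\nabla}^{\mathrm t}$, the task-batch average of the smoothed per-task vectors; $\widehat{\nabla}$, its Monte-Carlo realization over $M$ perturbations, still with the \emph{exact} inner gradient and the \emph{exact} cost oracle; $\widehat{\nabla}^{\mathrm i}$, the same but with $\nabla J_i(\widehat K_m)$ replaced by $\tilde{\nabla}J_i(\widehat K_m)$; and finally $\tilde{\nabla}\mathcal{L}(K)$, which additionally replaces $J_i$ by the empirical $\tilde J^{(\ell)}_i$. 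Then bound $\|\nabla\mathcal{L}(K)-\tilde{\nabla}\mathcal{L}(K)\|_F$ by the sum of (i) the smoothing bias $\nabla\mathcal{L}-\nabla_r\mathcal{L}$, (ii) the task-sampling error $\nabla_r\mathcal{L}-\widehat{\nabla}^{\mathrm t}$, (iii) the Monte-Carlo error over $U_m$, (iv) the inner-loop propagation $\widehat{\nabla}-\widehat{\nabla}^{\mathrm i}$, and (v) the finite-horizon / trajectory-concentration error, allocating $\epsilon/6$ to each (and $\epsilon/12$ after the further internal splits inside (iii) and (v), which is what produces the $12$'s in the hypotheses).

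Terms (i)--(iii) are the ``outer'' analogues of the single-task argument. For (i), Stein's identity applied to $\mathcal L$ gives $\E_U[\nabla\mathcal L(K+U)]=\nabla_r\mathcal L(K)$, so Jensen together with the Lipschitz bound $\|\nabla\mathcal L(K+U)-\nabla\mathcal L(K)\|_F\le h_{\mathcal L,grad}r$ from \Cref{lem:perturbationl} gives $\le\epsilon/6$ once $r\le h^2_r(6/\epsilon)$ (which also needs $r\le\underline h_\Delta$ so \Cref{lem:perturbationl} applies uniformly on $\mathcal S$). For (ii), the summands are i.i.d.\ over $i\sim p$, bounded and bounded-variance on $\mathcal S$ (using $J_i\le\bar J_{\max}$ and $\|U\|_F=r$), so \Cref{lemma:Bernstein} yields $|\mathcal T_n|\ge h_{sample,task}(2/\epsilon,\delta/2)$, absorbing failure probability $\delta/2$. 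For (iii), conditioning on the drawn $U_m$'s and on $\mathcal T_n$, the quantity $\widehat{\nabla}-\widehat{\nabla}^{\mathrm t}$ is an average of $M$ bounded, bounded-variance matrices — the meta-versions of the $B_r,\sigma_r^2$ appearing in \Cref{lem:gradestimate}, now with the argument of $J_i$ being the adapted policy, which remains in $\mathcal S$ because $r\le h^1_r$ and $\eta$ is small — so \Cref{lemma:Bernstein} gives $M\ge h^2_M(1/\epsilon,\delta)$.

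Term (iv) is the crux. Conditioned on each $\widehat K_m=K+U_m$, \Cref{lem:gradestimate} gives $\|\nabla J_i(\widehat K_m)-\tilde{\nabla}J_i(\widehat K_m)\|_F\le\epsilon'$ with probability $\ge1-2\delta'$, hence the one-step-adapted policy shifts by at most $\eta\epsilon'$; the local Lipschitz bound on the cost in \Cref{lem:perturbationanalysis} then yields $|J_i(\widehat K_m-\eta\nabla J_i(\widehat K_m))-J_i(K^i_m)|\le h_{cost}\bar J_{\max}\eta\epsilon'$ (after checking both adapted policies lie in $\mathcal S$), and multiplying by the $\tfrac{dk}{r^2}\|U_m\|_F=\tfrac{dk}{r}$ prefactor and averaging contributes $\tfrac{dk}{r}h_{cost}\bar J_{\max}\eta\epsilon'$, which is $\le\epsilon/6$ precisely because $\epsilon'=\tfrac{\epsilon}{6\frac{dk}{r}h_{cost}\bar J_{\max}}$ (the $\eta\le1$ factor only helps). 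Because \Cref{lem:gradestimate} must fire simultaneously for every sampled task, a union bound over the $|\mathcal T_n|$ tasks forces its per-call confidence down to $\delta'=\delta/h_{sample,task}(2/\epsilon,\delta/2)$ and inflates the inner requirements to $M\ge h^1_M(1/\epsilon'',\delta/4)$ and $\ell\ge h^1_\ell(1/\epsilon',\delta')$ — exactly the extra terms in the hypotheses. Term (v) splits once more: the finite-horizon truncation bias $|J^{(\ell)}_i-J_i|$ is handled by \Cref{lem:finitehorizon} with $\ell\ge h^2_{\ell,grad}(12/\epsilon)$, and the empirical-cost fluctuation $|\tilde J^{(\ell)}_i-J^{(\ell)}_i|$ by concentration of the sample covariance with $\ell\ge h^2_{\ell,var}(12/\epsilon,\delta')$, both again scaled by $\tfrac{dk}{r}$. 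Summing the (internally seven) pieces and union-bounding all failure events delivers $\|\tilde{\nabla}\mathcal L(K)-\nabla\mathcal L(K)\|_F\le\epsilon$ with probability $\ge1-\delta$.

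The hard part will be the bookkeeping in (iv): the perturbations $U_m$ play a double role (they define $\widehat K_m$ and they multiply the cost estimates), the inner loop consumes fresh randomness conditioned on $\widehat K_m$, and the $\tfrac{dk}{r^2}$ prefactor blows up as $r\to0$ — so the $O(r)$ smoothing gain in (i) must be traded against an $O(1/r)$ amplification of every downstream estimation error, which is why the chained accuracy targets $\epsilon\mapsto\epsilon'\mapsto\epsilon''$ and the chained confidences $\delta\mapsto\delta'$ appear and must be made mutually consistent. A secondary subtlety is guaranteeing $K^i_m\in\mathcal S$ on the whole high-probability event, since otherwise $\tilde J^{(\ell)}_i(K^i_m)$ is unbounded and every variance/boundedness constant above fails; this is exactly where the smallness of $r$ and $\eta$ and the assumption $K\in\mathcal S$ are used, in the spirit of the stability argument behind \Cref{thm:stabilityattain}.
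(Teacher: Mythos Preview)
Your proposal is correct and follows essentially the same decomposition-and-telescope strategy as the paper: smoothing bias via the Lipschitz bound on $\nabla\mathcal{L}$, task-sampling via matrix Bernstein, Monte-Carlo over the $U_m$'s via Bernstein again, propagation of the inner-loop gradient error through the cost Lipschitz constant (scaled by $\tfrac{dk}{r}$), and finite-horizon truncation plus sample-covariance concentration for the empirical returns. The only structural difference is the \emph{order} in which you telescope: you smooth first and then task-sample the smoothed per-task vectors $\nabla_r\mathcal{L}_i(K)$, whereas the paper task-samples first on the \emph{exact} per-task meta-gradients $\nabla\mathcal{L}_i(K)=(I-\eta\nabla^2 J_i(K))\nabla J_i(K-\eta\nabla J_i(K))$ and only afterwards handles smoothing, inner-gradient replacement, and Monte-Carlo on a per-task basis. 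Likewise, the paper swaps in the inner estimate $\tilde{\nabla}J_i$ \emph{before} passing to the finite-$M$ average, while you do Monte-Carlo with the exact inner gradient first and then replace it. Both orderings are valid telescoping chains; yours is arguably cleaner (one change per step), while the paper's choice makes $h_{sample,task}$ depend on the Hessian bound $\bar h_H$ rather than on $r$, which decouples the task-batch requirement from the smoothing radius.
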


\begin{proof}[proof of \Cref{lem:metagradestimate}]
Again, the objective of this lemma is to show how accurate the meta gradient estimation is when the learning parameters are properly chosen. Essentially, we want to control $\|\tilde{\nabla} \mathcal{L} (K) - \nabla \mathcal{L} (K)\|$, where $  \mathcal{L} (K) := \mathbb{E}_{i \sim p} [\mathcal{L}_i (K)] $, we define the following quantities:
    \begin{align*} 
    \tilde{\nabla} \mathcal{L} (K) & = \frac{1}{|\mathcal{T}_n|} \sum_{i \in \mathcal{T}_n }\tilde{ \nabla} \mathcal{L}_i (K) \\
      \nabla \mathcal{L}_i (K) &= \nabla  J_i (K - \eta \nabla J_i (K)) \\
        \nabla_r \mathcal{L}_i (K) & = \frac{dk}{r^2} \mathbb{E}_{U \sim \mathbb{S}_r} [J_i (K + U - \eta \nabla J_i (K+U)) U] \\
        \widehat{\nabla}_r \mathcal{L}_i (K) & = \frac{dk}{r^2} \mathbb{E}_{U \sim \mathbb{S}_r} [J_i (K + U - \eta \tilde{\nabla} J_i (K+U)) U]  \\
        \tilde{\nabla} \mathcal{L}_i (K) & = \frac{dk}{r^2} \sum_{m=1}^M \tilde{J}^{(\ell)}_i(K+U_m - \eta \tilde{\nabla} J_i(K+U_m) ) U_m .
    \end{align*}
Then, similar to the proof of Lemma \ref{lem:gradestimate} we are able to break the gradient estimation error into two parts:
\begin{align*}
     \|\tilde{\nabla} \mathcal{L} (K) - \nabla \mathcal{L} (K) \| & \leq \| \mathbb{E}_{i \sim p} [\nabla \mathcal{L}_i (K)] - \frac{1}{|\mathcal{T}_n|} \sum_{i \in \mathcal{T}_n }\nabla \mathcal{L}_i (K) \|  \\
     & \quad + \frac{1}{|\mathcal{T}_n|}\sum_{i \in \mathcal{T}_n }\|\nabla \mathcal{L}_i(K) - \tilde{\nabla} \mathcal{L}_i(K) \| . 
\end{align*}

The first term is the difference between the sample mean of meta-gradients across different tasks, we apply matrix Bernstein \Cref{lemma:Bernstein} to show that when the task batch size $|\mathcal{T}_n|$ is large enough, with probability $\frac{\delta}{2}$,
\begin{equation*}
    \| \frac{1}{|\mathcal{T}_n|} \sum_{i \in \mathcal{T}_n }{ \nabla} \mathcal{L}_i (K)  - \mathbb{E}_{i \sim p} \nabla  \mathcal{L}_i (K)\|_F \leq  \frac{\epsilon}{2}.
\end{equation*} 
We begin with the expression of the meta-gradient:
\begin{align*}
   \nabla \mathcal{L}_i (K) =  (I - \eta\nabla^2 J_i (K) ) \nabla J_i (K - \eta \nabla J_i (K)), 
\end{align*} 
and let an individual sample be $X_i = \nabla \mathcal{L}_i (K)$, and $X = \mathbb{E}_{i \sim p} \nabla \mathcal{L}_i (K)$, then, it is not hard to establish the following using \Cref{applem:uniformbounds}:
 \begin{align*}
      \|X_i\|_F \leq (1 + \eta \bar{h}_H) \bar{h}_G  \quad \quad \|X\|_F \leq (1 + \eta \bar{h}_H) \bar{h}_G .
 \end{align*}
Thus, 
 \begin{align*}
     \|X - X_i\|_F \leq B_{\mathcal{T}} & :=  2(1+\eta \bar{h}_H) \bar{h}_G  \quad almost \ surely, \\
     \|\mathbb{E}(X_iX_i^\top)  - XX^\top\|_F &\leq \|\mathbb{E}(X_iX_i^\top)\|_F + \|XX^\top\|_F\\
    & \leq \max_{X_i} (\|X_i\|_F)^2 + \|X\|^2_F \\
    & \leq \sigma_{\mathcal{T}_n}^2 := 2(1+\eta \bar{h}_H)^2 \bar{h}^2_G .
  \end{align*}
  Therefore, the final requirement is for the task batch size to be sufficient: $$|\mathcal{T}_n| \geq  h_{sample, task} (\frac{2}{\epsilon}, \frac{\delta}{2}) := \frac{8 \min \left(d, k\right)}{\epsilon^2}\left(\sigma_{\mathcal{T}}^2+\frac{B_{\mathcal{T}} \epsilon}{6 \sqrt{\min \left(d, k\right)}}\right) \log \left[\frac{2(d+k)}{\delta}\right].$$

For the second term $\frac{1}{|\mathcal{T}_n|}\sum_{i \in \mathcal{T}_n } \| \nabla \mathcal{L}_i(K) - \tilde{\nabla} \mathcal{L}_i(K)\|$,  we bound each task-specific difference individually, which can be bounded as the following using triangle inequality:
    \begin{align*}
     \| \nabla  - \tilde{\nabla} \| \leq \underbrace{\| \nabla - \nabla_r\|}_{(i)} + \underbrace{\| \nabla_r - \widehat{\nabla}_r\|}_{(ii)} + \underbrace{\| \widehat{\nabla}_r - \tilde{\nabla} \|}_{(iii)}.
    \end{align*}
To quantify $(i)$ is to quantify the difference between $\nabla J_i (K - \eta \nabla J_i (K))$ and $ \nabla_r \mathcal{L}_i \equiv \nabla J_i(K + U- \eta \nabla J_i (K+U))$, when $U$ is uniformly sampled from the $r$-sphere.
Applying \Cref{applem:perturbationanalysis} and \Cref{applem:uniformbounds}, we have
\begin{align*}
 &  \quad \quad \|  \nabla \mathcal{L}_i (K+U) -  \nabla \mathcal{L}_i(K) \|_F \\
 & = \|(I - \eta \nabla^2  J_i (K+ U)) \nabla J_i (K + U - \eta \nabla J_i (K+U))  \\ 
 & \quad\quad\quad\quad - (I - \eta \nabla^2 J_i (K)) \nabla J_i (K - \eta \nabla J_i (K)) \|_F \\
& = \| \left(  (I - \eta \nabla^2  J_i (K+ U) ) - ( I - \eta \nabla^2 J_i (K) ) \right) \nabla J_i (K + U - \eta \nabla J_i (K+U))\|_F \\
& \quad +\| ( I - \eta \nabla^2 J_i (K) ) \left( \nabla J_i (K - \eta \nabla J_i (K)) - \nabla J_i (K + U - \eta \nabla J_i (K+U) \right)\|_F \\
& \leq   \eta \bar{h}_{hess} r \bar{h}_G  +  (1 + \eta h_H) h_{grad} (1 + \eta h_{grad})r \\
& = (\eta \bar{h}_{hess} \bar{h}_G +(1 + \eta h_H) (1 + \eta h_{grad}) h_{grad} ) r
\end{align*}
Let $r \leq h^2_r (\frac{6}{\epsilon}) := \frac{1}{6(\eta \bar{h}_{hess} \bar{h}_G +(1 + \eta h_H + \eta h_{grad}  + \eta^2 h_{H}h_{grad})  h_{grad} )}$, we arrive at $(i) \leq \frac{\epsilon}{6}$.

For $(ii)$, as we have established in \Cref{lem:gradestimate}, for each task $i$, as long as the parameters $\ell, r,$ and $M$ are bounded by certain polynomials, with probability $1-\delta$, $\| \nabla J_i  - \tilde{\nabla} J_i \|_F \leq \epsilon^{\prime}$, which enables us to apply the perturbation analysis \Cref{applem:perturbationanalysis} again, 
\begin{align*}
     \| \nabla_r \mathcal{L}_i (K) - \widehat{\nabla}_r  \mathcal{L}_i (K)\|_F \leq \frac{dk}{r} h_{cost} \bar{J}_{max}\epsilon^{\prime}.
\end{align*}
Let $\frac{\epsilon}{6}  = \frac{dk}{r} h_{cost} \bar{J}_{max}\epsilon^{\prime}$, we obtain that once $r \leq h^1_r (1/\epsilon^{\prime} )$, $\ell \geq h^1_{\ell}(1/\epsilon^{\prime}, \frac{\delta^{\prime}}{4})$, and $M \geq h^1_M(1/\epsilon^{\prime}, \frac{\delta^{\prime}}{4})$, it holds that $(ii) \leq \frac{\epsilon}{6}$ with probability $1 - \frac{\delta}{2}$.

For $(iii)$, the analysis is identical to the analysis for $(ii)+(iii)$ plus the finite horizon approximation error in the proof of \Cref{lem:gradestimate}, except that the cost function $J_i$ is evaluated at $K - \eta \tilde{\nabla} J_i (K)$, but the uniform bounds \Cref{applem:uniformbounds} still apply here.
We hereby define each individual sample $Z_i := \frac{dk}{r^2} J_i (K+U_m - \eta \tilde{\nabla} J_i (K+ U_m)) U_m  $ and the mean $Z := \mathbb{E}_{ U \sim \mathbb{B}_r } \nabla J(K+U- \eta \tilde{\nabla} J_i (K+ U))$. For $Z_i$, we have:
\begin{align*}
      \| Z_i \|_F & \leq \frac{dk}{r^2} | J_i (K+ U_m - \eta\tilde{\nabla} J_i (K+ U_m))  - J_i(K - \eta\tilde{\nabla} J_i (K)) 
      \\ & \quad + J_i(K - \eta\tilde{\nabla} J_i (K)) |  \|U_m \|_F \\
     & \leq \frac{dk}{r^2} ( h_{cost} \bar{J}_{\max} (1+\eta \frac{dk}{r}(\bar{h}_{G} + \epsilon^{''})\|U_m \|_F  + \bar{J}_{\max}  )r \\
     & = \frac{dk}{r}( 1 + r h_{cost} (1+\eta \frac{dk}{r} (\bar{h}_{G} + \epsilon^{''})) \bar{J}_{\max} ,
\end{align*}
where the second inequality requires the Lipshitz analysis of the composite function, where the inner function $ \tilde{K} = K - \eta\tilde{\nabla} J_i$ has a Lipshitz constant $1+\eta \frac{dk}{r}  (\bar{h}_{G} + \epsilon^{''})$, where $\epsilon^{''} = \frac{\epsilon}{6}$ is depending on the parameters for the inner loop. 
For $Z$, we have:
\begin{align*}
     \|Z\|_F & \leq \mathbb{E}_{ U \sim \mathbb{B}_r } \| \nabla J(K+U - \eta \tilde{\nabla} J_i (K+U)) -  \nabla J(K- \eta \tilde{\nabla} J_i (K))  \\
     & \quad + \nabla J(K - \eta \tilde{\nabla} J_i (K))\|_F \\
    & \leq \mathbb{E}_{ U \sim \mathbb{B}_r } \| \nabla J(K+U - \eta \tilde{\nabla} J_i (K+U) ) -  \nabla J(K - \eta \tilde{\nabla} J_i (K)) \|_F  \\
    & \quad + \| \nabla J(K- \eta \tilde{\nabla} J_i (K))\|_F \\
    & \leq   h_{grad}( 1 + \eta \frac{dk}{r} (\bar{h}_{H}+ \epsilon^{''}))\|U\|_F + h_G (K - \eta \tilde{\nabla} J_i(K)) 
    \\ & \leq  \bar{h}_{grad}( 1 + \eta \frac{dk}{r} (\bar{h}_{H}+ \epsilon^{''})) r + \bar{h}_G .
\end{align*}

Therefore the new $B_r$ and $\sigma_r$ can be bounded as:
\begin{align*}
     B_r & := \frac{dk}{r}( 1 + r h_{cost} (1+\eta \frac{dk}{r} (\bar{h}_{G} + \epsilon^{''})) \bar{J}_{\max} + \bar{h}_{grad}( 1 + \eta \frac{dk}{r} (\bar{h}_{H}+ \epsilon^{''})) r + \bar{h}_G
     \\
     \sigma_r & := \left( \frac{dk}{r}( 1 + r h_{cost} (1+\eta \frac{dk}{r} (\bar{h}_{G} + \epsilon^{''})) \bar{J}_{\max}\right)^2 + \left(\bar{h}_{grad}( 1 + \eta \frac{dk}{r} (\bar{h}_{H}+ \epsilon^{''})) r + \bar{h}_G\right)^2 .
\end{align*}
Applying matrix Bernstein inequality \Cref{lemma:Bernstein} again, when 
 $$
 M \geq h^2_{M} (\frac{1}{\epsilon}, \delta) := h_{sample}(\frac{1}{\epsilon^{''}}, \frac{\delta^{\prime}}{4}) := \frac{96 \min \left(d, k\right)}{\epsilon^2}\left(\sigma_r^2+\frac{B_r \epsilon}{18 \sqrt{\min \left(d, k\right)}}\right) \log \left[4\frac{d+k}{\delta^{\prime}}\right],
 $$
 with probability at least $1-\frac{\delta^{\prime}}{4}$, for any $K \in \mathcal{S}_{ML}$, 
 $$\| \nabla_r \mathcal{L}_i(K) - \frac{dk}{r^2} \sum_{m=1}^M {J}_i(K+U_m - \eta \tilde{\nabla} J_i(K+U_m) ) U_m \|_F \leq \epsilon/6.
 $$
Again by previous analysis, we choose here the horizon length $\ell \geq 
 h^2_{\ell, grad} (\frac{12}{\epsilon}) := \frac{32d^2 k^2 \bar{J}_{\max}^2(\|Q\|_{\max} + \|R\|_{\max} \|K\|^2)}{\epsilon r \mu \sigma^2_{\min}(Q)}$ and $\ell \geq h_{\ell, var} (\frac{12}{\epsilon}, \frac{\delta^{\prime}}{4})$, so that the following two hold with probability $1 - \frac{\delta^{\prime}}{4}$:
 \begin{align*}
      \| \frac{1}{M} \frac{dk}{r^2}\sum_{m = 1}^M J^{(\ell)}_i (K+U_m&   - \eta \tilde{\nabla} J_i(K+U_m)) U_m \\
      & \quad -  \frac{1}{M} \frac{dk}{r^2}\sum_{m = 1}^M J_i (K+U_m  - \eta \tilde{\nabla} J_i(K+U_m)) U_m \|_F \leq \frac{\epsilon}{12} \\
       \| \frac{1}{m} \frac{dk}{r^2}  \sum_{m=1}^M J^{(\ell)}_i (K+U_m &- \eta \tilde{\nabla} J_i(K+U_m))U_m \\
       & \quad -  \frac{1}{M} \frac{dk}{r^2}  \sum_{m=1}^M \tilde{J}^{(\ell)}_i (K + U_m- \eta \tilde{\nabla} J_i(K+U_m))U_m  \|_F  \leq \frac{\epsilon}{12} .
 \end{align*}
Hence, we arrive at, with high probability $1 - \delta^{\prime}$, 
\begin{equation*}
    \|\nabla \mathcal{L}_i (K) - \tilde{\nabla}\mathcal{L}_i (K) \|_F \leq  \frac{1}{2}\epsilon .
\end{equation*} 
The proof is finished by letting $\delta^{\prime} = \delta / h_{sample, task} (\frac{2}{\epsilon}, \frac{\delta}{2} ) $, and applying a a union bound argument.

\end{proof}

\section{Theoretical Guarantees}\label{sec:appc}

\begin{theorem} \label{appthm:stabilityattain}
Given an initial stabilizing controller $K_0 \in \mathcal{S}$ and scalar $\delta \in (0,1)$,  let $\varepsilon_i := \frac{\lambda_i \Delta^i_0}{6}$, the adaptation rate $\eta \leq \min\{ \sqrt{\frac{1}{4(\bar{h}_{grad}^2 k^2 + \bar{h}_{grad}^2 \bar{h}_{H}^2 + \bar{h}_H^2 )}} , \frac{1}{4\bar{h}_{\text{grad}}}\}$, 
and $\varepsilon :=  \frac{\bar{\lambda}_i  \bar{\Delta}^i_0 (1 -2\phi_1)\phi_2 }{2 (1 + 4\phi_2 - 2\phi_1)}$ where $\phi_1 := 2(k^2 + \eta^2\bar{h}_H^2 )\eta^2 \bar{h}^2_{grad} + 2 \eta^2 \bar{h}_H^2$ and $\phi_2 := (k^2 + \eta^2 \bar{h}_H^2) (2 + 2 \bar{h}^2_{grad} \eta^2$, and the learning rate $\alpha \leq \frac{\frac{1}{2} - \phi_1}{2 \phi_2 \bar{h}_{grad}}$. 
In addition, the task batch size $|\mathcal{T}_n|$, the smoothing radius $r$, roll-out length $\ell$, and the number of sample trajectories satisfy:
\begin{align*}
      | \mathcal{T}_n |& \geq  h_{sample, task} (\frac{2}{\varepsilon}, \frac{\delta}{2}) , \\
        \ell & \geq  \max\{h^1_{\ell} (\frac{1}{\varepsilon_i}, \frac{\delta}{2}), h^1_{\ell} (\frac{1}{\varepsilon^{\prime}}, \delta^{\prime}) ,  h^2_{\ell, grad} (\frac{12}{\varepsilon}), h^2_{\ell, var}(\frac{12}{\varepsilon}, \delta^{\prime})\} , \\
        r & \leq \min \{ h^1_r (\frac{1}{\varepsilon_i}),  h^1_r (\frac{1}{\varepsilon}), h^2_r(\frac{6}{\varepsilon}) \}  , \\
        M & \geq \max\{ h^1_M(\frac{1}{\varepsilon_i}, \frac{\delta}{2}), h^1_M (\frac{1}{\varepsilon^{''}},\frac{\delta}{4})  h^2_M (\frac{1}{\varepsilon}, \delta)  \} , 
    \end{align*}
where $ h^2_M (\frac{1}{\varepsilon}, \delta):= h_{sample}(\frac{1}{\varepsilon^{''}}, \frac{\delta^{\prime}}{4})$, $\delta^{\prime} = \delta / h_{sample, task} (\frac{2}{\varepsilon}, \frac{\delta}{2} ) $, $\varepsilon^{\prime} = \frac{\varepsilon}{6 \frac{dk}{r} h_{cost} \bar{J}_{max}}$, $\varepsilon^{''} = \frac{\varepsilon}{6}$. 
Then, with probability, $1-\delta$, $K^i_n, K_n \in \mathcal{S}$, for every iteration $\{0,1, \ldots, N\}$ of Algorithm \ref{metalqr}.
\end{theorem}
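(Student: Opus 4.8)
The plan is to argue by induction on the iteration index $n$, carrying the invariant ``$K_n\in\mathcal{S}$, and $K^i_n\in\mathcal{S}_i$ for every $i\in[I]$'' (so in particular $K_n\in\bar{\mathcal{K}}$ and all the costs below are finite); the base case $n=0$ is the hypothesis $K_0\in\mathcal{S}$. Before the induction I would set up the probabilistic scaffolding: condition on the events of \Cref{lem:gradestimate} (invoked with accuracy $\varepsilon_i$, once per task, at every iteration---hence the $1/\varepsilon_i$, $h^1_\ell(1/\varepsilon_i,\delta/2)$, $h^1_r(1/\varepsilon_i)$, $h^1_M(1/\varepsilon_i,\delta/2)$ appearing in the prescription) and of \Cref{lem:metagradestimate} (invoked with accuracy $\varepsilon$, once per iteration---hence the $1/\varepsilon$, the $h^2_\cdot$ terms, $\varepsilon',\varepsilon''$, and $\delta'=\delta/h_{\mathrm{sample,task}}$). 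On the intersection of these events one has the deterministic guarantees $\|\tilde\nabla J_i(K_n)-\nabla J_i(K_n)\|_F\le\varepsilon_i$ and $\|\tilde\nabla\mathcal{L}(K_n)-\nabla\mathcal{L}(K_n)\|_F\le\varepsilon$; a union bound over the $I$ tasks and the iterations $0,\dots,N$ (the per-step and per-task failure probabilities already deflated for exactly this purpose) bounds the total failure probability by $\delta$, and everything afterwards is deterministic.

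\textbf{Step 1: the inner adaptation stays sub-level.} Fix a task $i$ and suppose $K_n\in\mathcal{S}\subseteq\mathcal{S}_i$. From \Cref{lem:uniformbounds}, $\|\nabla J_i(K_n)\|_F\le\bar h_G$, so the step $\|K^i_n-K_n\|_F=\eta\|\tilde\nabla J_i(K_n)\|_F\le\eta(\bar h_G+\varepsilon_i)$ is, by the constraints $\eta\le 1/(4\bar h_{\mathrm{grad}})$ and $r\le\underline h_\Delta$, short enough that $K_n$ and $K^i_n$ lie within the $h_\Delta$-radius, so the local-smoothness estimates of \Cref{lem:perturbationanalysis} are valid along this step. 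Writing $K^i_n=K_n-\eta\nabla J_i(K_n)-\eta e^i_n$ with $\|e^i_n\|_F\le\varepsilon_i$, applying the descent lemma implied by $h_{\mathrm{grad}}$-smoothness of $J_i$, splitting the $e^i_n$ cross-terms with Young's inequality, and using the two-sided estimate $\lambda_i\bigl(J_i(K_n)-J_i(K^\star_i)\bigr)\le\|\nabla J_i(K_n)\|_F^2\le\bar h_G^2$ of \Cref{lem:graddom}, I would obtain a one-step inequality of the shape $J_i(K^i_n)-J_i(K^\star_i)\le(1-c_1\eta\lambda_i)\bigl(J_i(K_n)-J_i(K^\star_i)\bigr)+(\text{residual controlled by }\varepsilon_i)$. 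The choice $\varepsilon_i=\lambda_i\Delta^i_0/6$ makes the residual a fixed small fraction of $\lambda_i\Delta^i_0$, so the right-hand side does not leave the sub-level set: $K^i_n\in\mathcal{S}_i$ (and, since $\eta$ is small and the $\mathcal{S}_j$ overlap around $K_n$, $K^i_n\in\mathcal{S}$). In particular the exact adaptation $K_n-\eta\nabla J_i(K_n)$ is stabilizing for task $i$---the perturbation analysis of \Cref{lem:perturbationanalysis} bridges the gap between the exact and the estimated gradient---so $K_n\in\bar{\mathcal{K}}$.

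\textbf{Step 2: the outer update stays in $\mathcal{S}$.} By \Cref{lem:uniformboundsl}, $\|\nabla\mathcal{L}(K_n)\|_F\le\bar h_{G,\mathcal{L}}$, so the bound $\alpha\le(\tfrac12-\phi_1)/(2\phi_2\bar h_{\mathrm{grad}})$ again makes $\|K_{n+1}-K_n\|_F=\alpha\|\tilde\nabla\mathcal{L}(K_n)\|_F$ small enough for the smoothness estimates of \Cref{lem:perturbationl} and \Cref{lem:perturbationanalysis} to apply along the outer step. The heart of this step is a one-step near-contraction for $\Delta_n:=\mathbb{E}_{i\sim p}\bigl[J_i(K_n)-J_i(K^\star_i)\bigr]$ (equivalently, its worst-over-$i$ version with the $\bar\cdot$ constants). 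Starting from the descent lemma applied to $\mathcal{L}$, substituting $K_{n+1}-K_n=-\alpha\nabla\mathcal{L}(K_n)-\alpha E_n$, and expanding $\nabla\mathcal{L}(K_n)=\mathbb{E}_i\bigl[(I-\eta\nabla^2 J_i(K_n))\nabla J_i(K_n-\eta\nabla J_i(K_n))\bigr]$, I would bound (i) the amplification $\|\nabla\mathcal{L}(K_n)\|_F^2\le\phi_2\,\mathbb{E}_i\|\nabla J_i(K_n)\|_F^2$ using $\|I-\eta\nabla^2 J_i\|_F\le k+\eta\bar h_H$ and the $h_{\mathrm{grad}}$-Lipschitzness of $\nabla J_i$ (this is the origin of $\phi_2=(k^2+\eta^2\bar h_H^2)(2+2\bar h^2_{\mathrm{grad}}\eta^2)$), and (ii) the $O(\eta^2)$ mismatch $\|\nabla\mathcal{L}(K_n)-\mathbb{E}_i\nabla J_i(K_n)\|_F^2\le\phi_1\,\mathbb{E}_i\|\nabla J_i(K_n)\|_F^2$ via \Cref{lem:perturbationanalysis} (the origin of $\phi_1$); together with gradient domination $\mathbb{E}_i\|\nabla J_i(K_n)\|_F^2\ge\bar\lambda_i\Delta_n$ and the estimation slack $\|E_n\|_F\le\varepsilon$, this gives $\Delta_{n+1}\le(1-c_2\alpha\bar\lambda_i)\Delta_n+(\text{residual}\propto\varepsilon^2)$ with constants governed by $\phi_1,\phi_2$. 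The prescribed $\alpha$ keeps the contraction factor in $(0,1)$ (this needs $\phi_1<\tfrac12$, which is exactly what the $\eta$-bound delivers), and the prescribed $\varepsilon=\bar\lambda_i\bar\Delta^i_0(1-2\phi_1)\phi_2/\bigl(2(1+4\phi_2-2\phi_1)\bigr)$ makes the residual small enough that $\Delta_{n+1}\le\Delta_n$, hence $\Delta_{n+1}\le\Delta_0$. Since $p$ has full support, $J_j(K_{n+1})-J_j(K^\star_j)\le\Delta_{n+1}/p_{\min}\le\Delta_0/p_{\min}\le\gamma_j\Delta^j_0$ for the (problem-determined, sufficiently large) $\gamma_j$, i.e.\ $K_{n+1}\in\mathcal{S}$, closing the induction.

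\textbf{Main obstacle.} The hard part is the contraction inequality in Step 2: because $\nabla\mathcal{L}$ is an expectation over \emph{heterogeneous} tasks of products of task-specific Hessians with task-specific adapted gradients, a descent/contraction estimate for the (per-task or averaged) sub-optimality does not simply drop out of the single-task gradient-domination and smoothness lemmas---the cross-task cross terms must be absorbed, and tracking them is precisely what forces the two constants $\phi_1$ (the $O(\eta^2)$ Hessian-mismatch term) and $\phi_2$ (the gradient-amplification term) and the requirement $\phi_1<\tfrac12$ that pins down $\eta$. Intertwined with this is the propagation of estimation error---the meta-gradient accuracy $\varepsilon$ of \Cref{lem:metagradestimate} itself consumes the inner-loop accuracy $\varepsilon'$, which is why $\ell$ and $M$ must scale with $1/\varepsilon'$ and the failure budget is split through $\delta'=\delta/h_{\mathrm{sample,task}}$---and one must check that this nested budget still fits below the thresholds fixed by $\phi_1,\phi_2$. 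This bookkeeping is what makes the parameter list in the statement long, but it is otherwise routine once the contraction inequality is secured.
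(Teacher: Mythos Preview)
Your plan is essentially the paper's proof: induction on $n$, Step~1 via descent lemma on $J_i$ plus gradient domination with $\varepsilon_i=\lambda_i\Delta^i_0/6$, Step~2 via a near-contraction for $\Delta_n=\mathbb{E}_{i\sim p}[J_i(K_n)-J_i(K^\star_i)]$ in which the two constants $\phi_1,\phi_2$ play exactly the roles you identify (mismatch $\|\mathbb{E}_i\nabla J_i-\nabla\mathcal{L}\|^2\le\phi_1\|\mathbb{E}_i\nabla J_i\|^2$ and amplification $\|\nabla\mathcal{L}\|^2\le\phi_2\|\mathbb{E}_i\nabla J_i\|^2$), followed by the meta-gradient error control from \Cref{lem:metagradestimate}.

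One point to fix: in Step~2 you write ``descent lemma applied to $\mathcal{L}$,'' but since you are tracking $\Delta_n$ (not $\mathcal{L}(K_n)-\mathcal{L}(K^\star)$), the descent lemma must be applied to each $J_i$ (equivalently to $K\mapsto\mathbb{E}_{i\sim p}[J_i(K)]$), using the constant $\bar h_{\mathrm{grad}}$ rather than $\bar h_{\mathcal{L},\mathrm{grad}}$. This is exactly what the paper does: it starts from $\mathbb{E}_{i\sim p}[J_i(K_{n+1})-J_i(K_n)]\le\langle\mathbb{E}_{i\sim p}\nabla J_i(K_n),-\alpha\tilde\nabla\mathcal{L}(K_n)\rangle+\tfrac{\bar h_{\mathrm{grad}}\alpha^2}{2}\|\tilde\nabla\mathcal{L}(K_n)\|^2$, and then the $\phi_1$-term is precisely what bridges $\mathbb{E}_i\nabla J_i$ and $\nabla\mathcal{L}$ in the inner product. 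Your final remark about passing from the averaged bound $\Delta_{n+1}\le\bar\Delta^i_0$ to per-task membership $K_{n+1}\in\mathcal{S}_j$ via $p_{\min}$ is a reasonable observation; the paper simply asserts $K_1\in\mathcal{S}$ from the averaged inequality without spelling that out.
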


\proof[Proof]{
Our gradient of the meta-objective is estimated through a double-layered zeroth-order estimation, here we begin by showing that given a stabilizing initial controller $K_0 \in \mathcal{S}$, one may select $\eta, r$, $\ell$, and $M$ to ensure that it is also MAML stabilizing, i.e., $K^i_0 := K_0 - \eta \tilde{\nabla} J_i (K_0) \in \mathcal{S} \subseteq \mathcal{K}$ for all task $i$.
We start by using the local-smoothness smoothness property:
    \begin{align*}
   & \quad \   J_i({K}^i_0) - J_i (K_0) \\ 
   & \leq \langle \nabla J_i(K_0), {K}^i_0 - K_0 \rangle + \frac{\bar{h}_{{grad}}}{2}\|{K}^i_0 - K_0\|^2_F \\
     &= \langle \nabla J_i(K_0), - \eta \tilde{\nabla} J_i(K_0) \rangle + \frac{\bar{h}_{{grad}}\eta^2}{2}\|\tilde{\nabla} J_i(K_0)\|^2_F\\
     &\leq -\frac{\eta}{2}\|\nabla J_i(K_0)\|^2_F + \frac{\eta}{2}\|\tilde{\nabla} J_i(K_0) - \nabla J_i(K_0)\|^2_F + \frac{\bar{h}_{{grad}}\eta^2}{2}\|\tilde{\nabla} J_i(K_0)\|^2_F\\
     &\leq \left( \bar{h}_{{grad}}\eta^2 -\frac{\eta}{2} \right)\|\nabla J_i(K_0)\|^2_F + \left( \bar{h}_{{grad}}\eta^2 + \frac{\eta}{2} \right)\|\tilde{\nabla} J_i(K_0) - \nabla J_i(K_0)\|^2_F \\
     &\stackrel{(i)}{\leq} -\frac{\eta}{4} \|\nabla J_i(K_0)\|^2_F + \frac{3\eta}{4}\|\tilde{\nabla} J_i(K_0) - \nabla J_i(K_0)\|^2_F, 
\end{align*}
where inequality $(i)$ comes from the selection of $\eta \leq \frac{1}{4\bar{h}_{\text{grad}}}$. Note that this selection is for constructing a monotone recursion.
By \Cref{applem:graddom}, we can further bound the term $- \frac{\eta}{4}\| \nabla J_i (K_0)\|^2_F \leq -\frac{\eta \lambda_i}{4} \left( J_i (K_0) - J_i(K^*_i) \right)$, rearranging the terms we get:
\begin{align*}
    & \quad \   J_i({K}^i_0) - J_i (K^*_i) \\ & \leq (1-\frac{\eta \lambda_i }{4}) \left( J_i (K_0) - J_i(K^*_i) \right)  + \frac{3\eta}{4}\|\tilde{\nabla} J_i(K_0) - \nabla J_i(K_0)\|^2_F,  \\ 
 &  = (1 - \frac{\eta \lambda_i}{4}) \Delta^i_0  + \frac{3\eta}{4}\|\tilde{\nabla} J_i(K_0) - \nabla J_i(K_0)\|^2_F
\end{align*}

Now the business is to characterize the distance between the estimated gradient $\tilde{\nabla}J_i(K_0)$ and $\nabla J_i(K_0)$. According to \Cref{lem:gradestimate}, let $\varepsilon_i  = \frac{\lambda_i \Delta^i_0}{6}$, when $\ell \geq h^1_{\ell} (\frac{1}{\varepsilon_i}, \frac{\delta}{2})$, $r \leq h^1_r (\frac{1}{\varepsilon_i})$ and $M \geq  h^1_M(\frac{1}{\varepsilon_i}, \frac{\delta}{2})$, $\|\tilde{\nabla} J_i(K_0) - \nabla J_i(K_0)\|^2_F \leq \varepsilon_i $ with probability $1 - \delta$, which leads to:
\begin{align*}
    J_i({K}^i_0) - J_i (K^*_i) & \leq (1 - \frac{\eta \lambda_i}{8}) \Delta^i_0 .
\end{align*}
Therefore, $J_i({K}^i_0) \leq J_i(K_0)$, which means that $K_0 - \eta  \tilde{\nabla}J_i(K_0) \in \mathcal{S}$. 

Now, we proceed to show that $K_1 \in \mathcal{S}$ as well. By smoothness property, we have that the meta-gradient update yields, $\forall n$:

\begin{align*}
    & \quad \quad \mathbb{E}_{i \sim p } [ J_i (K_{n+1}) - J_i (K_n) ] \leq 
 \langle  \mathbb{E}_{i \sim p } \nabla J_i (K_n), K_{n+1} - K_{n} \rangle + \frac{\bar{h}_{{grad}}}{2}\|K_{n+1} - K_n\|^2_F \\
    &=\langle  \mathbb{E}_{i \sim p } \nabla J_i (K_n), -\alpha \tilde{\nabla}  \mathcal{L}(K_n) \rangle + \frac{\bar{h}_{{grad}}\alpha^2}{2}\| \tilde{\nabla}  \mathcal{L}(K_n)\|^2_F\\
    & \leq -\frac{\alpha }{2}\|\mathbb{E}_{i \sim p } \nabla J_i (K_n)\|^2_F + \frac{\alpha}{2}\|\mathbb{E}_{i \sim p } \nabla J_i (K_n) - \tilde{\nabla} \mathcal{L}(K_n)\|^2_F + \frac{\bar{h}_{{grad}}\alpha^2}{2}\|\tilde{\nabla}  \mathcal{L}(K_n)\|^2_F \\
    & \leq -\frac{\alpha}{2}\|\mathbb{E}_{i \sim p } \nabla J_i (K_n)\|^2_F + \alpha \|\mathbb{E}_{i \sim p } \nabla J_i (K_n)  - \nabla \mathcal{L}(K_n)\|^2_F  \\
    & \quad +  ( \alpha + \alpha^2 \bar{h}_{grad}) \|\tilde{\nabla}  \mathcal{L}(K_n)  - \nabla \mathcal{L}(K_n)\|^2_F  + \alpha^2 \bar{h}_{grad} \|{\nabla}  \mathcal{L}(K_n)\|_F^2 .
\end{align*}
The perturbation analysis for the difference term $\|\mathbb{E}_{i \sim p } \nabla J_i (K_n)  - \nabla \mathcal{L}(K_n)\|^2_F$ and the uniform bounds on the gradients and Hessians show that, 
\begin{align*}
     \|\mathbb{E}_{i \sim p } \nabla J_i (K_n)  - \nabla \mathcal{L}(K_n)\|^2_F & \leq  (2(k^2 + \eta^2\bar{h}_H^2 )\eta^2 \bar{h}^2_{grad} + 2 \eta^2 \bar{h}_H^2) \|\nabla J_i (K_n )\|_F^2\\
     & := \phi_1 \|\mathbb{E}_{i \sim p} \nabla J_i (K_n)\|_F^2, \\
     \|{\nabla}  \mathcal{L}(K_n)\|_F^2 & \leq   (k^2 + \eta^2 \bar{h}_H^2) (2 + 2 \bar{h}^2_{grad} \eta^2 ) \|\mathbb{E}_{i \sim p} \nabla J_i (K_n) \|^2_F \\ & := \phi_2  \|\mathbb{E}_{i \sim p} \nabla J_i (K_n) \|_F^2 .  
 \end{align*}
Equipped with upper bounds above, we arrive at:
\begin{align*}
     & \quad \quad \mathbb{E}_{i \sim p }  [ J_i (K_{n+1}) - J_i (K_n) ] 
   \\&  \leq \alpha ( \phi_1 + \phi_2 \alpha \bar{h}_{grad}- \frac{1}{2}) \|\mathbb{E}_{i \sim p} \nabla J_i (K_n) \|_F^2   +  ( \alpha + \alpha^2 \bar{h}_{grad}) \|\tilde{\nabla}  \mathcal{L}(K_n)  - \nabla \mathcal{L}(K_n)\|^2_F \\
     & \stackrel{(ii)}{\leq} -\frac{\alpha}{2}(\frac{1}{2} -\phi_1) \| \mathbb{E}_{i \sim p} \nabla J_i (K_n) \|^2_F + \frac{\alpha (1 + 4\phi_2 - 2\phi_1)}{4\phi_2}\|\tilde{\nabla} \mathcal{L}(K_n) - \nabla \mathcal{L}(K_n)\|^2_F ,
\end{align*}
 where we select $\eta \leq \sqrt{\frac{1}{4(\bar{h}_{grad}^2 k^2 + \bar{h}_{grad}^2 \bar{h}_{H}^2 + \bar{h}_H^2 )}} $ to ensure $\phi_1 \leq \frac{1}{2}$, and $\alpha \leq \frac{\frac{1}{2} - \phi_1}{2 \phi_2 \bar{h}_{grad}}$ to arrive at inequality $(ii)$. By gradient domination property, 
 \begin{align*}
    & \quad \quad \mathbb{E}_{i \sim p }  [ J_i (K_1) - J_i (K^*_i) ] \\ & 
    \leq (1 - \frac{\bar{\lambda}_i (\alpha - 2 \alpha \phi_1)}{4}) \mathbb{E}_{i \sim p } [ J_i (K_{0}) - J_i (K^*_i)] +  \frac{\alpha (1 + 4\phi_2 - 2\phi_1)}{4\phi_2}\|\tilde{\nabla} \mathcal{L}(K_n) - \nabla \mathcal{L}(K_n)\|^2_F  \\
&  \leq (1 - \frac{\bar{\lambda}_i \alpha ( 1 - 2  \phi_1)}{4}) \bar{\Delta}^i_0  +  \frac{\alpha (1 + 4\phi_2 - 2\phi_1)}{4\phi_2}\|\tilde{\nabla} \mathcal{L}(K_n) - \nabla \mathcal{L}(K_n)\|^2_F 
 \end{align*}

 Now, we proceed to control the meta-gradient estimation error, according to \Cref{lem:metagradestimate}, let $\varepsilon :=  \frac{\bar{\lambda}_i  \bar{\Delta}^i_0 (1 -2\phi_1)\phi_2 }{2 (1 + 4\phi_2 - 2\phi_1)}$ when \begin{align*}
      | \mathcal{T}_n |& \geq  h_{sample, task} (\frac{2}{\varepsilon}, \frac{\delta}{2}) , \\
        \ell & \geq  \max\{h^1_{\ell} (\frac{1}{\varepsilon^{\prime}}, \delta^{\prime}) ,  h^2_{\ell, grad} (\frac{12}{\varepsilon}), h^2_{\ell, var}(\frac{12}{\varepsilon}, \delta^{\prime})\} , \\
        r & \leq \min \{ h^2_r(\frac{6}{\varepsilon}),  h^1_r (\frac{1}{\varepsilon}) \}  , \\
        M & \geq \max\{ h^2_M (\frac{1}{\varepsilon}, \delta), h^1_M (\frac{1}{\varepsilon^{''}}, \frac{\delta}{4})   \} , 
    \end{align*}
where $ h^2_M (\frac{1}{\varepsilon}, \delta):= h_{sample}(\frac{1}{\varepsilon^{''}}, \frac{\delta^{\prime}}{4})$, $\delta^{\prime} = \delta / h_{sample, task} (\frac{2}{\varepsilon}, \frac{\delta}{2} ) $, $\varepsilon^{\prime} = \frac{\varepsilon}{6 \frac{dk}{r} h_{cost} \bar{J}_{max}}$, $\varepsilon^{''} = \frac{\varepsilon}{6}$.
 Then, for each iteration the meta-gradient estimation is $\epsilon$-accurate, i.e., 
 \begin{align*}
     \|\tilde{\nabla} \mathcal{L} (K) - \nabla \mathcal{L} (K) \|_F \leq \varepsilon, 
 \end{align*} 
which leads to that
 \begin{align*}
      \mathbb{E}_{i \sim p }  [ J_i (K_1) - J_i (K^*_i) ]  
       \leq (1 - \frac{\bar{\lambda}_i \alpha ( 1 - 2  \phi_1)}{8}) \bar{\Delta}^i_0,
 \end{align*}
 with probability at least $1 - \delta$. This implies that $K_1 \in \mathcal{S}.$ 
 
 The stability is completed by applying induction steps for all iterations $n \in \{0,1,\ldots,N\}$, with the same analysis applies to every iteration.

\qedhere 
}

\begin{corollary}\label{appcor:convergence} (Convergence) 
Given an initial stabilizing controller $K_0 \in \mathcal{S}$ and scalar $\delta \in (0,1)$, let the parameters for Algorithm \ref{metalqr} satisfy the conditions in \Cref{thm:stabilityattain}. If, in addition, \begin{align*}
      | \mathcal{T}_n |& \geq  h_{sample, task} (\frac{2}{\bar{\varepsilon}}, \frac{\delta}{2}) , \\
        \ell & \geq  \max\{h^1_{\ell} (\frac{1}{\bar{\varepsilon}^{\prime}}, \delta^{\prime}) ,  h^2_{\ell, grad} (\frac{12}{\bar{\varepsilon}}), h^2_{\ell, var}(\frac{12}{\bar{\varepsilon}}, \delta^{\prime})\} , \\
        r & \leq \min \{ h^2_r(\frac{6}{\bar{\varepsilon}}),  h^1_r (\frac{1}{\bar{\varepsilon}}) \}  , \\
        M & \geq \max\{ h^2_M (\frac{1}{\bar{\varepsilon}}, \delta), h^1_M (\frac{1}{\bar{\varepsilon}^{''}}, \frac{\delta}{4})   \} , 
    \end{align*}
 where $\bar{\varepsilon} := \frac{\bar{\lambda}_i (1 - \eta^2 \bar{h}_H^2)\psi_0}{6}$, $\psi_0 :=  \mathcal{L} (K_0 ) - \mathcal{L} (K^\star)$, $ h^2_M (\frac{1}{\bar{\varepsilon}}, \delta):= h_{sample}(\frac{1}{\bar{\varepsilon}^{''}}, \frac{\delta^{\prime}}{4})$, $\delta^{\prime} = \delta / h_{sample, task} (\frac{2}{\bar{\varepsilon}}, \frac{\delta}{2} ) $, $\bar{\varepsilon}^{\prime} = \frac{\varepsilon}{6 \frac{dk}{r} h_{cost} \bar{J}_{max}}$, $\bar{\varepsilon}^{''} = \frac{\bar{\varepsilon}}{6}$,
Then, when $N \geq \frac{8}{\alpha \bar{\lambda}_i (1 - \eta^2 \bar{h}_H^2)}\log( \frac{2\psi_0}{\epsilon_0}) $, with probability $1- \bar{\delta}$, it holds that, 
\begin{align*}\label{eq:convergence_model_free}
    &\mathcal{L} (K_{N}) - \mathcal{L}(K^\star)  \leq \epsilon_0.
\end{align*}
\end{corollary}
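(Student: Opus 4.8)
The plan is to convert the one-step descent estimate used in the proof of \Cref{thm:stabilityattain} into a linear contraction for the meta-cost gap $\psi_n := \mathcal{L}(K_n) - \mathcal{L}(K^\star)$ and then unroll it. Three ingredients are needed: (i) the stability guarantee of \Cref{thm:stabilityattain}, which confines every iterate $K_n$ and every adapted policy $K_n^i$ to the sublevel set $\mathcal{S}$, where \Cref{lem:uniformboundsl} (boundedness) and \Cref{lem:perturbationl} (local smoothness of $\mathcal{L}$) are in force; (ii) a Polyak--{\L}ojasiewicz (gradient-domination) inequality for $\mathcal{L}$ on $\mathcal{S}$; and (iii) the per-iteration meta-gradient accuracy $\|\tilde{\nabla}\mathcal{L}(K_n)-\nabla\mathcal{L}(K_n)\|_F\le\bar{\varepsilon}$ supplied by \Cref{lem:metagradestimate} under the displayed conditions on $|\mathcal{T}_n|,\ell,r,M$.

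First I would repeat the smoothness-descent computation from \Cref{thm:stabilityattain}, but tracking $\mathcal{L}$ itself. On the stability event $K_n,K_{n+1}\in\mathcal{S}$, so \Cref{lem:perturbationl} yields $\bar{h}_{\mathcal{L},grad}$-smoothness and
\[
\mathcal{L}(K_{n+1})-\mathcal{L}(K_n)\le\langle\nabla\mathcal{L}(K_n),-\alpha\tilde{\nabla}\mathcal{L}(K_n)\rangle+\tfrac{\bar{h}_{\mathcal{L},grad}\alpha^2}{2}\|\tilde{\nabla}\mathcal{L}(K_n)\|_F^2 .
\]
Writing $\tilde{\nabla}\mathcal{L}(K_n)=\nabla\mathcal{L}(K_n)+e_n$, splitting the inner product via \eqref{eq:youngs_inner_product} and bounding $\|\tilde{\nabla}\mathcal{L}(K_n)\|_F^2\le 2\|\nabla\mathcal{L}(K_n)\|_F^2+2\|e_n\|_F^2$ via \eqref{eq:youngs}, and taking $\alpha$ small enough (the bound on $\alpha$ already imposed in \Cref{thm:stabilityattain}, up to an absolute constant), I obtain
\[
\mathcal{L}(K_{n+1})-\mathcal{L}(K_n)\le-\tfrac{\alpha}{4}\|\nabla\mathcal{L}(K_n)\|_F^2+c\,\alpha\,\|e_n\|_F^2
\]
for an absolute constant $c>0$.

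The crucial step is the PL inequality $\|\nabla\mathcal{L}(K)\|_F^2\ge\bar{\lambda}_i(1-\eta^2\bar{h}_H^2)\bigl(\mathcal{L}(K)-\mathcal{L}(K^\star)\bigr)$ for $K\in\mathcal{S}$. I would derive it by combining: (a) $\mathcal{L}(K)-\mathcal{L}(K^\star)\le\E_{i\sim p}[J_i(K_i')-J_i(K^\star_i)]$ with $K_i':=K-\eta\nabla J_i(K)$, which holds since a single adaptation step cannot beat the task optimum, whence $\mathcal{L}(K^\star)\ge\E_{i\sim p}J_i(K^\star_i)$; (b) the single-task gradient domination of \Cref{lem:graddom} applied at each $K_i'$; and (c) the Jacobian factor in $\nabla\mathcal{L}(K)=\E_{i\sim p}[(I-\eta\nabla^2 J_i(K))\nabla J_i(K_i')]$ controlled by the uniform Hessian bound $\|\nabla^2 J_i(K)\|_F\le\bar{h}_H$ of \Cref{lem:uniformbounds}, the choice $\eta\le(4(\bar{h}_{grad}^2k^2+\bar{h}_{grad}^2\bar{h}_H^2+\bar{h}_H^2))^{-1/2}$ guaranteeing $\eta\bar{h}_H<1$ and hence $1-\eta^2\bar{h}_H^2>0$. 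This gives the contraction
\[
\psi_{n+1}\le\bigl(1-\tfrac{\alpha\bar{\lambda}_i(1-\eta^2\bar{h}_H^2)}{4}\bigr)\psi_n+c\,\alpha\,\|e_n\|_F^2 .
\]
Then \Cref{lem:metagradestimate} with accuracy $\bar{\varepsilon}$ (and the auxiliary $\bar{\varepsilon}',\bar{\varepsilon}'',\delta'$) gives $\|e_n\|_F^2\le\bar{\varepsilon}\le 1$ at each step on an event of probability at least $1-\delta$; a union bound over the $N$ iterations and the stability event of \Cref{thm:stabilityattain} keeps the total failure probability at most $\bar{\delta}$. On the good event the choice $\bar{\varepsilon}=\tfrac{\bar{\lambda}_i(1-\eta^2\bar{h}_H^2)\psi_0}{6}$ makes the error contribution small relative to $\tfrac{\alpha\bar{\lambda}_i(1-\eta^2\bar{h}_H^2)}{8}\psi_0$, so unrolling and summing the geometric series of error terms gives $\psi_N\le e^{-\alpha\bar{\lambda}_i(1-\eta^2\bar{h}_H^2)N/8}\psi_0+\tfrac{\epsilon_0}{2}$, and $N\ge\tfrac{8}{\alpha\bar{\lambda}_i(1-\eta^2\bar{h}_H^2)}\log\tfrac{2\psi_0}{\epsilon_0}$ forces the first term below $\epsilon_0/2$.

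I expect the PL inequality for $\mathcal{L}$ to be the main obstacle. Unlike in the stability argument, it cannot simply inherit the task-wise gradient domination: $\nabla\mathcal{L}(K)$ is a $p$-average of the adapted gradients $\nabla J_i(K_i')$ passed through the \emph{heterogeneous} operators $I-\eta\nabla^2 J_i(K)$, and relating $\|\E_{i\sim p}[\,\cdot\,]\|_F^2$ to the gradient-dominated quantities $\operatorname{Tr}(E_{K_i'}^{i,\top}E^i_{K_i'})$ requires carefully exploiting that all iterates stay in $\mathcal{S}$ (so Hessians and adapted gradients are uniformly bounded) and that $\eta$ is small enough for the Jacobian factors to be uniformly non-degenerate. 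The remaining care is bookkeeping: tracking the auxiliary accuracies $\bar{\varepsilon}',\bar{\varepsilon}''$ through \Cref{lem:metagradestimate} and assembling the union bound over iterations to yield the overall probability $1-\bar{\delta}$.
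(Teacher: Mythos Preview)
Your proposal is correct and mirrors the paper's own proof: the same smoothness-descent inequality for $\mathcal{L}$, the same PL-type lower bound $\|\nabla\mathcal{L}(K)\|_F^2\ge\bar{\lambda}_i(1-\eta^2\bar{h}_H^2)(\mathcal{L}(K)-\mathcal{L}(K^\star))$ obtained by combining task-wise gradient domination at the adapted points with the Jacobian factor and the inequality $J_i(K^\star-\eta\nabla J_i(K^\star))\ge J_i(K_i^*)$, and the same unrolling with a union bound over iterations. You also correctly single out the PL step as the delicate one---passing from $\|\mathbb{E}_{i\sim p}[(I-\eta\nabla^2 J_i(K))\nabla J_i(K_i')]\|_F^2$ to a lower bound in terms of the per-task dominated quantities---which is precisely where the paper's argument is most compressed.
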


\begin{proof}
    By smoothness property, we have, the meta-gradient update yields,
    \begin{align*}
         & \quad \quad   \mathcal{L} (K_{1}) - \mathcal{L} (K_0)  \leq 
 \langle  \nabla \mathcal{L} (K_0), K_{1} - K_{0} \rangle + \frac{\bar{h}_{{grad}}}{2}\|K_{1} - K_0\|^2_F \\
    &=\langle  \mathcal{L} (K_0), -\alpha \tilde{\nabla}  \mathcal{L}(K_0) \rangle + \frac{\bar{h}_{\mathcal{L},{grad}}\alpha^2}{2}\| \tilde{\nabla}  \mathcal{L}(K_0)\|^2_F\\
    & \leq -\frac{\alpha  }{2}\| \nabla \mathcal{L}(K_0)\|^2_F + \frac{\alpha}{2}\|\ \nabla \mathcal{L} (K_0) - \tilde{\nabla} \mathcal{L}(K_0)\|^2_F + \frac{\bar{h}_{\mathcal{L},{grad}}\alpha^2}{2}\|\tilde{\nabla}  \mathcal{L}(K_0)\|^2_F \\
    & \leq \left( \bar{h}_{\mathcal{L},{grad}}\alpha^2 -\frac{\alpha}{2} \right)\|\nabla \mathcal{L}(K_0)\|^2_F + \left( \bar{h}_{\mathcal{L},{grad}}\alpha^2 + \frac{\alpha}{2} \right)\|\tilde{\nabla} \mathcal{L}(K_0) - \nabla \mathcal{L}(K_0)\|^2_F \\.
    & \leq  - \frac{\alpha}{4}\|\nabla \mathcal{L}(K_0)\|^2_F + \frac{3\alpha}{4} \|\tilde{\nabla} \mathcal{L}(K_0) - \nabla \mathcal{L}(K_0)\|^2_F
    \end{align*}
    The meta-gradient estimation error has been established, it suffices to lower bound the norm $\|\nabla \mathcal{L}(K_0)\|^2_F$ in terms of the initial condition, let $\eta \leq \frac{1}{\bar{h}_H}$,
    \begin{align*}
        \|\nabla \mathcal{L}(K_0)\|^2_F &  = \| \mathbb{E}_{i \sim p }(I - \eta \nabla^2 J^2 (K_0)) \nabla J_i (K_0 - \eta \nabla J_i (K_0 )) \|^2_F , \\
        & \geq \| \mathbb{E}_{i \sim p }  \nabla J_i (K_0 - \eta \nabla J_i (K_0 )) \|^2_F  -  \| \mathbb{E}_{i \sim p } \eta \nabla^2 J_i (K_0) \nabla J_i (K_0 - \eta \nabla J_i (K_0 )) \|^2_F \\
        & \geq   (1 - \eta^2 \bar{h}_H^2 )  \| \nabla J_i (K_0 - \eta \nabla J_i (K_0 )) \|^2_F \\
        & \geq  \mathbb{E}_{i \sim p} \left[\lambda_i(1 - \eta^2 \bar{h}_H^2 ) \left(J_i (K_0 - \eta \nabla J_i (K_0 )) - J_i (K^*_i)\right)\right] \\
        & \geq \mathbb{E}_{i \sim p} \left[\lambda_i(1 - \eta^2 \bar{h}_H^2 ) \left(J_i (K_0 - \eta \nabla J_i (K_0 )) - J_i (K^{\star} - \eta \nabla J_i (K^\star))\right)\right]  \\
        & = \bar{\lambda}_i (1 - \eta^2 \bar{h}_H^2 ) \left[\mathcal{L}(K_0) - \mathcal{L} (K^\star)\right] .
    \end{align*}
    Plugging the above into the expression, we get 
    \begin{align*}
        & \quad \quad   \mathcal{L} (K_{1}) - \mathcal{L} (K^\star) \\
        & \leq \left( 1- \frac{\alpha \bar{\lambda}_i (1 - \eta^2 \bar{h}_H^2 ) }{4} \right)  \left[ \mathcal{L} (K_0 ) - \mathcal{L} (K^\star)\right] + \frac{3\alpha}{4} \|\tilde{\nabla} \mathcal{L}(K_0) - \nabla \mathcal{L}(K_0)\|^2_F, 
    \end{align*}
    let $ \psi_0 : = \mathcal{L} (K_0 ) - \mathcal{L} (K^\star) $, and $\bar{\varepsilon} := \frac{\bar{\lambda}_i (1 - \eta^2 \bar{h}_H^2)\psi_0}{6}$, additionally,
    \begin{align*}
      | \mathcal{T}_n |& \geq  h_{sample, task} (\frac{2}{\bar{\varepsilon}}, \frac{\delta}{2}) , \\
        \ell & \geq  \max\{h^1_{\ell} (\frac{1}{\bar{\varepsilon}^{\prime}}, \delta^{\prime}) ,  h^2_{\ell, grad} (\frac{12}{\bar{\varepsilon}}), h^2_{\ell, var}(\frac{12}{\bar{\varepsilon}}, \delta^{\prime})\} , \\
        r & \leq \min \{ h^2_r(\frac{6}{\bar{\varepsilon}}),  h^1_r (\frac{1}{\bar{\varepsilon}}) \}  , \\
        M & \geq \max\{ h^2_M (\frac{1}{\bar{\varepsilon}}, \delta), h^1_M (\frac{1}{\bar{\varepsilon}^{''}}, \frac{\delta}{4})   \} , 
    \end{align*}
    where $ h^2_M (\frac{1}{\bar{\varepsilon}}, \delta):= h_{sample}(\frac{1}{\bar{\varepsilon}^{''}}, \frac{\delta^{\prime}}{4})$, $\delta^{\prime} = \delta / h_{sample, task} (\frac{2}{\bar{\varepsilon}}, \frac{\delta}{2} ) $, $\bar{\varepsilon}^{\prime} = \frac{\varepsilon}{6 \frac{dk}{r} h_{cost} \bar{J}_{max}}$, $\bar{\varepsilon}^{''} = \frac{\bar{\varepsilon}}{6}$, then, when $N \geq  \frac{8}{\alpha \bar{\lambda}_i (1 - \eta^2 \bar{h}_H^2)}\log( \frac{2\psi_0}{\epsilon_0}) $, we can apply a union bound argument to arrive at $\mathcal{L}(K_N) - \mathcal{L}(K^\star) \leq \epsilon_0 $ with probability at least $1 - N \delta$. Letting $\bar{\delta} = \frac{1}{N} \delta$ completes the proof.
\end{proof}

\end{document}